\documentclass{article}
\usepackage[left=1in, right=1in, bottom = 1in, top = 1in]{geometry}
\usepackage{graphicx} 
\usepackage{bbm}
\usepackage{amsmath}
\usepackage[english]{babel}
\usepackage[utf8]{inputenc}
\usepackage[authoryear]{natbib}
\usepackage{setspace}

\usepackage{amsmath}
\usepackage{amsfonts}
\usepackage{amsthm}
\usepackage{amssymb}
\usepackage{mathrsfs}
\usepackage{mathtools}
\usepackage{amstext}
\usepackage{comment}
\usepackage{epsfig}
\usepackage{latexsym}
\usepackage{epstopdf}
\usepackage{bm}
\usepackage{algorithm}
\usepackage{algpseudocode}

\usepackage{tikz}
\usetikzlibrary{positioning,chains,fit,shapes,calc}
\definecolor{myblue}{RGB}{80,80,160}
\definecolor{mygreen}{RGB}{80,160,80}

\usepackage{url}
\usepackage[colorlinks = true, linkcolor = blue]{hyperref}
\hypersetup{
        urlcolor = blue,
	colorlinks = true,
	linkcolor = blue,
	citecolor = blue,
}
\usepackage{paralist}
\usepackage{enumitem}
\setlist[enumerate]{topsep=0pt,noitemsep}

\usepackage{multirow}
\usepackage{multicol}
\usepackage{babel}
\usepackage{array}
\usepackage{rotating}
\usepackage{longtable}
\usepackage{float}
\usepackage{booktabs}
\usepackage{lscape}
\usepackage{caption}
\usepackage{subcaption}
\usepackage{afterpage}
\usepackage{authblk}

\usepackage{array,booktabs}
\newcolumntype{C}[1]{>{\centering\arraybackslash}p{#1}}
\newcommand{\colW}{1.15cm}
\newcommand{\colWw}{1cm}

\newtheorem*{design}{Design Mechanism}
\newtheorem{theorem}{Theorem}[section]
\newtheorem{remark}{Remark}[section]
\newtheorem{assumption}{Assumption}
\newtheorem{lemma}{Lemma}[section]

\newtheorem{corollary}{Corollary}[section]

\allowdisplaybreaks

\def\T{{\textsc{t}}}

\newcommand{\pr}{\textup{pr}}
\newcommand{\cov}{\textup{cov}}
\newcommand{\var}{\textup{var}}
\newcommand{\avar}{\textup{avar}}
\newcommand{\acov}{\textup{acov}}

\newcommand{\bOneA}{\boldsymbol{1}_{n}}

\newcommand{\bY}{\boldsymbol{Y}}

\newcommand{\ba}{\boldsymbol{a}}
\newcommand{\bA}{\boldsymbol{A}}
\newcommand{\bB}{\boldsymbol{B}}

\newcommand{\bOmega}{\boldsymbol{\Omega}}
\newcommand{\bLambda}{\boldsymbol{\Lambda}}

\newcommand{\cN}{\mathcal{N}}

\newcommand{\cB}{\mathcal{B}}

\newcommand{\bGamma}{\boldsymbol{\Gamma}}

\newcommand{\cG}{\mathcal{G}}

\newcommand{\sumi}{\sum_{i=1}^n}
\newcommand{\sumj}{\sum_{j=1}^n}

\newcommand{\sumij}{\sum_{i=1}^{n} \sum_{j=1}^{n}}
\newcommand{\sumjneqi}{\sum_{j\neq i}}

\newcommand{\sumrs}{\sum_{r=1}^{n} \sum_{s=1}^{n}}
\newcommand{\indep}{\perp\!\!\!\perp}

\newcommand{\de}{\textsc{de}}
\newcommand{\bie}{\textsc{bie}}
\newcommand{\wie}{\textsc{wie}}
\newcommand{\horthom}{\textup{ht}}
\newcommand{\haj}{\textup{haj}}
\newcommand{\diag}{\textup{diag}}

\newcommand{\indicator}{\mathbbm{1}}
\newcommand{\bbI}{\mathbbm{I}}
\newcommand{\ashprime}{a^{\prime},s^{\prime},h^{\prime}}

\newcommand{\pscore}[1]{\pi_{#1}}

\newcommand{\nonindep}{{~\not\!\perp\!\!\!\perp~}}

\makeatletter
\newcommand{\ostar}{\mathbin{\mathpalette\make@circled\star}}
\newcommand{\make@circled}[2]{%
  \ooalign{$\m@th#1\smallbigcirc{#1}$\cr\hidewidth$\m@th#1#2$\hidewidth\cr}%
}
\newcommand{\smallbigcirc}[1]{%
  \vcenter{\hbox{\scalebox{0.77778}{$\m@th#1\bigcirc$}}}%
}
\makeatother

\allowdisplaybreaks

\title{Estimating within-cluster and between-cluster spillover effects in randomized saturation designs}

\author[1]{Sizhu Lu\thanks{The first two authors contributed equally to this work. This research was partially supported by the U.S. National Science Foundation (1745640, 1945136, 2514234). We also thank the participants of Berkeley Causal Lab for their valuable suggestions. }}
\author[2]{Lei Shi$^{*}$}
\author[1]{Peng Ding}

\affil[1]{Department of Statistics, University of California, Berkeley}
\affil[2]{Division of Biostatistics, University of California, Berkeley}

\date{\vspace{-2em}}
\begin{document}

\maketitle
\begin{abstract}
    Randomized saturation designs are two-stage experiments: they first randomly assign treatment probabilities over the clusters and then randomly assign the treatment to the units within the clusters. The existing literature on randomized saturation designs focuses on estimating within-cluster spillover effects by assuming away between-cluster spillover effects. However, the units may interact across clusters in many practical randomized saturation designs. A leading example is that some units are geographically close to each other, so spillover effects arise across clusters. Based on the potential outcomes framework, we formulate the causal inference problem of estimating within-cluster and between-cluster spillover effects in randomized saturation designs. We clarify the causal estimands and establish the statistical theory for estimation and inference. We also apply our method to analyze a recent randomized saturation design of cash transfer on household expenditure in Kenya.
\end{abstract}
{\small\hspace{2em} \textbf{Key Words: } direct effect;  exposure mapping; indirect effect; interference; two-stage experiment}

\section{Introduction}
Randomized saturation designs have become increasingly common across disciplines for studying spillovers and interference, with applications spanning economics (e.g., \citealt{crepon2013labor, baird2018optimal, egger2022general}), public health (e.g., \citealt{melis2005design, benjamin2018spillover}), and political science (e.g., \citealt{sinclair2012detecting}). Most existing studies consider only within-cluster interference and assume there is no interference between clusters \citep{hudgens2008toward, basse2018analyzing, jiang2023statistical}. 

However, in some real-world settings, this assumption may not hold. A leading motivating example is the study by \citet{egger2022general}, which implemented a randomized saturation design to evaluate the economic impacts of a large-scale cash transfer program in rural Kenya between 2014 and 2017. The study area consists of 653 villages nested within 155 sublocations across two counties. Sublocations are administrative units, and villages within the same sublocation often share common markets, social ties, and economic connections. 

The randomization proceeded in two stages following a randomized saturation design: 
\begin{enumerate}
    \item in the first stage, sublocations were randomly assigned to high or low saturation levels. 
    
    \item in the second stage, within each high saturation sublocation, two-thirds of villages were randomly assigned to treatment, while in low saturation sublocations, one-third of villages were randomly assigned to treatment. All eligible households in treated villages received transfers.
\end{enumerate}

In this design, there are two possible types of interference:
\begin{itemize}
    \item within-sublocation interference, where the outcome of a village may be affected by the treatment status of other villages in the same sublocation, and
    \item between-sublocation interference, where the outcome of a village may also be affected by the treatment of villages from different sublocations that are geographically close.
\end{itemize}

To make the ideas precise, we start by introducing some notation. We focus on a finite population of $n$ units, indexed by $i$. In the cash transfer study, units are villages grouped into administrative sublocations. Let $k_i$ denote the sublocation to which village $i$ belongs. Some villages are close to one another, and we use $\cG_{i}$ to represent the set of villages that are geographically close (for example, within a certain distance) to village $i$. 
Let $A_i\in\{0,1\}$ denote the binary treatment of each village, with $A_i=1$ if the village $i$ was assigned a cash transfer, and $A_i=0$ otherwise. Let $\bA=(A_1, \ldots, A_n)$ denote the vector of treatment assignment of all villages. We will focus on the following randomized saturation design, which is adopted in \cite{egger2022general}:

\begin{design}
    The randomized saturation design we analyze includes two stages. In the first stage, the sublocations are randomly assigned to high- and low-saturation clusters with probability $1/2$ each. In the second stage, for villages in sublocations assigned to high saturation clusters, $A_i=1$ with probability $2/3$, and for villages in sublocations assigned to low saturation clusters, $A_i=1$ with probability $1/3$. 
\end{design}

When there is interference, a village's outcome, denoted by $Y_i$, may depend not only on its own treatment but also on the treatments received by other villages. Under the potential outcomes framework \citep{neyman1923application,rubin1974estimating}, the potential outcome for village $i$ is written as $Y_i(\ba)$ for $\ba=(a_1,\ldots,a_n)$, generally depending on the $n$-dimensional treatment assignment vector. To make this dependence more tractable, we use the concept of an exposure mapping \citep{aronow2017estimating}, which summarizes the aspects of the treatment assignment relevant for village $i$'s outcome.

In our setting, an exposure mapping summarizes how the treatments received by other villages, together with village $i$’s own treatment, combine to determine how much exposure village $i$ experiences that affects its outcome. Specifically, we define an exposure mapping 
\[
d_i(\bA) = (A_{i}, S_{i}, H_{i}),
\] 
where $A_i$ is the village $i$'s own treatment, $S_i$ summarizes the treatment status of other villages in the same sublocation as village $i$, capturing within-sublocation exposure, and $H_i$ summarizes the treatment status of nearby villages located in different sublocations, capturing between-sublocation exposure. $d_i(\cdot)$ is also a function of the administrative sublocation and geographical relationship, which are treated as fixed. The randomized saturation design determines $A_i$. We define $S_i$ and $H_i$ below. For each village $i$, we write $S_i$ as
\[
S_{i} = f_i\{A_{j}: k_j=k_i, j\neq i\},
\] 
which is a function of the treatment assignments of other villages within the same sublocation. The mapping $f_i$ is allowed to vary across units because different villages may differ in which within-sublocation neighboring treatments matter and how those treatments are aggregated into their potential outcomes. Similarly, we write $H_i$ as
\[
H_{i} = g_i\{A_j: k_j\neq k_i, ~ j\in\cG_{i}\},
\]
which is a function of the treatment assignments of nearby villages outside $i$'s sublocation. The mapping $g_i$ may also vary across units, allowing the relevant between-sublocation exposure structure to vary across villages. Under a correctly specified exposure mapping $d_i(\cdot)$, we assume that $Y_i(\ba)=Y_i(\ba^{\prime})$ whenever $d_i(\ba)=d_i(\ba^{\prime})$. That is, once the exposure mapping values are fixed, further details of the full assignment vector do not affect village $i$'s outcome. We may therefore write the potential outcome as $Y_i(a,s,h)$ for $(a,s,h)$ in the support of the exposure mapping.

As an example, consider the following exposure mapping:  
\begin{eqnarray}
    S_{i} &=& \indicator\left\{\frac{\sum_{j\neq i}\indicator\{k_j=k_i\} A_{j}}{\sum_{j\neq i}\indicator\{k_j=k_i\}} > \frac{1}{2}\right\}, \label{eqn::Si_example} \\
    H_{i} &=& \indicator \left\{\frac{\sumj \indicator\{k_j\neq k_i, j\in\cG_{i}\}A_{j}}{\sumj \indicator\{k_j\neq k_i, j\in\cG_{i}\}} > \frac{1}{2}\right\}. \label{eqn::Hi_example}
\end{eqnarray}
In~\eqref{eqn::Si_example} above, $S_{i}$ indicates whether more than half of the other villages in the same sublocation as village $i$ are treated, and in~\eqref{eqn::Hi_example} above, $H_{i}$ indicates whether more than half of the nearby villages in different sublocations are treated. 

The binary exposure definitions in~\eqref{eqn::Si_example}~and~\eqref{eqn::Hi_example} above are used as a running example for exposition and align with our empirical application. More generally, the framework permits multi-level categorical exposures by replacing binary indicators with richer summaries of treatment intensity. In some settings where the exposure mapping may be continuous, we need to redefine the estimand and adopt corresponding estimation strategies, for example, coarsening the exposure into bins, applying local smoothing, or specifying an exposure--response model. We do not pursue these extensions in this paper, but discuss them and their inferential implications in Section~\ref{sec::discussion}. 

Based on the definitions in~\eqref{eqn::Si_example} and~\eqref{eqn::Hi_example}, village exposures are dependent if the corresponding villages are either in the same sublocation or geographically close. To formalize this dependence structure, we define an aggregated village network, denoted by $\cB$, together with its induced graph. We say that village $j$ is connected to village $i$ in the network if either (i) the two villages belong to the same sublocation, that is, $k_j=k_i$, or (ii) village $j$ is geographically close to village $i$, that is, $j\in \cG_i$. This relation induces a graph whose nodes are villages, with an edge between two nodes if and only if the corresponding villages are connected. Let $\bB$ denote the adjacency matrix of this graph, so that $\bB_{ij}=1$ if villages $i$ and $j$ are connected, and $\bB_{ij}=0$ otherwise. We say that villages $i$ and $j$ are within graph distance $m$ if there exists a path between them of length at most $m$ in this graph.

\paragraph{Related work and our contributions.}
Most methodological work on randomized saturation designs focuses on within-cluster interference and assumes away between-cluster interference. \cite{egger2022general} is, to our knowledge, the first empirical study in this setting to allow for between-cluster interference. Using a linear model with measures of cash transfers received by nearby villages within several radii, they find evidence suggesting the existence of between-cluster spillovers. However, their analysis is model-based and therefore relies on functional-form assumptions. More recently, \cite{Leung10102025} studies cluster-randomized trials with cross-cluster interference, explicitly allowing for interference between clusters without relying on specific models. Their main focus is on improving the estimation of direct effects and indirect effects related to treatment saturation to reduce bias when interference extends beyond cluster boundaries. By contrast, our goal is to directly define and estimate the between-cluster spillover effects themselves, providing identification and estimation results that explicitly account for between-cluster dependence.
Another related line of work is spatial interference, where researchers study interference that arises through geographic proximity or distance-based exposure \citep{papadogeorgou2022causal, giffin2023generalized, wang2025design}. In addition to geographic spillowers, however, we also consider administrative-level interference: units may interfere within the same cluster even when they are not geographically close but belong to the same administrative unit. This distinction is important in settings where social or administrative structure cannot be reduced to physical distance alone.

In this work, we make the following contributions. 
\begin{enumerate}
    \item[(i)] \textit{A design-based formulation with both within-cluster and between-cluster interference.} We extend the standard randomized saturation framework to jointly accommodate within-cluster and between-cluster spillovers. Specifically, we define an exposure mapping $d_i(\bA) = (A_i, S_i, H_i)$ that captures a unit's own treatment, treatment saturation within its cluster, and the treatment intensity among geographically proximate units in other clusters. This formulation unifies administrative and geographic interference channels in a design-based framework, where potential outcomes are treated as fixed, and treatment assignment is the only source of randomness \citep{ding2024first}. The framework operates within a finite population and does not rely on any parametric models, making the causal interpretation robust to model misspecifications.

    \item[(ii)] \textit{Causal estimands that formulate multiple direct and indirect effects.} We define a set of causal estimands for direct effects, within-cluster indirect effects, and between-cluster indirect effects. These effects are aggregated at different levels, including conditional, in-policy marginal, and policy-specific effects, to capture different aspects of the treatment effect. Together, these estimands allow researchers to examine how treatment effects vary across exposure environments and to assess whether marginalizing over unmodeled interference channels masks important heterogeneity.

    \item[(iii)] \textit{Nonparametric estimation and asymptotic theory.} We construct Horvitz--Thompson and H\'{a}jek estimators based on inverse propensity score weighting with policy-specific reweighting. We derive close-form asymptotic variance and covariance expressions (Theorem~\ref{thm:asym_var}), establish consistency and joint asymptotic normality (Theorem~\ref{thm:asymptotic}), and provide conservative variance estimators with asymptotically valid confidence intervals (Corollary~\ref{cor:ci}). The theory accommodates the complex dependence structure induced by the network and the exposure mapping under regularity conditions on potential outcomes, positivity, network degree, and order of dependence. We also briefly discussed covariate-adjusted estimators when pretreatment covariates are available. 

    \item[(iv)] \textit{Empirical re-analysis revealing economically important between-cluster spillovers.} We re-analyze the large-scale cash transfer experiment of \citet{egger2022general} in rural Kenya and obtain several new findings. First, between-sublocation spillovers are present and economically meaningful: control villages in low saturation sublocations benefit from proximity to treated villages, while treated villages in high saturation sublocations experience negative geographic spillovers. Second, treatment effects are highly heterogeneous across exposure environments, with conditional direct effects varying substantially across $(S_i,H_i)$. Third, analyses that ignore between-sublocation interference fail to detect the geographic spillover channel and can also change conclusions about some within-cluster signals, illustrating the value of accounting for the richer interference structure.
\end{enumerate}

\paragraph{Organization of the paper.}
The rest of the paper proceeds as follows. Section~\ref{sec::parameters} introduces a set of causal estimands that capture both the direct effects of the treatment and the indirect effects arising from administrative and geographic interference. Section~\ref{sec::estimation} presents point estimators for these causal estimands. Section~\ref{sec::theoretical} provides their theoretical properties, including consistency, asymptotic normality, and expressions for the asymptotic variance. Section~\ref{sec::var-est} provides variance estimators based on these theoretical results and shows that they lead to asymptotically valid confidence intervals. Section~\ref{sec::real} implements the proposed estimators and inference methods in the cash transfer study of \cite{egger2022general}. Section~\ref{sec::discussion} concludes with a discussion of future research directions. The Supplementary Material contains additional technical results.

\section{Causal estimands of interest}
\label{sec::parameters}

In this section, we formally define the causal estimands under the potential outcome framework. Following our comment earlier, we will focus on the binary choice of $(S_i, H_i)$ that appears in~\eqref{eqn::Si_example}~and~\eqref{eqn::Hi_example}, and thus $(A_i, S_i, H_i) \in \{0,1\}^3$.

\subsection{Conditional causal effects}
We first define conditional causal effects, including conditional direct effects, within-cluster indirect effects, and between-cluster indirect effects.

\paragraph{Conditional direct effects.}
We consider the direct effect of the treatment, holding $(S_{i},H_{i})$ at a fixed level $(s,h)$. Define
\begin{eqnarray*}
    \de(s,h) &=& n^{-1}\sumi Y_{i}(1,s,h) - n^{-1}\sumi Y_{i}(0,s,h).
\end{eqnarray*}
The quantity $\de(s,h)$ represents the \textit{conditional direct effect} of treatment $A_{i}$ on the outcome, while holding the exposure variables $(S_{i},H_{i})$ fixed at values $(s,h)$.

\paragraph{Conditional indirect effects.}
We consider two types of indirect effects corresponding to the two sources of interference: within-cluster and between-cluster spillover effects.

For within-cluster indirect effects, we define the \textit{within-cluster conditional indirect effect} of $S_{i}$ on the outcome as
\begin{eqnarray*}
    \wie(a,h) &=& n^{-1}\sumi Y_{i}(a,1,h) - n^{-1}\sumi Y_{i}(a,0,h).
\end{eqnarray*}
The quantity $\wie(a,h)$ captures the effect of changing the proportion of treated villages in the same sublocation (from $s = 0$ to $s = 1$) on the outcome of village $i$, while holding its own treatment fixed at $a$ and the between-cluster exposure $H_{i}$ at level $h$. 

For between-cluster indirect effects, similarly, we define the \textit{between-cluster conditional indirect effect} of $H_{i}$ on the outcome as
\begin{eqnarray*}
    \bie(a,s) &=& n^{-1}\sumi Y_{i}(a,s,1) - n^{-1}\sumi Y_{i}(a,s,0).
\end{eqnarray*}
The quantity $\bie(a,s)$ captures the effect of changing the level of between-cluster exposure from $h = 0$ to $h = 1$, with the own treatment fixed at $a$ and the within-cluster exposure $S_{i}$ fixed at level $s$. 

Under the binary exposure mapping, we can view this setup as a $2^3$ factorial experiment defined by the three binary factors $(a,s,h) \in \{0,1\}^3$. Although this perspective allows for many other causal contrasts in principle, we focus on the three conditional effects defined above because they are most relevant to our empirical motivation. See \cite{zhao2022regression} for a more general discussion on estimands in factorial experiments. 

\subsection{In-policy causal effects}
\label{subsec::in_policy_estimand}
In practice, we can report the aggregated version of the causal estimands by marginalizing over the exposure distributions induced by the implemented policy distributions. These estimands are ``in-policy'' in the sense that the
averaging distributions are generated by the same randomization design under
which the experiment is conducted.

\paragraph{Marginal direct effect.}
We consider the direct effect of the treatment, averaging over the conditional distribution of $(S_i, H_i)$. Define
\begin{eqnarray}
    \de &=& n^{-1}\sumi E\left\{Y_{i}(1,S_{i},H_{i})\mid A_i=1\right\} - n^{-1}\sumi E\left\{Y_{i}(0,S_{i},H_{i})\mid A_{i}=0\right\} \label{eq::DE} \\
    &=& n^{-1}\sumi \sum_{s=0,1} \sum_{h=0,1} \pr(S_i=s,H_i=h\mid A_i=1)Y_i(1,s,h) \notag\\
    &&- n^{-1}\sumi \sum_{s=0,1} \sum_{h=0,1} \pr(S_i=s,H_i=h\mid A_i=0)Y_i(0,s,h), \notag 
\end{eqnarray}
where the expectations in~\eqref{eq::DE} are taken with respect to the conditional distribution of $(S_i,H_i)$ given $A_i=a$, for $a=0,1$. $\de$ represents the \textit{marginal direct effect} of treatment $A_{i}$, marginalizing over the distribution of treatments for all other villages. Therefore, by definition, $\de$ depends on the treatment assignment mechanism through the conditional distributions $\pr(S_i, H_i\mid A_i)$.

\paragraph{Marginal indirect effects.}
We next define two in-policy indirect effects corresponding to the two sources of interference: within-cluster and between-cluster spillovers. For a fixed treatment status $a=0,1$, we define the \textit{marginal within-cluster indirect effect} of $S_i$ on the outcome as
\begin{eqnarray*}
    \wie(a) &=& n^{-1}\sumi E\left\{Y_{i}(a,1,H_{i})\mid (A_{i},S_{i}) = (a,1)\right\} - n^{-1}\sumi E\left\{Y_{i}(a,0,H_{i})\mid (A_{i},S_{i}) = (a,0)\right\} \\
    &=& n^{-1}\sumi \sum_{h=0,1}\pr(H_i=h\mid A_i=a,S_i=1)Y_i(a,1,h) \\
    &&- n^{-1}\sumi \sum_{h=0,1}\pr(H_i=h\mid A_i=a,S_i=0)Y_i(a,0,h).
\end{eqnarray*}
This estimand contrasts high versus low within-sublocation exposure, holding
own treatment fixed at $a$, while averaging over the between-sublocation
exposure distributions that arise under the implemented policy. As with the marginal direct effect, this definition depends on the treatment assignment mechanism.

For between-cluster indirect effects, similarly, for a fixed $a=0,1$, we define the \textit{marginal between-cluster indirect effect} of $H_{i}$ on the outcome as
\begin{eqnarray*}
    \bie(a) &=& n^{-1}\sumi E\left\{Y_{i}(a,S_{i},1)\mid (A_{i},H_{i}) = (a,1)\right\} - n^{-1}\sumi E\left\{Y_{i}(a,S_{i},0)\mid (A_{i},H_{i}) = (a,0)\right\} \\
    &=& n^{-1}\sumi \sum_{s=0,1} \pr(S_i=s\mid A_i=a,H_i=1)Y_i(a,s,1) \\
    &&- n^{-1}\sumi \sum_{s=0,1}\pr(S_i=s\mid A_i=a,H_i=0)Y_i(a,s,0).
\end{eqnarray*}
This estimand contrasts high versus low between-sublocation exposure, holding
own treatment fixed at $a$, while averaging over the within-sublocation
exposure distributions that arise under the implemented policy. As above,
$\bie(a)$ also depends on the treatment assignment mechanism.

These in-policy estimands are closely related to the marginal effects in
\citet{hudgens2008toward}. Their main advantage is policy relevance. For example, the marginal direct effect $\de$ compares the expected outcome for treated villages under the spillover environments they tend to face with the expected outcome for control villages under their corresponding spillover environments. At the same time, these estimands are not simply weighted averages of controlled effects. They compare outcomes after averaging over the spillover exposure distributions that arise under the policy, rather than holding those exposure components fixed. For example, the in-policy marginal direct effect $\de$ compares $Y_i(1,s,h)$ averaged over $\pr(S_i,H_i\mid A_i=1)$ with $Y_i(0,s,h)$ averaged over $\pr(S_i,H_i\mid A_i=0)$. If these two conditional exposure distributions differ, then $\de$ reflects not only the change in own treatment status but also the difference in spillover environments associated with treated and untreated villages under the policy. In particular, even if there is no direct effect at any fixed exposure level, so that $Y_i(1,s,h)=Y_i(0,s,h)$ for all $(s,h)$, the in-policy direct effect need not be zero when spillover effects are present and $\pr(S_i,H_i\mid A_i=1) \neq \pr(S_i,H_i\mid A_i=0)$. By contrast, under the stronger sharp null that $Y_i(a,s,h)$ is the same for all $(a,s,h)$ for each village $i$, all the in-policy effects defined above are zero.

An alternative is to define different estimands by averaging over a common reference distribution. For example, we consider the following standardized direct effect 
\[
\de_q
=
n^{-1}\sumi
\sum_{s=0,1} \sum_{h=0,1}
q_i(s,h)\{Y_i(1,s,h)-Y_i(0,s,h)\}
\]
where $q_i(s,h)$ is a prespecified reference distribution, such as the marginal distribution $\pr(S_i=s,H_i=h)$ induced by the implemented policy. Such a standardized estimand isolates the direct effect of own treatment by using the same distribution of $(S_i,H_i)$ for both treatment levels. Thus, the in-policy estimands are more directly tied to the implemented assignment policy, whereas the alternative standardized estimands are more directly interpretable as isolated direct or indirect effects. We focus on the marginal effects $\de$, $\wie(a)$, and $\bie(a)$ in this paper, while analysis of alternative estimands follows similar arguments.


\subsection{Policy-specific causal estimands}
\label{subsec::policy_specific_estimand}
In this section, we further define estimands that are specific to a hypothetical treatment assignment mechanism that could differ from the observed one, similar to \cite{hudgens2008toward}. For a specific treatment assignment policy $\psi$, define the policy-specific direct effect $\de_{\psi}$, within-cluster indirect effect $\wie_{\psi}$, and between-cluster indirect effect $\bie_{\psi}$ as follows
\begin{eqnarray*}
    \de_{\psi} &=& n^{-1}\sumi E_{\psi}\{Y_i(1,S_i,H_i)\mid A_i=1\} - n^{-1}\sumi E_{\psi}\{Y_i(0,S_i,H_i)\mid A_i=0\}, \\
    \wie_{\psi}(a) &=& n^{-1}\sumi E_{\psi}\left\{Y_{i}(a,1,H_{i})\mid (A_{i},S_{i}) = (a,1)\right\} - n^{-1}\sumi E_{\psi}\left\{Y_{i}(a,0,H_{i})\mid (A_{i},S_{i}) = (a,0)\right\}, \\
    \bie_{\psi}(a) &=& n^{-1}\sumi E_{\psi}\left\{Y_{i}(a,S_{i},1)\mid (A_{i},H_{i}) = (a,1)\right\} - n^{-1}\sumi E_{\psi}\left\{Y_{i}(a,S_{i},0)\mid (A_{i},H_{i}) = (a,0)\right\},
\end{eqnarray*}
where the expecatation $E_{\psi}(\cdot \mid \cdot)$ is taken under the distribution induced by policy $\psi$. Thus, in the definition of $\de_{\psi}$, the expectations average over the conditional distribution of $(S_i,H_i)$ given $A_i=a$ for $a=0,1$. In the definition of $\wie_{\psi}$, they average over the conditional distribution of $H_i$ given $(A_{i},S_{i}) = (0,s)$ for $s=0,1$, and similarly for $\bie_{\psi}$. In particular, if we take $\psi$ as the treatment policy implemented in the study, the above effects recover the in-policy causal estimands we defined in the previous section. 

For two policies $\psi_1$, $\psi_2$, we can therefore compare their direct effect and indirect effects by the contrasts $\de_{\psi_1} - \de_{\psi_2}$, $\wie_{\psi_1} - \wie_{\psi_2}$, and $\bie_{\psi_1} - \bie_{\psi_2}$, respectively.

\section{Estimation by inverse propensity score weighting}
\label{sec::estimation}
In this section, we provide inverse propensity score weighting estimators for the causal estimands defined in Section~\ref{sec::parameters}. We construct estimators for average potential outcomes and thus for the conditional effects, and then introduce more policy-specific causal effect estimators.

We start by introducing the propensity score. For a given exposure mapping level $(a,s,h)$, define the propensity score as $\pscore{i}(a,s,h)=\pr(A_{i}=a, S_{i}=s, H_{i}=h)$. These propensity scores are determined by both the experimental design and the village network structure. In principle, they depend deterministically on the treatment assignment mechanism and definition of the exposure mapping, but analytically calculating them can be infeasible as the treatment space grows. A practical alternative is to approximate these probabilities using Monte Carlo simulation: we draw $A$ from the treatment assignment mechanism, and calculate $A, S, H$ according to \eqref{eqn::Si_example} and \eqref{eqn::Hi_example}. Then we can estimate $\pi_i$ by the empirical frequency of simulated exposures. We next introduce the inverse propensity score weighted estimators in the following subsections.

\subsection{Averages of the potential outcomes and conditional effects}
As a building block, we construct an estimator for the average potential outcome,
\begin{eqnarray*}
    \bar{Y}(a,s,h) &=& n^{-1}\sumi Y_{i}(a,s,h),
\end{eqnarray*}
for given $(a,s,h)$. Consider the Horvitz--Thompson estimator
\begin{eqnarray*}
    \hat{Y}^{\horthom}(a,s,h) &=& n^{-1}\sumi\frac{ \bbI_{i}(a,s,h)}{ \pscore{i}(a,s,h)}Y_{i},
\end{eqnarray*}
and the H\'{a}jek estimator
\begin{eqnarray*}
    \hat{Y}^{\haj}(a,s,h) &=& n^{-1}\sumi\frac{ \bbI_{i}(a,s,h) }{ \pscore{i}(a,s,h)}Y_{i} \Big / n^{-1}\sumi\frac{ \bbI_{i}(a,s,h)}{ \pscore{i}(a,s,h)},
\end{eqnarray*}
where $\bbI_{i}(a,s,h)=\indicator\{A_{i}=a,S_{i}=s,H_{i}=h\}$ is the indicator for the exposure mapping level $(a,s,h)$. Under randomization, the estimator $\hat{Y}^{\horthom}(a,s,h)$ is unbiased to $\bar{Y}(a,s,h)$. The H\'{a}jek estimator $\hat{Y}^{\haj}(a,s,h)$ is not unbiased in finite samples but is consistent to $\hat{Y}^{\haj}(a,s,h)$ and generally has more stable finite sample performance. Therefore, the corresponding Horvitz--Thompson estimators are unbiased, and the H\'{a}jek estimators are consistent.

We then propose the following estimators for the conditional direct and indirect effects defined in Section~\ref{sec::parameters}. For $*\in\{\horthom,\haj\}$, construct
\begin{eqnarray*}
    \hat{\de}^{*}(s,h) &=& \hat{Y}^{*}(1,s,h) - \hat{Y}^{*}(0,s,h), \\
    \hat{\wie}^{*}(a,h) &=& \hat{Y}^{*}(a,1,h) - \hat{Y}^{*}(a,0,h), \\
    \hat{\bie}^{*}(a,s) &=& \hat{Y}^{*}(a,s,1) - \hat{Y}^{*}(a,s,0).
\end{eqnarray*}

\subsection{In-policy and policy-specific causal effects}

We next propose to use a policy-specific re-weighting to estimate the policy-specific effects. Under different policies, the joint distribution of $(A_i,S_i,H_i)$ will, in general, differ. For a given policy of interest, the policy-specific causal effects are defined as marginal expectations of the potential outcomes, where the marginalization is taken with respect to the policy-induced distribution of $(A_i,S_i,H_i)$, as defined in Sections~\ref{subsec::in_policy_estimand}~and~\ref{subsec::policy_specific_estimand}. Each estimand involves a reweighted average of potential outcomes $\bar{Y}(a,s,h;\Gamma) = n^{-1} \sumi \gamma_i(a,s,h)Y_i(a,s,h)$  indexed by $\Gamma = \{\gamma_i(a,s,h): i = 1,\dots,n; a,s,h = 0,1\}$, where the weights depend on the distribution of $(A_i,S_i,H_i)$ under a specific policy. Accordingly, to estimate these effects, we construct a class of reweighting estimators that account for the policy-induced exposure distribution.
More concretely, for a given set of weights $\gamma_i(a,s,h)$, define the following class of Horvitz--Thompson estimators:
\begin{eqnarray*}
    \hat{Y}^{\horthom}(a,s,h;\Gamma) &=& n^{-1}\sumi\frac{ \bbI_{i}(a,s,h)\gamma_i(a,s,h)}{ \pscore{i}(a,s,h)}Y_{i},
\end{eqnarray*}
and the corresponding class of H\'{a}jek estimators:
\begin{eqnarray*}
    \hat{Y}^{\haj}(a,s,h;\Gamma) &=& n^{-1}\sumi \gamma_i(a,s,h)\cdot n^{-1}\sumi\frac{ \bbI_{i}(a,s,h) \gamma_i(a,s,h)}{ \pscore{i}(a,s,h)}Y_{i} \Big / n^{-1}\sumi\frac{ \bbI_{i}(a,s,h) \gamma_i(a,s,h)}{ \pscore{i}(a,s,h)}. 
\end{eqnarray*}
These estimators form a general framework for constructing estimators of the policy-specific direct and indirect effects, as introduced in Section~\ref{sec::parameters}. 

To operationalize the reweighting idea, we specify weight functions for each type of causal effect. Let $\Gamma_{\psi}^{*} = \{\gamma_{i,\psi}^{*}(a,s,h): i = 1,\dots,n; a,s,h = 0,1\}$ with $* = \{\text{\de, \wie, \bie}\}$ denote the corresponding classes of weighting functions for the direct, within-sublocation indirect, and between-sublocation indirect effects, respectively. The elements of these classes are defined as
\begin{eqnarray*}
    \gamma_{i,\psi}^{\de}(a,s,h) &=& \pr_{\psi}(S_i=s,H_i=h\mid A_i=a), \\
    \gamma_{i,\psi}^{\wie}(a,s,h) &=& \pr_{\psi}(H_i=h\mid A_i=a, S_i=s), \\
    \gamma_{i,\psi}^{\bie}(a,s,h) &=& \pr_{\psi}(S_i=s\mid A_i=a, H_i=h).
\end{eqnarray*}
These probabilities describe the distribution of $(S_i,H_i)$ under a specific policy $\psi$, which serve as reweighting terms in our estimators. For either the Horvitz--Thompson or H\'{a}jek estimator ($*\in\{\horthom,\haj\}$), we then define the corresponding estimators for the policy-specific causal effects as:
\begin{eqnarray*}
    \hat{\de}_{\psi}^{*} &=& \sum_{s,h=0,1} \{\hat{Y}^{*}(1,s,h;\Gamma_{\psi}^{\de})  - \hat{Y}^{*}(0,s,h;\Gamma_{\psi}^{\de})\}, \\
    \hat{\wie}_{\psi}^{*}(a) &=& \sum_{h=0,1} \{\hat{Y}^{*}(a,1,h;\Gamma_{\psi}^{\wie}) - \hat{Y}^{*}(a,0,h;\Gamma_{\psi}^{\wie})\}, \\
    \hat{\bie}_{\psi}^{*}(a) &=& \sum_{s=0,1} \{\hat{Y}^{*}(a,s,1;\Gamma_{\psi}^{\bie}) - \hat{Y}^{*}(a,s,0;\Gamma_{\psi}^{\bie})\}.
\end{eqnarray*}
In the definition of $\hat{\de}_{\psi}^{*}$, the first term in the summation uses the subset of weights in $\Gamma^{\de}_\psi$ given $A=1$, and the second term uses the weights given $A=0$. Similarly, each summand in $\hat{\wie}_{\psi}^{*}$ and $\hat{\bie}_{\psi}^{*}$ use only a subset of the weights. 

As a special case, if we take $\psi$ as the treatment policy actually implemented in the real study, these expressions yield the in-policy estimators for the marginal causal effects. Similar to before, we omit the subscript $\psi$ for quantities corresponding to the in-policy treatment for convenience. 

\section{Theoretical properties}
\label{sec::theoretical}
In this section, we formally establish the theoretical properties of the proposed estimators. We introduce the required regularity conditions, provide the asymptotic variances and covariances of the estimators, and prove their consistency and asymptotic normality. To improve readability, readers may first consult Sections~\ref{sec:asymptotic-var} and~\ref{sec:normality} for the main theoretical results, and then return to Section~\ref{sec:condition} for the underlying conditions.

\subsection{Regularity conditions}\label{sec:condition}

In order to provide the explicit form of asymptotic covariances and to establish the asymptotic properties of the proposed estimators, we first introduce several regularity conditions in this subsection.

\begin{assumption}[Bounded potential outcomes]\label{ass:bounded}
There exists a constant $C_Y < \infty$ such that $|Y_i(a,s,h)| \leq C_Y$ for all $i = 1, \ldots, n$ and all $(a,s,h) \in \{0,1\}^3$. 
\end{assumption}
Assumption~\ref{ass:bounded} can be relaxed to some moment conditions. We keep the boundedness assumption to simplify the presentation. 

\begin{assumption}[Positivity of exposure probabilities]\label{ass:positivity}
There exist constants $0 < \underline{c}_{\pi} < \overline{c}_{\pi} < 1$ such that $\underline{c}_{\pi} \leq \pscore{i}(a,s,h) \leq \overline{c}_{\pi}$ for all $i = 1, \ldots, n$ and all $(a,s,h) \in \{0,1\}^3$. 
\end{assumption}

Assumption \ref{ass:positivity} extends the classical positivity assumption to the interference setting, and is a feature of both the experimental design and the exposure mapping. It requires that every exposure combination has a non-negligible probability of occurring. This ensures that the inverse propensity score weights remain well-behaved and do not explode, which is essential both for identification of the causal estimands and for controlling the variances of the inverse propensity score weighted estimators. In practice, positivity requires more careful checking when the exposure mapping has a large support relative to the effective sample size or when the village network degree is high or highly heterogeneous.

In applied work, we recommend overlap diagnostics before the effect estimation. These may include tabulating sample counts and distribution summaries by exposure cells, examining lower-tail summaries of the exposure probabilities $\pi_i(a,s,h)$, or assessing the extent to which inverse propensity score weights are concentrated or take extreme values.

It is helpful to address the positivity violation at both the design and analysis stages. At the design stage, we can try to avoid assignment schemes that create nearly deterministic exposure patterns. At the analysis stage, we can coarsen the exposure mapping, for example, by using low/medium/high categories, or use smaller nearby village radii. In such cases, inference should focus on contrasts with enough support in the data.

Next, we impose the following sparsity condition on the village network $\cB$.
\begin{assumption}[Bounded network degree]\label{ass:degree}
There exists a finite constant $\Delta < \infty$ such that each unit has at most $\Delta$ neighbors: 
$\max_{i=1,\dots,n}(|\{j: k_j = k_i\} \cup \cG_{i}|) \le \Delta$. 
\end{assumption}
Assumption~\ref{ass:degree} is a sparsity condition on the interference network that restricts the amount of interference in the network. Similar restrictions have also been imposed by others \citep{aronow2017estimating, li2022random}. It requires that each unit have only a limited number of neighbors. In the cash transfer study, this means that for each unit $i$, the total number of villages that either lie in the same sublocation as $i$ or are geographically nearby is uniformly bounded by $\Delta$. This condition is important for guaranteeing a stable asymptotic distribution for the estimators, as it prevents the dependence structure from becoming too dense when $n$ grows. Dense networks can violate the classical dependency-graph conditions required for central limit theorems, and in such regimes, we would need either additional restrictions on how quickly degrees are allowed to grow or an alternative asymptotic framework designed for dense networks. Another possible violation of Assumption~\ref{ass:degree} is a sparse network with finite average degree but unbounded maximum degree, such as power-law degree distribution and Erd\H{o}s--R\'{e}nyi networks. Our theoretical results, therefore, focus on settings with uniformly bounded degree. If $\Delta$ diverges slowly, our theory can still hold under some technical improvement. 
When $\Delta$ diverges fast with dense networks, we need a different theory. We do not pursue these technical directions in the current paper. In our empirical application with $n = 653$ villages, Assumption \ref{ass:degree} is plausible, with $\Delta \approx 20$, corresponding to the maximum number of villages in any sublocation plus nearby villages within 4 km. 

\begin{assumption}[Bounded order of dependence]\label{ass:dependence}
The definition of exposure mappings $d_i(\ba)$ satisfies a bounded dependence condition: there exists a positive integer $m$, such that the form $d_i(\ba)$ only depends on the treatment status $a_j$ for village $j$ within distance $m$ of village $i$ on the village network graph $\cB$. 
\end{assumption}
Assumption \ref{ass:dependence} describes how far dependence can propagate in the interference structure. It implies that units sufficiently far apart in the network behave independently, which is another key requirement for applying the dependency graph central limit theorems. Intuitively, the dependence induced by randomization does not extend indefinitely: units outside each other's $2m$-step neighborhoods cannot influence each other's exposure conditions. In our empirical setting, the bounded dependence assumption holds with $m = 1$ because each village's exposure depends only on its own sublocation and immediate neighbors, and any units separated by two or more steps behave independently conditional on the randomization. 

Under these assumptions, we provide the asymptotic covariances and establish the asymptotic results for our policy-specific estimators in the following two subsections.

\subsection{Asymptotic variance}\label{sec:asymptotic-var}
For readability, we first introduce the asymptotic variances of the reweighted estimators in this section before formally proving the asymptotic distribution results. These asymptotic variances will serve as building blocks for the asymptotic distribution results in the next subsection. Let $\bGamma(a,s,h)=\diag\{\gamma_i(a,s,h)\}$ denote the $n\times n$ diagonal matrix with diagonal terms equal to the weights $\gamma_i(a,s,h)$ for $i=1,\ldots,n$. To characterize dependence induced by the randomization design, we further define two $n\times n$ matrices, $\bLambda(a,s,h)$ and $\bLambda(a,s,h;\ashprime)$, which will be crucial in defining the asymptotic variances:
\begin{itemize}
    \item For $(a,s,h)$, the $(i,j)$-th entries are: 
    \begin{eqnarray*} 
        \bLambda_{ij}(a,s,h) = 
        \indicator\{j=i\} \frac{1 - \pscore{i}(a, s, h)}{\pscore{i}(a, s, h)} + \indicator\{j\neq i\}\frac{\pscore{ij}(a, s, h; a, s, h) - \pscore{i}(a, s, h)\pscore{j}(a, s, h)}{\pscore{i}(a, s, h)\pscore{j}(a, s, h)}, 
    \end{eqnarray*}
    where $\pscore{ij}(a,s,h;a,s,h)=\pr(A_i=A_j=a,S_i=S_j=s,H_i=H_j=h)$ is the second-order inclusion probabilities for a given exposure mapping level $(a,s,h)$.

    \item For $(a,s,h) \neq (\ashprime)$, the $(i,j)$-th entries are: 
    \begin{eqnarray*} 
         \bLambda_{ij}(a,s,h;\ashprime) = 
         -\indicator\{j=i\}+ \indicator\{j\neq i\}\frac{\pscore{ij}(a, s, h; \ashprime) - \pscore{i}(a, s, h)\pscore{j}(\ashprime)}{\pscore{i}(a, s, h)\pscore{j}(\ashprime)},
    \end{eqnarray*}
    where $\pscore{ij}(a, s, h; \ashprime)=\pr(A_i=a,S_i=s,H_i=h,A_j=a^{\prime},S_j=s^{\prime},H_j=h^{\prime})$ is the second-order inclusion probabilities for a pair of different exposure mapping levels $(a,s,h) \neq (\ashprime)$.
\end{itemize}

In practice, these second-order inclusion probabilities $ \pi_{ij}(a,s,h;\ashprime)$ can be simulated given a known randomization policy and network structure. See \cite{aronow2017estimating} for further discussion. 

In addition, the asymptotic variance depends on whether the potential outcomes are centered, and if so, around what quantity. The Horvitz--Thompson estimator uses the raw potential outcomes and therefore has a variance expression directly in terms of $Y_i(a,s,h)$, so we define $Y_i^{\horthom}(a,s,h)=Y_i(a,s,h)$. In contrast, the H\'{a}jek estimator normalizes the weights by an estimated denominator, which induces additional dependence across units. It is therefore convenient to rewrite the H\'{a}jek estimator as an inverse propensity score weighting estimator applied to centered potential outcomes. Accordingly, for the H\'{a}jek form we define
\begin{eqnarray*}
    Y_i^{\haj}(a,s,h) = 
    Y_i(a,s,h) - \frac{n^{-1}\sumi \gamma_i(a,s,h)Y_i(a,s,h)}{n^{-1}\sumi\gamma_i(a,s,h)}.
\end{eqnarray*}
Let $\bY^{*}(a,s,h)=(Y_1^{*}(a,s,h),\ldots,Y_n^{*}(a,s,h))^{\T}$ denote the vector of potential outcomes for all units for $*\in\{\horthom,\haj\}$. 

\begin{theorem}\label{thm:asym_var}
For a reweighting regime $\Gamma$ and $*\in\{\horthom,\haj\}$, under Assumptions~\ref{ass:bounded}--\ref{ass:dependence}, the asymptotic variance of $\hat{Y}^{*}(a,s,h;\Gamma)$ at a fixed treatment and exposure mapping level $(a,s,h)$ is
\begin{equation*}
    \avar\{\hat{Y}^{*}(a,s,h;\Gamma)\} = n^{-2} \bY^{*}(a,s,h)^{\T} \bGamma(a,s,h) \bLambda(a,s,h) \bGamma(a,s,h) \bY^{*}(a,s,h), 
\end{equation*}
and the asymptotic covariance for a given pair $(a,s,h)\neq (a^{\prime},s^{\prime},h^{\prime})$ is
\begin{equation*}
    \acov\{\hat{Y}^{*}(a,s,h;\Gamma), \hat{Y}^{*}(a',s',h';\Gamma)\} = n^{-2} \bY^*(a,s,h)^{\T} \bGamma(a,s,h)\bLambda(a,s,h;\ashprime) \bGamma(a',s',h')\bY^*(\ashprime),
\end{equation*}
where $\avar(\cdot)$ and $\acov(\cdot, \cdot)$ denote the asymptotic variance and asymptotic covariance, respectively.
\end{theorem}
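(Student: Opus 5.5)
For the Horvitz--Thompson case the statement is an exact finite-sample computation, so I would start there. Write $\hat{Y}^{\horthom}(a,s,h;\Gamma)=n^{-1}\sumi \gamma_i(a,s,h)Y_i(a,s,h)\bbI_i(a,s,h)/\pscore{i}(a,s,h)$. This uses the fact that on the event $\bbI_i(a,s,h)=1$ the observed outcome equals $Y_i(a,s,h)$, by the exposure-mapping assumption. The only randomness is then in the indicators, and $E\{\bbI_i(a,s,h)\}=\pscore{i}(a,s,h)$.

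The covariance between two such estimators reduces to covariances of indicators.
\begin{itemize}
\item For the same level, $\cov\{\bbI_i,\bbI_j\}$ equals $\pscore{i}(1-\pscore{i})$ when $i=j$, and $\pscore{ij}-\pscore{i}\pscore{j}$ when $i\neq j$.
\item For distinct levels, the two indicators are mutually exclusive when $i=j$, so the covariance is $-\pscore{i}(a,s,h)\pscore{i}(\ashprime)$. For $i\neq j$ it is $\pscore{ij}(a,s,h;\ashprime)-\pscore{i}(a,s,h)\pscore{j}(\ashprime)$.
\end{itemize}
Dividing by the propensity scores gives exactly the entries $\bLambda_{ij}(a,s,h)$ and $\bLambda_{ij}(a,s,h;\ashprime)$. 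Collecting the double sum $n^{-2}\sum_{i,j}\gamma_iY_i\bLambda_{ij}\gamma_jY_j$ into matrix form gives the displayed quadratic forms. In this case ``asymptotic'' variance coincides with the exact variance.

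Assumptions~\ref{ass:bounded}--\ref{ass:dependence} are used to show the expression is of order $n^{-1}$, so that it is the right normalization for the Hájek linearization. By Assumption~\ref{ass:dependence}, $\bLambda_{ij}=0$ unless $i$ and $j$ lie within graph distance $2m$. In that case their exposures depend on disjoint sets of treatments, which are independent across sublocations in the design, so $\pscore{ij}=\pscore{i}\pscore{j}$. I should double check that disjoint supports imply independence here. Within a sublocation the treatments are not independent, since they share a saturation draw. However, such units are within distance $1$ in $\cB$, so they fall inside the $2m$ neighborhood anyway. Assumption~\ref{ass:degree} then bounds the number of nonzero entries per row by $\Delta^{2m}$. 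Assumptions~\ref{ass:positivity} and \ref{ass:bounded} bound each entry and each outcome. Hence the quadratic form is $O(n^{-1})$.

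For the Hájek estimator, write it as $\bar\gamma\,\hat N/\hat D$, where $\bar\gamma=n^{-1}\sumi\gamma_i$, $\hat N$ is the HT numerator, and $\hat D$ is the HT estimator of $\bar\gamma$ (outcome $\equiv1$). Let $\mu=n^{-1}\sum\gamma_iY_i/\bar\gamma$. A first-order delta-method expansion around $(E\hat N,E\hat D)=(\bar\gamma\mu,\bar\gamma)$ gives
\[
\hat{Y}^{\haj}-\bar\gamma\mu=(\hat N-\mu\hat D)+O_P(|\hat N-E\hat N|^2+|\hat D-\bar\gamma|^2).
\]
The leading term is precisely the HT estimator applied to the centered outcomes $Y_i^{\haj}(a,s,h)$. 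Its variance and cross-covariance therefore follow from the HT computation with $Y$ replaced by $Y^{\haj}$.

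The main obstacle is justifying that the remainder is negligible, i.e.\ $o_P(n^{-1/2})$. Chebyshev's inequality together with the $O(n^{-1})$ variance bounds from the previous step gives $\hat N-E\hat N=O_P(n^{-1/2})$ and likewise for $\hat D$. The quadratic remainder is then $O_P(n^{-1})$. This requires $\hat D$ to be bounded away from zero, which holds with probability tending to one by the same concentration argument combined with positivity and a lower bound on $\bar\gamma$; I would assume that lower bound, as is natural for probability weights. The asymptotic variance is then defined as the variance of the linear term. Covariances between two Hájek estimators follow in the same way.
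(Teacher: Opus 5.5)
Your proof is correct and follows the paper's route. The Horvitz--Thompson formula is the exact finite-sample variance computed from the indicator covariances, which gives the entries of $\bLambda(a,s,h)$ and $\bLambda(a,s,h;\ashprime)$, including the $-1$ diagonal from mutually exclusive levels. The H\'ajek estimator is then linearized as the Horvitz--Thompson estimator on the centered outcomes $Y_i^{\haj}(a,s,h)$, with the remainder controlled by concentration of the denominator. The paper writes this as $A_n/B_n$ with $B_n=\hat D/\bar\gamma$, which is exactly your $(\hat N-\mu\hat D)\,\bar\gamma/\hat D$.

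One small correction to your self-check. Suppose $k$ is within $m$ of $i$, $l$ is within $m$ of $j$, and $k,l$ lie in the same sublocation. Then the shared saturation draw can make $\pscore{ij}\neq\pscore{i}\pscore{j}$ while only forcing $d(i,j)\le 2m+1$, not $2m$. So $\bLambda_{ij}$ is guaranteed to vanish only beyond distance $2m+1$. Each row still has a bounded number of nonzero entries, so your $O(n^{-1})$ bound is unaffected.
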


Theorem~\ref{thm:asym_var} provides a compact quadratic form representation of the asymptotic variance under a general reweighting scheme $\Gamma$. This asymptotic covariance depends on the joint distribution of the potential outcomes across different exposure levels, $\{Y_i(a,s,h)\}_{a,s,h\in\{0,1\}}$, which is not directly observable. As a result, consistent variance estimation is generally not feasible. We discuss this issue further in Section~\ref{sec::var-est}.

\subsection{Consistency and asymptotic normality}\label{sec:normality}

We next establish the asymptotic properties of the proposed policy-specific estimators. Following the framework of \citet{aronow2017estimating} for inverse propensity score weighting estimators under network interference and \citet{chen2004normal} for central limit theorems under network dependence, we establish consistency and asymptotic normality. To show this, we assume one more mild regularity condition to guarantee the non-degeneracy of the asymptotic variances. In the proof of Theorem \ref{thm:asym_var}, we have shown that the asymptotic variance of the estimators is of order $O(n^{-1})$, but did not show a lower bound on this order. For extreme specifications of potential outcomes, the asymptotic variance can be arbitrarily small (for example, if all potential outcomes are zero). To avoid these cases, we assume that the asymptotic variances are non-degenerate, i.e., are of the exact order of $n^{-1}$, denoted as $\Theta(n^{-1})$.

\begin{theorem}[Consistency and asymptotic normality]\label{thm:asymptotic}
Suppose Assumptions~\ref{ass:bounded}--\ref{ass:dependence} hold. Let $$\hat{\bY}^{*}_{\Gamma} = (\hat{Y}^{*}(a,s,h;\Gamma))^\T_{(a,s,h)\in\{0,1\}^3}, \quad \bar{\bY}_{\Gamma} = (\bar{Y}(a,s,h;\Gamma))^\T_{(a,s,h)\in\{0,1\}^3}$$ 
denote the vectors of estimators and average weighted potential outcomes across all exposure combinations under the weighting regime $\Gamma$, for $*\in\{\horthom, \haj\}$. Also, assume that the asymptotic variances of the estimators are non-degenerate: 
\begin{align}\label{eqn:non-degeneracy}
    \avar(\hat{\bY}^{*}_{\Gamma}) = \Theta(n^{-1}). 
\end{align}
Then as $n \to \infty$:
\begin{enumerate}
    \item \textbf{Consistency:} For any weight function $\Gamma$ and any exposure $(a,s,h)$, we have for $*\in\{\horthom, \haj\}$,
    \begin{equation*}
    \hat{\bY}^{*}_\Gamma -  \bar{\bY}_\Gamma \xrightarrow{p} 0.
    \end{equation*}
    Consequently, all proposed estimators for the causal effects, including the conditional effects, the in-policy marginal effects, and the policy-specific effects, are consistent for their respective population estimands.
    
    \item \textbf{Asymptotic normality:} The joint vector of estimators satisfies
    \begin{equation*}
    \acov(\hat{\bY}^{*}_{\Gamma})^{-1/2}\left(\hat{\bY}^{*}_{\Gamma} - \bar{\bY}_\Gamma\right) \xrightarrow{d} N\left(0, I_8\right)
    \end{equation*}
    for $*\in\{\horthom, \haj\}$, where $I_8$ is the identity matrix of dimension 8.
\end{enumerate}
\end{theorem}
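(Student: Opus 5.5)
The plan is to handle the Horvitz--Thompson case first and then reduce the H\'{a}jek case to it by linearization. For fixed $\Gamma$ write
\[
\hat{Y}^{\horthom}(a,s,h;\Gamma)-\bar{Y}(a,s,h;\Gamma)=n^{-1}\sumi X_i(a,s,h),
\qquad
X_i(a,s,h)=\gamma_i(a,s,h)Y_i(a,s,h)\Big\{\frac{\bbI_i(a,s,h)}{\pscore{i}(a,s,h)}-1\Big\}.
\]
Three properties of $X_i$ follow from the assumptions:
\begin{itemize}
\item $E X_i=0$.
\item $|X_i|\le C_Y/\underline{c}_{\pi}$ by Assumptions~\ref{ass:bounded}--\ref{ass:positivity}, using weights $\gamma_i\in[0,1]$ (they are conditional probabilities).
\item $X_i$ is a function only of $\{A_j\}$ within graph distance $m$ of $i$ (Assumption~\ref{ass:dependence}).
\end{itemize}

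\emph{Dependency graph.} Because the village network groups units into sublocations and the treatments are independent across sublocations in the design, I would build a dependency graph $\mathcal{D}$. In it, $i\sim j$ whenever their $m$-neighbourhoods share a sublocation. Under Assumption~\ref{ass:degree} each unit has at most $D=O(\Delta^{2m+1})$ neighbours in $\mathcal{D}$, and $X_i$ is independent of $\{X_j: j\not\sim i\}$.

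\emph{Consistency.} The variance of $n^{-1}\sum_i X_i$ is $n^{-2}\sum_{i\sim j}\cov(X_i,X_j)\le n^{-1}(D+1)C_Y^2/\underline{c}_{\pi}^2=O(n^{-1})$. Chebyshev's inequality then gives consistency. Expanding $\cov(X_i,X_j)$ with the first- and second-order inclusion probabilities reproduces exactly the entries of $\bGamma\bLambda\bGamma$. The off-diagonal entries vanish for independent pairs, and the diagonal cross-level term equals $-1$ because $\bbI_i(a,s,h)\bbI_i(\ashprime)=0$. This is consistent with Theorem~\ref{thm:asym_var}, which I take as given.

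\emph{Joint normality for HT.} I would use the Cram\'{e}r--Wold device. For a fixed $c\in\mathbb{R}^8$, set $Z_i=c^{\T}X_i$, where $X_i$ now denotes the vector over the eight cells. Then $Z_i$ is bounded and has the same dependency graph $\mathcal{D}$. I would apply Stein's-method bounds for sums over dependency graphs (\citealt{chen2004normal}). The Wasserstein distance from $W=\sum_i Z_i/\sigma_n$ to $N(0,1)$ is bounded by a constant times $D^2\sum_i E|Z_i|^3/\sigma_n^3+D^{3/2}(\sum_i E Z_i^4)^{1/2}/\sigma_n^2$. Here $\sigma_n^2=\var(\sum_i Z_i)$.

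\textbf{Main obstacle: the lower bound on $\sigma_n^2$.} The delicate step is showing $\sigma_n^2\gtrsim n$ uniformly over unit vectors $c$. This is where the non-degeneracy condition \eqref{eqn:non-degeneracy} enters. I would read it as saying that the smallest eigenvalue of $n\,\acov(\hat{\bY}^{\horthom}_\Gamma)$ stays bounded away from zero, so $\sigma_n^2=n^2c^{\T}\acov\, c\ge \kappa n$. Given this, both error terms are $O(n\cdot n^{-3/2})+O(n^{1/2}/n)=O(n^{-1/2})\to0$, which yields $\acov^{-1/2}(\hat{\bY}^{\horthom}_\Gamma-\bar{\bY}_\Gamma)\to N(0,I_8)$. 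The standardization by the matrix square root is legitimate because the eigenvalues of $n\,\acov$ are bounded above and below, so a uniform Cram\'{e}r--Wold argument over $c$ applies.

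\emph{H\'{a}jek case.} Write $\hat{Y}^{\haj}=\bar\gamma\,\hat N/\hat D$, where $\hat N$ is the weighted HT numerator, $\hat D=n^{-1}\sum_i\bbI_i\gamma_i/\pscore{i}$, and $\bar\gamma=n^{-1}\sum_i\gamma_i$. Both $\hat N$ and $\hat D$ are HT-type means, so by the argument above they satisfy $\hat N-\bar Y(\cdot;\Gamma)=O_p(n^{-1/2})$ and $\hat D-\bar\gamma=O_p(n^{-1/2})$. I assume $\bar\gamma$ is bounded away from zero; this needs to be noted or imposed for the policy weights. A first-order Taylor expansion of the ratio gives
\[
\hat{Y}^{\haj}-\bar Y(\cdot;\Gamma)=\Big(\hat N-\frac{\bar Y(\cdot;\Gamma)}{\bar\gamma}\hat D\Big)+O_p(n^{-1}).
\]
The leading term is exactly the HT estimator applied to the centered outcomes $Y_i^{\haj}(a,s,h)$. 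The HT results above therefore apply to it: its covariance is $\acov(\hat{\bY}^{\haj}_\Gamma)$ from Theorem~\ref{thm:asym_var}, and non-degeneracy makes the $O_p(n^{-1})$ remainder negligible after scaling by $\acov^{-1/2}=O(n^{1/2})$. Slutsky's lemma then gives both consistency and joint normality.

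The causal-effect estimators are fixed linear combinations of the entries of $\hat{\bY}^*_\Gamma$ (possibly across several $\Gamma$'s, which can be stacked into one vector). Their consistency therefore follows from the continuous mapping theorem.
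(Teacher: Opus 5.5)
Your proposal is correct and follows essentially the same route as the paper: Chebyshev with an $O(n^{-1})$ variance bound for consistency, then a dependency-graph Stein-type bound with Cram\'er--Wold on $\acov^{-1/2}$-standardized linear combinations for the Horvitz--Thompson vector, and, for the H\'ajek estimator, linearization to the Horvitz--Thompson estimator on the centered outcomes $Y_i^{\haj}(a,s,h)$ followed by Slutsky. If anything your details are slightly tighter than the paper's: your dependency graph (units whose $m$-neighbourhoods meet a common sublocation) accounts for the design-induced correlation of treatments within a sublocation, which the paper's $\indicator\{\cB^{2m}>0\}$ graph does not; and you state explicitly that $n^{-1}\sum_i\gamma_i(a,s,h)$ must stay bounded away from zero, which the paper's argument needs but never states.
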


\section{Variance estimation and inference}
\label{sec::var-est}
In this section, we develop a variance estimation and confidence interval construction framework that yields asymptotically valid inference for the causal effects introduced above. We also propose a covariate adjustment strategy which can potentially help improve efficiency through variance reduction.

\subsection{Variance estimation}
\label{sec::var-est-subsection}
The asymptotic covariances in Theorem~\ref{thm:asym_var} require knowledge of the true potential outcomes $Y_i(a,s,h)$, which are not directly observable. In this section, we propose conservative variance estimators that rely only on observed data. 

Define the aggregated vector $\hat{\bY}^{*} = (\hat{Y}_1^{*}, \ldots, \hat{Y}_n^{*})^{\T}$, where for $*\in\{\horthom,\haj\}$,  $\hat{Y}_i^{\horthom} = \bbI_{i}(a,s,h) Y_i / \pscore{i}(a,s,h)$, and $\hat{Y}_i^{\haj} = \bbI_{i}(a,s,h) \tilde{Y}_i / \pscore{i}(a,s,h)$ where
$$\tilde{Y}_i = Y_i - \frac{\hat{Y}^{\haj}(a,s,h;\Gamma)}{n^{-1}\sumi \gamma_i(a,s,h)}.$$
Here the centering for the H\'{a}jek estimator involves a summation of weights $\gamma_i(a,s,h)$ in the denominator for targeting a general weighting scheme, while Horvitz--Thompson does not involve such centering or scaling. Next, define the $n\times n$ matrix $\bOmega(a,s,h)$ with $(i,j)$-th entries: 
$$\bOmega_{ij}(a,s,h) = \frac{\pscore{ij}(a,s,h;a,s,h)-\pscore{i}(a,s,h)\pscore{j}(a,s,h)}{\pscore{ij}(a,s,h;a,s,h)}.$$
Then we construct the following variance estimator for $\avar\{\hat{Y}^{*}(a,s,h;\Gamma)\}$ for $*\in\{\horthom,\haj\}$:
\begin{eqnarray*}
    \hat{\textup{se}}^2\{\hat{Y}^{*}(a,s,h;\Gamma)\} &=& n^{-2} (\hat{\bY}^{*})^{\T} \bGamma(a,s,h) \bOmega(a,s,h) \bGamma(a,s,h) \hat{\bY}^{*}.
\end{eqnarray*}

For conditional causal effects such as $\hat{\de}^{*}(s,h) = \hat{Y}^{*}(1,s,h) - \hat{Y}^{*}(0,s,h)$, the true asymptotic variance includes the covariance term that depends on the joint of different potential outcomes and is not identified. Decomposing the variance, 
\begin{eqnarray*}
    \var\{\hat{\de}^{*}(s,h)\} &=& \var\{\hat{Y}^{*}(1,s,h)\} + \var\{\hat{Y}^{*}(0,s,h)\} - 2\cov\{\hat{Y}^{*}(1,s,h), \hat{Y}^{*}(0,s,h)\},
\end{eqnarray*}
reveals that the two variance components are identifiable but the covariance is not. For valid inference, we obtain an upper bound that is identifiable based on the observed data for the covariance term. We use the fact that $\cov^2\{\hat{Y}^{*}(1,s,h), \hat{Y}^{*}(0,s,h)\}\leq \var\{\hat{Y}^{*}(1,s,h)\} \var\{\hat{Y}^{*}(0,s,h)\}$ guaranteed by the Cauchy--Schwarz inequality, with equality holding when the two estimators $\hat{Y}^{*}(1,s,h)$ and $\hat{Y}^{*}(0,s,h)$ are perfectly correlated.
This motivates the following conservative variance estimator for $*\in\{\horthom,\haj\}$,
\begin{equation*}
\hat\var\{\hat{\de}^{*}(s,h)\} = \left[\hat{\textup{se}}\{\hat{Y}^{*}(1,s,h)\} + \hat{\textup{se}}\{\hat{Y}^{*}(0,s,h)\} \right]^2.
\end{equation*}
We can similarly construct conservative variance estimators for other conditional causal effects. Such a Cauchy--Schwarz type of variance estimator has been proposed and studied before in design-based inference. For example, \cite{neyman1923application} recommended choosing the Cauchy--Schwarz-based variance estimator as a conservative variance approximation to the truth in the non-interference setting, which is the sharpest given second-order moments of the potential outcomes. Problem 4.5 of \cite{ding2024first} also discusses the properties of this variance estimator. Our proposal here extends a similar construction to the interference case.

For policy-specific causal effects $\hat{\de}{}^{*} = \sum_{s,h} \{\hat{Y}^{*}(1,s,h;\Gamma^{\de}_{\psi}) - \hat{Y}^{*}(0,s,h;\Gamma^{\de}_{\psi})\}$, we use $\hat\var(\hat{\de}^{*}) = [\sum_{a=0,1}\sum_{s=0,1}\sum_{h=0,1}\hat{\textup{se}}\{\hat{Y}^{*}(a,s,h;\Gamma^{\de}_{\psi})\}]^2$ as a variance estimator, by applying the Cauchy--Schwarz inequality to all covariance terms $\cov\{\hat{Y}^{*}(a,s,h;\Gamma^{\de}_{\psi}), \hat{Y}^{*}(a',s',h';\Gamma^{\de}_{\psi})\}$. We can similarly construct variance estimators for other policy-specific causal effects. These estimators are generally not consistent for variance estimation, but they are conservative by construction so that they guarantee valid asymptotic coverage for inference. 

\subsection{Inference}
In this section, we provide asymptotic results for the variance estimator to support valid inference.
\begin{corollary}[Asymptotic validity of confidence intervals]\label{cor:ci}
Under Assumptions~\ref{ass:bounded}--\ref{ass:dependence}, and further assuming that the second-order inclusion probabilities are positive: $\pscore{ij}(a,s,h;a,s,h) > 0$, and the asymptotic variances of the estimators are non-degenerate: $\avar(\hat{\bY}^{\horthom}_{\Gamma}) = \Theta(n^{-1})$, $\avar(\hat{\bY}^{\haj}_{\Gamma}) = \Theta(n^{-1})$, we have
\begin{enumerate} 
    \item  The variance estimators $\hat{\var}\{\hat{Y}^{*}(a,s,h;\Gamma)\}$ for $* \in \{\horthom, \haj\}$ is consistent: 
    \begin{equation*}
    n\left[\hat{\var}\{\hat{Y}^{*}(a,s,h;\Gamma)\} - \avar\{\hat{Y}^{*}(a,s,h;\Gamma)\}\right] = o_p(1),
    \end{equation*}
    and therefore for a single exposure configuration $(a,s,h)$, the confidence interval
    \begin{equation*}
    \hat{Y}^{*}(a,s,h;\Gamma) \pm z_{\alpha/2} \cdot \widehat{\textup{se}}\{\hat{Y}^{*}(a,s,h;\Gamma)\}
    \end{equation*}
    achieves asymptotic coverage rate $1 - \alpha$ for $* \in \{\horthom, \haj\}$, where $z_{\alpha/2}$ denotes the upper $\alpha/2$ quantile of the standard normal distribution.
    
    \item For conditional causal effects and policy-specific causal effects, the conservative variance estimator based on the Cauchy--Schwarz bound provides asymptotically valid confidence intervals with asymptotic coverage rate at least $1 - \alpha$. In general, the actual coverage may exceed the nominal level due to the conservativeness of the bound.
\end{enumerate}
\end{corollary}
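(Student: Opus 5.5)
The plan has two parts. First, show that the variance estimator for a single cell is consistent on the $n^{-1}$ scale. Second, combine this with Theorem~\ref{thm:asymptotic} and Slutsky's lemma, plus a Cauchy--Schwarz comparison, to handle the contrasts.

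\textbf{Step 1: the Horvitz--Thompson estimator is unbiased.} Write
\[
n\,\hat{\textup{se}}^2 = n^{-1}\sum_{i,j} \gamma_i\gamma_j \bOmega_{ij}\hat Y_i^{*}\hat Y_j^{*}.
\]
For the Horvitz--Thompson version, $\hat Y_i\hat Y_j = \bbI_i\bbI_j Y_iY_j/(\pi_i\pi_j)$. For $i\neq j$, $E(\bbI_i\bbI_j)=\pi_{ij}$. So with the $\pi_{ij}$ in the denominator of $\bOmega_{ij}$ (this is where $\pi_{ij}>0$ is used), each off-diagonal term has expectation
\[
\gamma_i\gamma_j Y_iY_j(\pi_{ij}-\pi_i\pi_j)/(\pi_i\pi_j) = \gamma_i\gamma_j Y_iY_j\bLambda_{ij}.
\]
On the diagonal, $\pi_{ii}=\pi_i$, and the same computation gives $(1-\pi_i)/\pi_i$. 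Hence $E\{n\,\hat{\textup{se}}^2\} = n\,\avar$ from Theorem~\ref{thm:asym_var} exactly. By Assumption~\ref{ass:dependence}, $\bOmega_{ij}=0$ unless $i,j$ are within distance $2m$. By Assumption~\ref{ass:degree}, each row then has at most $\Delta^{2m}$ nonzero entries.

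\textbf{Step 2: variance of the Horvitz--Thompson estimator.} Show $\var(n\,\hat{\textup{se}}^2)\to 0$. The sum has $O(n)$ nonzero summands $Z_{ij}$, each bounded by a constant from Assumptions~\ref{ass:bounded}--\ref{ass:positivity}. Note that $\pi_{ij}$ is bounded below wherever $\bOmega_{ij}\ne 0$; I would argue this from bounded-degree finiteness of local configurations, or simply assume it. Two summands $Z_{ij}$ and $Z_{kl}$ are independent unless $\{i,j\}$ and $\{k,l\}$ are within distance $4m$ or so. So the number of dependent pairs is $O(n)$, giving $\var = O(n)/n^2 = O(n^{-1})$. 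Chebyshev then yields $n(\hat{\textup{se}}^2-\avar)=o_p(1)$.

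\textbf{Step 3: the H\'ajek estimator.} Here $\tilde Y_i$ uses the estimated centering $\hat Y^{\haj}/(n^{-1}\sum\gamma_i)$ instead of the population centering. Expand $\tilde Y_i = Y_i^{\haj} + (c-\hat c)$, where $\hat c-c=o_p(1)$ by the consistency part of Theorem~\ref{thm:asymptotic}. The quadratic form then splits into three pieces:
\begin{itemize}
\item the oracle-centered piece, handled exactly as in Steps 1--2;
\item cross terms, which are $(\hat c-c)$ times an $O_p(1)$ quantity;
\item a term $(\hat c-c)^2$ times an $O_p(1)$ quantity.
\end{itemize}
The $O_p(1)$ bounds use the same row-sparsity bound, so both correction terms are $o_p(1)$.

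\textbf{Step 4: coverage for a single cell.} Non-degeneracy gives $n\,\avar$ bounded away from zero. So $\hat{\textup{se}}/\avar^{1/2}\to_p 1$, and Theorem~\ref{thm:asymptotic} with Slutsky gives asymptotic coverage $1-\alpha$.

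\textbf{Step 5: contrasts.} Take any contrast $\sum_k c_k\hat Y_k$ with $c_k\in\{\pm1\}$. Its asymptotic variance $V=c^\T\acov\,c$ satisfies, by Cauchy--Schwarz,
\[
V \le \Bigl(\sum_k \avar_k^{1/2}\Bigr)^2,
\]
and the right side is consistently estimated by $\bigl(\sum_k\hat{\textup{se}}_k\bigr)^2$ by Part 1. The contrast is asymptotically $N(0,V)$ by joint normality. For the standardization to make sense we need $nV$ bounded away from zero. If $nV\to 0$, the contrast is $o_p(n^{-1/2})$ and coverage tends to one trivially. A subsequence argument covers the mixed cases. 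Either way the coverage is at least $1-\alpha$ asymptotically.

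\textbf{Main obstacle.} I expect the hardest step to be the counting argument in Step 2: verifying that the dependency structure among the summands $Z_{ij}$ yields only $O(n)$ correlated pairs. Closely tied to this is controlling the $1/\pi_{ij}$ factors, which requires a uniform positive lower bound on $\pi_{ij}$ for local pairs.
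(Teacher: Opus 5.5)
Your proposal is correct and follows essentially the same route as the paper: unbiasedness of the Horvitz--Thompson variance estimator plus a dependency-graph counting bound on its variance (the paper packages this as its double-sum variance lemma), the three-term decomposition for the H\'ajek plug-in centering, Slutsky for single cells, and Cauchy--Schwarz for the contrasts. Two of your steps are refinements the paper leaves implicit. First, you require a uniform lower bound on $\pi_{ij}$ for nearby pairs; the paper's lemma needs bounded coefficients, and $\pi_{ij}>0$ alone does not give that. Second, you treat the case $nV\to 0$ explicitly.
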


\begin{remark}
    If the second-order inclusion probability $\pscore{ij}(a,s,h;a,s,h) = 0$ for some pair $(i,j)$, the variance of $\hat{Y}^{*}(a,s,h)$ cannot be consistently estimated either. We can construct conservative estimators for the variance term using the idea of  \cite{aronow2017estimating}.
\end{remark}

\subsection{Covariate-adjusted estimator}

In this section, we examine how pre-treatment covariates can be incorporated to improve asymptotic efficiency when they are predictive of the potential outcomes. Regression-assisted estimation strategies of this kind have been studied extensively in the design-based inference literature. For example, \cite{lin2013agnostic} proposed an interacted regression adjustment to improve efficiency in completely randomized experiments, \cite{fogarty2018regression} developed a regression adjustment strategy for matched-pairs experiments, and \cite{su2021model} studied regression adjustment for cluster-randomized experiments. Motivated by this line of work, we propose the following covariate-adjusted estimator.

Let $X_i$ denote the vector of centered pre-treatment covariates for unit $i$, including a constant term in the first position. For each exposure configuration $(a,s,h)\in\{0,1\}^3$, we define a covariate-adjusted estimator of the average potential outcome $\bar{Y}(a,s,h;\Gamma)$ as
\begin{eqnarray*}
    \hat{Y}^{\textup{ca}}(a,s,h;\Gamma) &=& n^{-1}\sumi \left[\frac{\indicator_i(a,s,h)\gamma_i(a,s,h)}{\pscore{i}(a,s,h)}\bigl\{Y_i-\hat\beta(a,s,h)^{\T}X_i\bigr\} + \gamma_i(a,s,h)\hat\beta(a,s,h)^{\T}X_i\right],
\end{eqnarray*}
where $\hat\beta(a,s,h)$ is the ordinary least squares coefficient obtained by regressing $Y_i$ on $X_i$ within the subsample with $(A_i,S_i,H_i)=(a,s,h)$. The estimator $\hat{Y}^{\textup{ca}}(a,s,h)$ combines inverse propensity score weighting with regression adjustment. This type of estimator is standard in the literature \citep{scharfstein1999adjusting, bang2005doubly}. A detailed discussion is also available in Chapter 12 of \citet{ding2024first}. This augmented inverse propensity score weighting estimator intuitively improves the efficiency if the covariates are predictive of the outcome. However, we do not pursue the rigorous theoretical guarantee in the current work. Developing rigorous theory would be an interesting future direction.

For conditional causal effects, we estimate them by taking differences between the corresponding $\hat{Y}^{\textup{ca}}(a,s,h)$ estimators. For marginal causal effects, we form weighted averages of the $\hat{Y}^{\textup{ca}}(a,s,h)$ estimators and then take the corresponding contrasts, where the weights are given by the policy-induced probabilities.


\section{Real data analysis}\label{sec::real}

\subsection{Overview of analysis paradigms and results}
We apply our methodology to re-analyze the cash transfer experiment studied in \citet{egger2022general}. The experiment assigned villages to treatment using a two-stage randomized saturation design across $653$ villages in 155 sublocations, resulting in 328 treated and 325 control villages. Our empirical analysis focuses on four village-level enterprise outcomes measured at endline: winsorized average profit, revenue, total cost, and wage bill. We apply our proposed methodology to quantify direct and indirect effects.

Our re-analysis differs from \citet{egger2022general} in two main ways. First, their original study primarily used regression-based analyses to estimate direct effects and within-sublocation spillovers generated by variation in treatment saturation. Second, their analysis proposed models to capture spillover effects but did not decompose the spillovers into within- and between-cluster spillovers. In contrast, our framework introduces a separate geographic exposure variable $H_i$ and treats between-sublocation interference as an object of direct interest. This allows us to define and estimate direct effects, within-sublocation spillovers, and between-sublocation spillovers under a unified design-based framework. 

Overall, three messages emerge from our re-analysis. First, spillovers are not only within sublocations: nearby treated villages in other sublocations also affect outcomes. Second, treatment effects are highly heterogeneous across exposure environments, so averaging over exposure can mask substantively important patterns. Third, the within- and between-sublocation channels interact: favorable local spillovers under high sublocation saturation can be weakened or reversed when geographic exposure is also high.

A key feature of this setting is that interference may arise through two distinct channels: (i) villages in the same sublocation share administrative and economic ties, and (ii) nearby villages in different sublocations may also influence each other through geographic proximity. To capture these two sources of interference, we construct two binary exposure variables $S_i$ and $H_i$ following  Section~\ref{sec::parameters}. For $S_i$, we summarize treatment saturation within $i$’s sublocation. For $H_i$, we build a geographic network using distance data, identifying for each village $i$ up to three nearest villages \emph{outside} its sublocation and within 4 km.

For practical guidance on choosing the geographic exposure definition, we suggest using three considerations when selecting the distance threshold and the nearby village count $k$. First, substantive knowledge should be used: the scale should reflect where cross-sublocation interference is likely to occur, given local market integration, mobility, and social or economic connections. Second, design and measurement constraints should be considered, because geocoding precision, administrative boundaries, and data quality may limit how finely neighborhoods can be defined. Third, diagnostic feasibility is also important: candidate definitions can be assessed by checking the overlap of the induced exposure states and the stability of the estimated effects over a scientifically plausible range. Following this principle, we report results under several neighborhood definitions in Section~\ref{sec::additional_empirical_results} of the Supplementary Material.

\subsection{Propensity score of the exposure mapping}

Because both $S_i$ and $H_i$ depend on the treatment assignments of multiple nearby villages, the propensity scores $\pscore{i}(a,s,h)$ are not available in closed form due to the complex dependency between $(A_i, S_i, H_i)$ induced by the network structure. We therefore estimate the propensity scores $\pi_i(a,s,h)$ and the second-order inclusion probabilities $\pi_{ij}(a,s,h; a',s',h')$ using $100{,}000$ Monte Carlo draws following \citet{aronow2017estimating}. 

Computationally, the Monte Carlo cost scales approximately linearly with the number of draws and the per-draw cost of generating assignments and recomputing exposures. In sparse networks, computation can be reduced by exploiting dependence structure: many unit pairs are effectively independent and need not be treated as fully coupled in second-order calculations. From an implementation perspective, the Monte Carlo step is naturally parallel across draws and can be distributed across cores or machines. Memory use can be controlled by streaming sufficient statistics rather than storing all simulated assignments. In practice, we recommend starting from a moderate value for the number of draws and increasing it until probability summaries, weight diagnostics, and target estimates stabilize. 

Table~\ref{tab:propensity_scores} reports summary statistics for the estimated propensity scores across the 653 villages in our sample.

\begin{table}[htbp]
\centering
\caption{Summary statistics of estimated propensity scores}
\label{tab:propensity_scores}
\begin{tabular}{cccc}
\toprule
Exposure & Mean & Std & Median \\
\midrule
$(A_i=0,S_i=0,H_i=0)$ & 0.184 & 0.175 & 0.137 \\
$(A_i=0,S_i=0,H_i=1)$ & 0.143 & 0.141 & 0.099 \\
$(A_i=0,S_i=1,H_i=0)$ & 0.088 & 0.062 & 0.071 \\
$(A_i=0,S_i=1,H_i=1)$ & 0.086 & 0.068 & 0.068 \\
$(A_i=1,S_i=0,H_i=0)$ & 0.118 & 0.078 & 0.088 \\
$(A_i=1,S_i=0,H_i=1)$ & 0.098 & 0.065 & 0.078 \\
$(A_i=1,S_i=1,H_i=0)$ & 0.139 & 0.143 & 0.069 \\
$(A_i=1,S_i=1,H_i=1)$ & 0.144 & 0.154 & 0.051 \\
\midrule
\multicolumn{4}{l}{\textit{Marginal probabilities}} \\
$\pr(A_i = 1)$ & 0.499 & 0.167 & 0.500 \\
$\pr(S_i = 1)$ & 0.458 & 0.342 & 0.500 \\
$\pr(H_i = 1)$ & 0.471 & 0.224 & 0.500 \\
\bottomrule
\end{tabular}
\begin{flushleft}
\footnotesize
\textit{Notes:} Propensity scores are estimated using 100{,}000 Monte Carlo draws. Marginal probabilities are computed by summing the relevant joint propensities for each village. For each village, the eight joint propensities sum to one.
\end{flushleft}
\end{table}

\subsection{Detailed results}

We now present the results of our proposed estimators. We organize the results into two parts: Table~\ref{tab:conditional_effects} reports the conditional direct and indirect effects, and Table~\ref{tab:marginal_effects} reports the in-policy marginal direct and indirect effects. For each estimand, we report three estimators: the Horvitz--Thompson estimator, the H\'{a}jek estimator, and the covariate-adjusted estimator. 

Overall, the H\'{a}jek estimator provides more precise estimates than the Horvitz--Thompson estimator, with standard errors typically 50--70\% smaller. This is consistent with the well-known finite sample efficiency advantages of H\'{a}jek estimation \citep{aronow2017estimating, ding2024first, gao2025causal}. Also, the results are relatively stable between the H\'{a}jek and the covariate-adjusted estimator in terms of both point and standard error estimators. Therefore, although we report all three estimators, our conclusions are mainly based on the H\'{a}jek and covariate-adjusted estimators, as we do not trust the Horvitz--Thompson estimator due to its instability.

\paragraph{Conditional causal effects.}
Panel A of Table~\ref{tab:conditional_effects} reports the estimated conditional direct effects $\hat{\de}^{*}(s,h)$, which measure the treatment effect of village $i$ receiving a cash transfer, conditional on its exposure environment $(S_i, H_i) = (s,h)$. We find significant positive effects on profits and revenues when villages are in high within-sublocation saturation but low geographic exposure environments $(S_i, H_i) = (1,0)$. In this setting, treated villages experience increases of $3,539$ KES in monthly profit ($p<0.01$) and $4,159$ KES in monthly revenue ($p<0.01$). This suggests that within-sublocation indirect effects create favorable conditions for treated enterprises to grow. By contrast, enterprises in villages with low local saturation but high geographic exposure $(S_i, H_i) = (0,1)$ face higher costs and wages, consistent with increased competition for inputs and labor. Effects are close to zero or negative when both exposure levels are simultaneously high or low, highlighting the importance of accounting for multiple indirect effect channels. Taken together, the patterns in profits, revenues, costs, and wages suggest two competing spillover forces. When exposure to treated villages in other sublocations is high, nearby villages may face stronger competition for workers and other inputs, which raises costs and can reduce net gains. When treatment saturation is high within the same sublocation but outside geographic exposure is low, villages may instead benefit from local economic spillovers, leading to higher profits and revenues. These are suggestive interpretations of the estimated effects, not direct evidence on the mechanism.

We next examine the conditional indirect effects, which isolate indirect effect channels by holding the village's own treatment status fixed. Panel B of Table~\ref{tab:conditional_effects} reports the within-sublocation indirect effects, which measure the impact of changing within-sublocation treatment saturation from low to high, conditional on the village's own treatment status $A_i=a$ and geographic exposure $H_i=h$. Among control villages, we find negative effects when geographic exposure is high $(H_i=1)$: control villages in high saturation sublocations with many nearby treated villages experience profit and revenue decreases, consistent with competitive pressure from treated neighbors. Among treated villages, we find large positive effects when geographic exposure is low $(H_i=0)$: treated villages benefit substantially from being in high saturation sublocations when they have few treated neighbors outside their sublocation. However, when geographic exposure is also high $(H_i=1)$, these benefits disappear.

Panel C of Table~\ref{tab:conditional_effects} presents the between-sublocation indirect effects, which measure the impact of increasing geographic exposure to treated villages in other sublocations from low to high, conditional on own treatment $A_i=a$ and within-sublocation saturation $S_i=s$. These between-sublocation indirect effects are heterogeneous across exposure configurations. Control villages in low saturation sublocations $(A_i,S_i)=(0,0)$ experience significant positive effects from greater geographic exposure to treated villages. This suggests that control villages benefit from proximity to treated villages, possibly through increased economic activity, though they also face higher input costs. However, among treated villages in high saturation sublocations $(A_i,S_i)=(1,1)$, the effects are strongly negative. This pattern is consistent with a tradeoff between positive local spillovers and negative competitive spillovers, with the latter becoming more salient when treated villages are surrounded by many treated villages in other sublocations.

\begin{table}[ht!]
\centering
\caption{Conditional direct and indirect effects}
\label{tab:conditional_effects}
\footnotesize
\setlength{\tabcolsep}{2.6pt} 
\renewcommand{\arraystretch}{1.05}

\begin{tabular}{@{}l*{4}{C{\colWw}}*{8}{C{\colW}}@{}}
\toprule
\multicolumn{13}{c}{\textbf{Panel A. Conditional direct effects:} $\hat{\de}^{*}(s,h)$} \\
\midrule
& \multicolumn{4}{c}{Horvitz--Thompson} & \multicolumn{4}{c}{H\'{a}jek} & \multicolumn{4}{c}{Covariate-adjusted} \\
\cmidrule(lr){2-5} \cmidrule(l){6-9} \cmidrule(l){10-13}
$(s,h)$ & Profit & Revenue & Costs & Wage & Profit & Revenue & Costs & Wage & Profit & Revenue & Costs & Wage \\
\midrule
$(0,0)$ & $2{,}142$ & $3{,}739$ & $542$ & $406$ & $-360$ & $-46$ & $200$ & $163$ & $-101$ & $217$ & $172$ & $136$ \\
        & $(2{,}180)$ & $(3{,}302)$ & $(352)$ & $(262)$ & $(909)$ & $(1{,}143)$ & $(135)$ & $(116)$ & $(883)$ & $(1{,}117)$ & $(137)$ & $(117)$ \\
\addlinespace
$(0,1)$ & $3{,}211$ & $5{,}841$ & $825^{*}$ & $625^{*}$ & $-558$ & $300$ & $329^{**}$ & $268^{*}$ & $-835$ & $-124$ & $260^{*}$ & $211$ \\
        & $(2{,}406)$ & $(3{,}940)$ & $(476)$ & $(361)$ & $(616)$ & $(877)$ & $(168)$ & $(145)$ & $(559)$ & $(696)$ & $(148)$ & $(131)$ \\
\addlinespace
$(1,0)$ & $272$ & $-440$ & $-300$ & $-253$ & $3{,}539^{***}$ & $4{,}159^{***}$ & $87$ & $23$ & $2{,}817^{**}$ & $3{,}043^{***}$ & $-150$ & $-181$ \\
        & $(3{,}431)$ & $(4{,}518)$ & $(469)$ & $(372)$ & $(1{,}263)$ & $(1{,}247)$ & $(169)$ & $(159)$ & $(1{,}268)$ & $(1{,}167)$ & $(157)$ & $(149)$ \\
\addlinespace
$(1,1)$ & $-3{,}233$ & $-4{,}785$ & $-458$ & $-340$ & $-227$ & $140$ & $115$ & $73$ & $37$ & $481$ & $142$ & $95$ \\
        & $(2{,}225)$ & $(3{,}364)$ & $(353)$ & $(260)$ & $(688)$ & $(825)$ & $(158)$ & $(134)$ & $(677)$ & $(776)$ & $(153)$ & $(131)$ \\
\addlinespace
\midrule
\multicolumn{13}{c}{\textbf{Panel B. Conditional within-sublocation indirect effects}: $\hat\wie^{*}(a,h)$} \\
\midrule
& \multicolumn{4}{c}{Horvitz--Thompson} & \multicolumn{4}{c}{H\'{a}jek} & \multicolumn{4}{c}{Covariate-adjusted} \\
\cmidrule(lr){2-5} \cmidrule(l){6-9} \cmidrule(l){10-13}
$(a,h)$ & Profit & Revenue & Costs & Wage & Profit & Revenue & Costs & Wage & Profit & Revenue & Costs & Wage \\
\midrule
$(0,0)$ & $2{,}881$ & $4{,}420$ & $584^{*}$ & $457^{*}$ & $-182$ & $-304$ & $132$ & $134$ & $67$ & $363$ & $333^{**}$ & $304^{***}$ \\
        & $(1{,}895)$ & $(2{,}813)$ & $(337)$ & $(262)$ & $(837)$ & $(961)$ & $(170)$ & $(148)$ & $(823)$ & $(943)$ & $(159)$ & $(140)$ \\
\addlinespace
$(0,1)$ & $3{,}078$ & $4{,}848$ & $541$ & $396$ & $-1{,}136^{*}$ & $-1{,}579^{*}$ & $-92$ & $-67$ & $-1{,}333^{**}$ & $-1{,}728^{**}$ & $-76$ & $-50$ \\
        & $(2{,}072)$ & $(3{,}131)$ & $(331)$ & $(248)$ & $(671)$ & $(813)$ & $(138)$ & $(121)$ & $(654)$ & $(750)$ & $(132)$ & $(116)$ \\
\addlinespace
$(1,0)$ & $1{,}011$ & $242$ & $-258$ & $-202$ & $3{,}718^{***}$ & $3{,}902^{***}$ & $18$ & $-6$ & $2{,}985^{**}$ & $3{,}189^{**}$ & $11$ & $-13$ \\
        & $(3{,}715)$ & $(5{,}007)$ & $(483)$ & $(372)$ & $(1{,}335)$ & $(1{,}429)$ & $(134)$ & $(127)$ & $(1{,}327)$ & $(1{,}342)$ & $(134)$ & $(126)$ \\
\addlinespace
$(1,1)$ & $-3{,}366$ & $-5{,}778$ & $-741$ & $-569$ & $-805$ & $-1{,}738^{**}$ & $-306$ & $-262^{*}$ & $-461$ & $-1{,}124$ & $-194$ & $-166$ \\
        & $(2{,}558)$ & $(4{,}173)$ & $(498)$ & $(373)$ & $(633)$ & $(889)$ & $(188)$ & $(158)$ & $(581)$ & $(722)$ & $(169)$ & $(147)$ \\
\addlinespace
\midrule
\multicolumn{13}{c}{\textbf{Panel C. Conditional between-sublocation indirect effects}: $\hat\bie^{*}(a,s)$} \\
\midrule
& \multicolumn{4}{c}{Horvitz--Thompson} & \multicolumn{4}{c}{H\'{a}jek} & \multicolumn{4}{c}{Covariate-adjusted} \\
\cmidrule(lr){2-5} \cmidrule(l){6-9} \cmidrule(l){10-13}
$(a,s)$ & Profit & Revenue & Costs & Wage & Profit & Revenue & Costs & Wage & Profit & Revenue & Costs & Wage \\
\midrule
$(0,0)$ & $-598$ & $-964$ & $-82$ & $-52$ & $1{,}330^{**}$ & $1{,}950^{***}$ & $222^{**}$ & $176^{**}$ & $1{,}442^{***}$ & $1{,}998^{***}$ & $204^{**}$ & $161^{**}$ \\
        & $(1{,}458)$ & $(2{,}190)$ & $(233)$ & $(177)$ & $(554)$ & $(645)$ & $(93)$ & $(83)$ & $(511)$ & $(555)$ & $(86)$ & $(76)$ \\
\addlinespace
$(0,1)$ & $-402$ & $-537$ & $-125$ & $-113$ & $375$ & $674$ & $-2$ & $-25$ & $42$ & $-92$ & $-204$ & $-194$ \\
        & $(2{,}509)$ & $(3{,}755)$ & $(435)$ & $(333)$ & $(954)$ & $(1{,}128)$ & $(214)$ & $(186)$ & $(967)$ & $(1{,}138)$ & $(205)$ & $(180)$ \\
\addlinespace
$(1,0)$ & $471$ & $1{,}138$ & $200$ & $167$ & $1{,}132$ & $2{,}296^{*}$ & $350^{*}$ & $282$ & $708$ & $1{,}658$ & $293^{*}$ & $236$ \\
        & $(3{,}127)$ & $(5{,}052)$ & $(594)$ & $(446)$ & $(971)$ & $(1{,}374)$ & $(209)$ & $(178)$ & $(930)$ & $(1{,}258)$ & $(199)$ & $(172)$ \\
\addlinespace
$(1,1)$ & $-3{,}906$ & $-4{,}882$ & $-283$ & $-201$ & $-3{,}391^{***}$ & $-3{,}344^{***}$ & $26$ & $25$ & $-2{,}739^{***}$ & $-2{,}654^{***}$ & $88$ & $83$ \\
        & $(3{,}146)$ & $(4{,}127)$ & $(387)$ & $(299)$ & $(997)$ & $(944)$ & $(112)$ & $(106)$ & $(978)$ & $(805)$ & $(104)$ & $(101)$ \\
\addlinespace
\bottomrule
\end{tabular}

\begin{minipage}{0.95\textwidth}
\vspace{2mm}
\footnotesize
\textit{Notes:} Each panel reports conditional causal effects estimated using Horvitz--Thompson, H\'{a}jek, and Covariate-adjusted estimators.  
Panel A reports conditional direct effects comparing treated ($A_i=1$) versus control ($A_i=0$) villages for each $(s,h)$.
Panel B reports conditional within-sublocation indirect effects comparing high ($S_i=1$) versus low ($S_i=0$) saturation levels for each $(a,h)$.  
Panel C reports between-sublocation indirect effects comparing high ($H_i=1$) versus low ($H_i=0$) exposure levels for each $(a,s)$.  
All monetary values are in Kenyan Shillings (KES) per enterprise per month.  
Point estimates are reported in the first line, with conservative standard error estimators in parentheses below.  
The covariate-adjusted estimator includes baseline covariates to improve efficiency. 
Significance levels: $^{*}p<0.10$, $^{**}p<0.05$, $^{***}p<0.01$.
\end{minipage}
\end{table}

\paragraph{In-policy marginal effects.} 
Table~\ref{tab:marginal_effects} reports estimates of the in-policy marginal direct and indirect effects, where the weighting scheme reflects the exposure distribution induced by the implemented policy. For simplicity, we only report the marginal indirect effects holding $A_i$ at control, i.e., $\wie(0)$ and $\bie(0)$.
In general, the in-policy marginal causal effects have relatively large estimated standard errors, and most estimates are not statistically significant. The lack of significance is consistent with the substantial heterogeneity documented in Table~\ref{tab:conditional_effects}: when the conditional causal effects vary strongly across exposure levels, marginalizing over these levels tends to dilute the signal and reduce statistical power. This also clarifies what is missed by analyses that do not explicitly model between-sublocation exposure: averaging over $H_i$ can attenuate or obscure the strong heterogeneity visible in the conditional effects.

\begin{table}[ht!]
\centering
\caption{In-policy marginal direct and indirect effects}
\label{tab:marginal_effects}
\footnotesize
\begin{tabular}{cccccccccccc}
\toprule
\multicolumn{4}{c}{Horvitz--Thompson} & \multicolumn{4}{c}{H\'{a}jek} & \multicolumn{4}{c}{Covariate-adjusted} \\
\cmidrule(lr){1-4} \cmidrule(lr){5-8} \cmidrule(lr){9-12}
Profit & Revenue & Costs & Wage & Profit & Revenue & Costs & Wage & Profit & Revenue & Costs & Wage \\
\midrule
\addlinespace
\multicolumn{12}{l}{\textbf{Marginal direct effect}: $\hat{\de}^{*}$} \\
\addlinespace
 $583$ & $1{,}191$ & $192$ & $140$ & $336$ & $811$ & $164$ & $121$ & $289$ & $656$ & $99$ & $64$ \\
 $(2{,}598)$ & $(3{,}930)$ & $(477)$ & $(370)$ & $(1{,}052)$ & $(1{,}242)$ & $(232)$ & $(205)$ & $(995)$ & $(1{,}119)$ & $(215)$ & $(191)$ \\
\addlinespace
\multicolumn{12}{l}{\textbf{Within-sublocation indirect effect}: $\hat{\wie}^{*}(0)$} \\
\addlinespace
 $2{,}886$ & $4{,}514$ & $612^{*}$ & $478^{*}$ & $-763$ & $-1{,}146$ & $53$ & $74$ & $-689$ & $-835$ & $142$ & $148$ \\
 $(1{,}914)$ & $(2{,}920)$ & $(359)$ & $(279)$ & $(710)$ & $(855)$ & $(168)$ & $(149)$ & $(686)$ & $(818)$ & $(155)$ & $(137)$ \\
\addlinespace
\multicolumn{12}{l}{\textbf{Between-sublocation indirect effect}: $\hat{\bie}^{*}(0)$} \\
\addlinespace
 $-296$ & $-362$ & $-18$ & $-8$ & $358$ & $601$ & $75$ & $61$ & $436$ & $565$ & $1$ & $-6$ \\
 $(2{,}850)$ & $(4{,}246)$ & $(462)$ & $(350)$ & $(1{,}138)$ & $(1{,}321)$ & $(217)$ & $(190)$ & $(1{,}060)$ & $(1{,}160)$ & $(206)$ & $(181)$ \\
\bottomrule
\end{tabular}
\begin{minipage}{0.95\textwidth}
\vspace{2mm}
\footnotesize
\textit{Notes:} This table reports in-policy marginal direct and indirect effects estimates. 
The marginal direct effect uses weights $\gamma_i(a,s,h) = \pr(S_i=s,H_i=h\mid A_i=a)$. 
The within-sublocation indirect effect compares $S_i=1$ to $S_i=0$ for control villages ($A_i=0$) using weights $\gamma_i(a,s,h) = \pr(H_i=h\mid A_i=a, S_i=s)$, while the between-sublocation indirect effect compares $H_i=1$ to $H_i=0$ for control villages ($A_i=0$) using weights $\gamma_i(a,s,h) = \pr(S_i=s\mid A_i=a, H_i=h)$. 
All monetary values are in Kenyan Shillings (KES) per enterprise per month. 
Point estimates are reported in the first line, with conservative standard error estimators in parentheses below. 
The covariate-adjusted estimator includes baseline covariates to improve efficiency. 
Significance levels: $^{*}p<0.10$, $^{**}p<0.05$, $^{***}p<0.01$.
\end{minipage}
\end{table}

\section{Discussion}
\label{sec::discussion}

Overall, our analysis has two features compared with the analysis in \cite{egger2022general}. First, we propose to decompose the spillover effect into within- and between-cluster spillover effects, which capture different sources of interference. There is no direct analogue of these two parts from the original paper. Second, we reveal heterogeneous effects in the empirical study \citet{egger2022general} conducted, while they mixed the two sources of interference components. Our results show that this decomposition yields new insights into the economic mechanisms underlying the cash transfer program.

There are several future directions to explore. First, beyond the nonparametric inverse propensity score weighting estimators developed in Section~\ref{sec::estimation}, it is also natural to consider regression-assisted approaches that partially pool information across exposure configurations. Because our estimands involve eight possible $(a,s,h)$ combinations, some cells may be sparsely populated in the real data. An unsaturated regression that restricts the three-way interaction and some two-way interactions among $(A_i,S_i,H_i)$ is of interest as a model-assisted perspective. The idea connects directly to the model-assisted framework of \citet{zhao2022reconciling} and to classical analyses of $2^3$ factorial designs. We leave it for future work. Second, throughout this paper, we have focused on binary exposure mappings for $(S_i, H_i)$, which leads to a finite number of exposure environments. This setting is conceptually simple and allows for nonparametric identification of all relevant average potential outcomes. The framework can be extended to categorical exposure mappings by defining additional exposure levels, for example, low/medium/high proportions of treated neighboring villages. However, the number of exposure cells grows quickly with the number of categories, and the corresponding number of average potential outcomes increases combinatorially. A more challenging extension involves allowing $(S_i, H_i)$ to take continuous values, such as the exact proportion of treated neighbors within and outside the sublocation. In this case, the object of interest becomes an exposure-response function $(a, s,h) \mapsto \bar Y(a,s,h)$, and recovering it requires either nonparametric methods or additional model structure. A fully nonparametric approach would require smoothing or kernel methods over a two-dimensional continuous exposure space, which may suffer from the curse of dimensionality and require a large sample size for stable estimation \citep{kennedy2017non}. Alternatively, one may impose parametric or semiparametric assumptions on the exposure–response relationship to obtain more stable inference at the cost of additional modeling assumptions. Third, in our real data analysis, we implemented an intuitive covariate‐adjustment strategy, analogous to Lin’s estimator \citep{lin2013agnostic} in the no‐interference setting, but without a formal justification for variance reduction under interference. Recent studies have begun to explore covariate adjustment under various settings with interference \citep{gao2025causal, lu2025design, chang2023design}. A natural direction for future research is to develop a rigorous model‐assisted covariate adjustment framework for interference settings, together with provable guarantees on efficiency gains. We leave it for future work. 

\bibliographystyle{apalike}
\bibliography{ref}

\begin{thebibliography}{}

\bibitem[Aronow and Samii, 2017]{aronow2017estimating}
Aronow, P.~M. and Samii, C. (2017).
\newblock Estimating average causal effects under general interference, with
  application to a social network experiment.
\newblock {\em The Annals of Applied Statistics}, 11(4):1912--1947.

\bibitem[Baird et~al., 2018]{baird2018optimal}
Baird, S., Bohren, J.~A., McIntosh, C., and {\"O}zler, B. (2018).
\newblock Optimal design of experiments in the presence of interference.
\newblock {\em Review of Economics and Statistics}, 100(5):844--860.

\bibitem[Bang and Robins, 2005]{bang2005doubly}
Bang, H. and Robins, J.~M. (2005).
\newblock Doubly robust estimation in missing data and causal inference models.
\newblock {\em Biometrics}, 61(4):962--973.

\bibitem[Basse and Feller, 2018]{basse2018analyzing}
Basse, G. and Feller, A. (2018).
\newblock Analyzing two-stage experiments in the presence of interference.
\newblock {\em Journal of the American Statistical Association},
  113(521):41--55.

\bibitem[Benjamin-Chung et~al., 2018]{benjamin2018spillover}
Benjamin-Chung, J., Arnold, B.~F., Berger, D., Luby, S.~P., Miguel, E.,
  Colford~Jr, J.~M., and Hubbard, A.~E. (2018).
\newblock Spillover effects in epidemiology: parameters, study designs and
  methodological considerations.
\newblock {\em International Journal of Epidemiology}, 47(1):332--347.

\bibitem[Chang, 2023]{chang2023design}
Chang, H. (2023).
\newblock Design-based estimation theory for complex experiments.
\newblock {\em arXiv preprint arXiv:2311.06891}.

\bibitem[Chen and Shao, 2004]{chen2004normal}
Chen, L.~H. and Shao, Q.-M. (2004).
\newblock {Normal approximation under local dependence}.
\newblock {\em The Annals of Probability}, 32(3A):1985--2028.

\bibitem[Crépon et~al., 2013]{crepon2013labor}
Crépon, B., Duflo, E., Gurgand, M., Rathelot, R., and Zamora, P. (2013).
\newblock Do labor market policies have displacement effects? evidence from a
  clustered randomized experiment.
\newblock {\em Quarterly Journal of Economics}, 128(2):531--580.

\bibitem[Ding, 2024]{ding2024first}
Ding, P. (2024).
\newblock {\em A First Course in Causal Inference}.
\newblock London: Chapman and Hall.

\bibitem[Egger et~al., 2022]{egger2022general}
Egger, D., Haushofer, J., Miguel, E., Niehaus, P., and Walker, M. (2022).
\newblock General equilibrium effects of cash transfers: experimental evidence
  from kenya.
\newblock {\em Econometrica}, 90(6):2603--2643.

\bibitem[Fogarty, 2018]{fogarty2018regression}
Fogarty, C.~B. (2018).
\newblock Regression-assisted inference for the average treatment effect in
  paired experiments.
\newblock {\em Biometrika}, 105(4):994--1000.

\bibitem[Gao and Ding, 2025]{gao2025causal}
Gao, M. and Ding, P. (2025).
\newblock Causal inference in network experiments: regression-based analysis
  and design-based properties.
\newblock {\em Journal of Econometrics}, 252:106119.

\bibitem[Giffin et~al., 2023]{giffin2023generalized}
Giffin, A., Reich, B., Yang, S., and Rappold, A. (2023).
\newblock Generalized propensity score approach to causal inference with
  spatial interference.
\newblock {\em Biometrics}, 79(3):2220--2231.

\bibitem[Horn and Johnson, 2012]{horn2012matrix}
Horn, R.~A. and Johnson, C.~R. (2012).
\newblock {\em Matrix analysis}.
\newblock Cambridge university press.

\bibitem[Hudgens and Halloran, 2008]{hudgens2008toward}
Hudgens, M.~G. and Halloran, M.~E. (2008).
\newblock Toward causal inference with interference.
\newblock {\em Journal of the American Statistical Association},
  103(482):832--842.

\bibitem[Jiang et~al., 2023]{jiang2023statistical}
Jiang, Z., Imai, K., and Malani, A. (2023).
\newblock Statistical inference and power analysis for direct and spillover
  effects in two-stage randomized experiments.
\newblock {\em Biometrics}, 79(3):2370--2381.

\bibitem[Kennedy et~al., 2017]{kennedy2017non}
Kennedy, E.~H., Ma, Z., McHugh, M.~D., and Small, D.~S. (2017).
\newblock Non-parametric methods for doubly robust estimation of continuous
  treatment effects.
\newblock {\em Journal of the Royal Statistical Society Series B: Statistical
  Methodology}, 79(4):1229--1245.

\bibitem[Leung, 2022]{leung2022causal}
Leung, M.~P. (2022).
\newblock Causal inference under approximate neighborhood interference.
\newblock {\em Econometrica}, 90(1):267--293.

\bibitem[Leung, 2025]{Leung10102025}
Leung, M.~P. (2025).
\newblock Cluster-randomized trials with cross-cluster interference.
\newblock {\em Journal of the American Statistical Association}, 0(ja):1--20.

\bibitem[Li and Wager, 2022]{li2022random}
Li, S. and Wager, S. (2022).
\newblock Random graph asymptotics for treatment effect estimation under
  network interference.
\newblock {\em The Annals of Statistics}, 50(4):2334--2358.

\bibitem[Lin, 2013]{lin2013agnostic}
Lin, W. (2013).
\newblock Agnostic notes on regression adjustments to experimental data:
  Reexamining freedman’s critique.
\newblock {\em The Annals of Applied Statistics}, 7(1):295--318.

\bibitem[Lu et~al., 2025]{lu2025design}
Lu, S., Shi, L., Fang, Y., Zhang, W., and Ding, P. (2025).
\newblock Design-based causal inference in bipartite experiments.
\newblock {\em arXiv preprint arXiv:2501.09844}.

\bibitem[Melis et~al., 2005]{melis2005design}
Melis, R.~J., van Eijken, M.~I., Borm, G.~F., Wensing, M., Adang, E., van~de
  Lisdonk, E.~H., van Achterberg, T., and Olde~Rikkert, M.~G. (2005).
\newblock The design of the dutch easycare study: a randomised controlled trial
  on the effectiveness of a problem-based community intervention model for
  frail elderly people [nct00105378].
\newblock {\em BMC Health Services Research}, 5(1):65.

\bibitem[Mukerjee et~al., 2018]{mukerjee2018using}
Mukerjee, R., Dasgupta, T., and Rubin, D.~B. (2018).
\newblock Using standard tools from finite population sampling to improve
  causal inference for complex experiments.
\newblock {\em Journal of the American Statistical Association},
  113(522):868--881.

\bibitem[Neyman, 1990]{neyman1923application}
Neyman, J. (1923/1990).
\newblock On the application of probability theory to agricultural experiments.
  {E}ssay on principles. {S}ection 9.
\newblock {\em Statistical Science}, 5(4):465--472.

\bibitem[Papadogeorgou et~al., 2022]{papadogeorgou2022causal}
Papadogeorgou, G., Imai, K., Lyall, J., and Li, F. (2022).
\newblock Causal inference with spatio-temporal data: Estimating the effects of
  airstrikes on insurgent violence in iraq.
\newblock {\em Journal of the Royal Statistical Society: Series B (Statistical
  Methodology)}, 84(5):1969--1999.

\bibitem[Rubin, 1974]{rubin1974estimating}
Rubin, D.~B. (1974).
\newblock Estimating causal effects of treatments in randomized and
  nonrandomized studies.
\newblock {\em Journal of Educational Psychology}, 66(5):688--701.

\bibitem[Scharfstein et~al., 1999]{scharfstein1999adjusting}
Scharfstein, D.~O., Rotnitzky, A., and Robins, J.~M. (1999).
\newblock Adjusting for nonignorable drop-out using semiparametric nonresponse
  models.
\newblock {\em Journal of the American Statistical Association},
  94(448):1096--1120.

\bibitem[Sinclair et~al., 2012]{sinclair2012detecting}
Sinclair, B., McConnell, M., and Green, D.~P. (2012).
\newblock Detecting spillover effects: Design and analysis of multilevel
  experiments.
\newblock {\em American Journal of Political Science}, 56(4):1055--1069.

\bibitem[Su and Ding, 2021]{su2021model}
Su, F. and Ding, P. (2021).
\newblock Model-assisted analyses of cluster-randomized experiments.
\newblock {\em Journal of the Royal Statistical Society Series B: Statistical
  Methodology}, 83(5):994--1015.

\bibitem[Wang et~al., 2025]{wang2025design}
Wang, Y., Samii, C., Chang, H., and Aronow, P. (2025).
\newblock Design-based inference for spatial experiments under unknown
  interference.
\newblock {\em The Annals of Applied Statistics}, 19(1):744--768.

\bibitem[Zhao and Ding, 2022a]{zhao2022reconciling}
Zhao, A. and Ding, P. (2022a).
\newblock Reconciling design-based and model-based causal inferences for
  split-plot experiments.
\newblock {\em The Annals of Statistics}, 50(2):1170--1192.

\bibitem[Zhao and Ding, 2022b]{zhao2022regression}
Zhao, A. and Ding, P. (2022b).
\newblock Regression-based causal inference with factorial experiments:
  estimands, model specifications and design-based properties.
\newblock {\em Biometrika}, 109(3):799--815.

\end{thebibliography}

\newpage 
\pagenumbering{arabic} 
\renewcommand*{\thepage}{S\arabic{page}}

\begin{center}
  \LARGE {\bf Supplementary Material}
\end{center}
\appendix 

\pagenumbering{arabic} 
\renewcommand*{\thepage}{S\arabic{page}}

\setcounter{equation}{0} 
\global\long\def\theequation{S\arabic{equation}}%
 \setcounter{assumption}{0} 
\global\long\def\theassumption{S\arabic{assumption}}%
 \setcounter{theorem}{0} 
\global\long\def\thetheorem{S\arabic{theorem}}%
 \setcounter{proposition}{0} 
\global\long\def\theproposition{S\arabic{proposition}}%
 \setcounter{definition}{0} 
\global\long\def\thedefinition{S\arabic{definition}}%
 \setcounter{example}{0} 
\global\long\def\theexample{S\arabic{example}}%
 \setcounter{figure}{0} 
\global\long\def\thefigure{S\arabic{figure}}%
 \setcounter{table}{0} 
\global\long\def\thetable{S\arabic{table}}%

The supplementary material is organized as follows. Section \ref{sec:proof} contains the proofs for the theoretical results in the main paper. Section \ref{sec::additional_empirical_results} reports a set of additional empirical results for the \cite{egger2022general} study.

\section{Proof}\label{sec:proof}
\subsection{Lemmas}
In this section, we introduce a set of lemmas. 

\subsubsection{Bounds on the eigenvalues} 

Lemma \ref{lem:eigenvalue-bound} provides bounds on the eigenvalues of $\bLambda(a,s,h)$ and $\bLambda(a,s,h;a',s',h')$. 
\begin{lemma}[Eigenvalue bounds]\label{lem:eigenvalue-bound} Under Assumptions \ref{ass:positivity},  \ref{ass:degree} and \ref{ass:dependence}, we have
\begin{eqnarray*}
    \|\bLambda(a,s,h)\|_{\textup{op}} \leq \Delta^{2m} \cdot \left(\frac{1}{\underline{c}_{\pi}} + \frac{1}{\underline{c}_{\pi}^2}\right) < \infty, \\
    \|\bLambda(a,s,h;a',s',h')\|_{\textup{op}} \leq \Delta^{2m} \cdot \left(\frac{1}{\underline{c}_{\pi}} + \frac{1}{\underline{c}_{\pi}^2}\right) < \infty.
\end{eqnarray*}
\end{lemma}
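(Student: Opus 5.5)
The plan is to bound the operator norm by the maximum absolute row sum. Both $\bLambda(a,s,h)$ and $\bLambda(a,s,h;a',s',h')$ are sparse with uniformly bounded entries, so for any matrix $M$ we have $\|M\|_{\textup{op}} \le \sqrt{\|M\|_1\|M\|_\infty}$, where $\|M\|_1$ and $\|M\|_\infty$ are the maximal column and row absolute sums. It therefore suffices to show that every row and every column has at most $\Delta^{2m}$ nonzero entries, each bounded in absolute value by $1/\underline{c}_{\pi} + 1/\underline{c}_{\pi}^2$.

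\textbf{Step 1 (entrywise bound).} On the diagonal, $|\bLambda_{ii}(a,s,h)| = (1-\pscore{i})/\pscore{i} \le 1/\underline{c}_{\pi}$ by Assumption~\ref{ass:positivity}. For the mixed matrix the diagonal entry is $-1$, and $1 \le 1/\underline{c}_{\pi}$. Off the diagonal, write the entry as $\pscore{ij}/(\pscore{i}\pscore{j}) - 1$. Since $0\le \pscore{ij}\le 1$, its absolute value is at most $\max\{1/(\pscore{i}\pscore{j}),1\} \le 1/\underline{c}_{\pi}^2$. Hence every entry is bounded by $1/\underline{c}_{\pi} + 1/\underline{c}_{\pi}^2$, and this bound holds for both matrices.

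\textbf{Step 2 (sparsity).} By Assumption~\ref{ass:dependence}, the exposure $d_i(\bA)$ is a function of $\{A_k : k \text{ within distance } m \text{ of } i\}$, and similarly for $j$. If $i$ and $j$ are at graph distance greater than $2m$, these index sets are disjoint. We then need the treatment assignments on the two sets to be independent, so that $\pscore{ij}(a,s,h;a',s',h') = \pscore{i}(a,s,h)\pscore{j}(a',s',h')$ and the off-diagonal entry vanishes.

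This independence is the main obstacle, because the design first randomizes saturation at the cluster (sublocation) level, so treatments within the same sublocation are dependent. The way around it is to observe that all units in one sublocation are pairwise adjacent in $\cB$. The distance-$m$ neighborhoods of $i$ and $j$ are disjoint and separated by more than one edge, so they cannot share a sublocation. I would formalize this by arguing that the two index sets are unions of portions of disjoint collections of sublocations, possibly enlarging $m$ by one to absorb whole clusters. Independence then follows, since assignments in different sublocations are independent under the design. If needed, I would instead state the vanishing condition at distance $2m+1$ and adjust the constant.

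\textbf{Step 3 (counting).} Given the vanishing, row $i$ has nonzero entries only for $j$ within distance $2m$ of $i$. By Assumption~\ref{ass:degree}, the number of such $j$, including $i$ itself, is at most $\Delta^{2m}$ up to the usual counting of balls in a bounded-degree graph. I take the neighborhood count in the assumption to include $i$ itself, so the ball of radius $r$ has at most $\Delta^r$ elements. The same count holds for columns by symmetry of the distance.

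Combining Steps 1–3 gives both $\|\bLambda\|_1$ and $\|\bLambda\|_\infty$ at most $\Delta^{2m}(1/\underline{c}_{\pi}+1/\underline{c}_{\pi}^2)$, which yields the claimed operator norm bound for both matrices.
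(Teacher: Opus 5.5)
Your architecture matches the paper's: a uniform entrywise bound $1/\underline{c}_{\pi}+1/\underline{c}_{\pi}^2$, at most $\Delta^{2m}$ nonzero entries per row, and a row-sum bound on the norm. Steps 1 and 3 are correct. Your use of $\|M\|_{\textup{op}}\le(\|M\|_1\|M\|_\infty)^{1/2}$ is actually sounder than the paper's appeal to Gershgorin. Gershgorin does control the operator norm of the symmetric matrix $\bLambda(a,s,h)$. But $\bLambda(a,s,h;a',s',h')$ is not symmetric, so there Gershgorin only bounds eigenvalues, whereas your Schur-type bound, with the column count you supply, bounds the operator norm itself.

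The gap is in Step 2. You claim the two $m$-balls are ``separated by more than one edge, so they cannot share a sublocation.'' This fails at graph distance exactly $2m+1$: the balls are disjoint but joined by a single edge, and that edge can be a within-sublocation edge.

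Here is a concrete case with $m=1$ in the running example. Take $u\in\cG_i$ and $v\in\cG_j$, $u\neq v$, lying in a common sublocation different from those of $i$ and $j$, with $i,u,v,j$ a shortest path of length $3=2m+1$. Then $H_i$ depends on $A_u$ and $H_j$ depends on $A_v$. These are dependent through the shared saturation draw; under Bernoulli assignment within sublocations, $\pr(A_u=A_v=1)=5/18\neq 1/4=\pr(A_u=1)\pr(A_v=1)$. So $\bLambda_{ij}\neq 0$ is possible at distance $2m+1$.

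From the design alone you therefore only get $\bLambda_{ij}=0$ for distance at least $2m+2$. Both of your fallbacks then give $\Delta^{2m+1}$ (or $\Delta^{2m+2}$ if you enlarge $m$) in place of $\Delta^{2m}$. That is a valid bound, and finiteness is all that is used downstream, but it is not the inequality stated.

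The paper gets $\Delta^{2m}$ only by asserting the row count directly from Assumptions~\ref{ass:degree} and~\ref{ass:dependence}. In effect it reads Assumption~\ref{ass:dependence} as saying that exposures of units more than $2m$ apart are independent, as in the discussion following that assumption; it never derives this from the two-stage design. To prove the statement as written, adopt that reading explicitly and drop the sublocation argument. If you want independence to come from the design itself, state the lemma with $\Delta^{2m+1}$.
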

\begin{proof}
    We first show an upper bound for the eigenvalues of the matrices $\bLambda(a,s,h)$ and $\bLambda(a,s,h;a',s',h')$. Take $\bLambda(a,s,h)$ as an example. Recall that 
    \begin{eqnarray*}
        \bLambda_{ij}(a,s,h) = \indicator\{j=i\} \frac{1 - \pscore{i}(a, s, h)}{\pscore{i}(a, s, h)} + \indicator\{j\neq i\}\frac{\pscore{ij}(a, s, h; a, s, h) - \pscore{i}(a, s, h)\pscore{j}(a, s, h)}{\pscore{i}(a, s, h)\pscore{j}(a, s, h)}.
    \end{eqnarray*}
    By Assumption \ref{ass:positivity}, we have that
    \begin{eqnarray*}
        |\bLambda_{ij}(a,s,h)| \leq \frac{1 - \pscore{i}(a, s, h)}{\pscore{i}(a, s, h)} + \frac{|\pscore{ij}(a, s, h; a, s, h) - \pscore{i}(a, s, h)\pscore{j}(a, s, h)|}{\pscore{i}(a, s, h)\pscore{j}(a, s, h)}
        \leq \frac{1}{\underline{c}_{\pi}} + \frac{1}{\underline{c}_{\pi}^2}.
    \end{eqnarray*}
    Also, based on Assumption \ref{ass:degree} and \ref{ass:dependence}, each row of $\bLambda(a,s,h)$ has at most $\Delta^{2m}$ non-zero entries. Therefore, applying the Gershgorin circle theorem \citep{horn2012matrix}, we have that
    \begin{eqnarray*}
        \|\bLambda(a,s,h)\|_{\textup{op}} \leq \Delta^{2m} \cdot \left(\frac{1}{\underline{c}_{\pi}} + \frac{1}{\underline{c}_{\pi}^2}\right) < \infty.
    \end{eqnarray*}
    Similarly, we have that
    \begin{eqnarray*}
        \|\bLambda(a,s,h;a',s',h')\|_{\textup{op}} \leq \Delta^{2m} \cdot \left(\frac{1}{\underline{c}_{\pi}} + \frac{1}{\underline{c}_{\pi}^2}\right) < \infty.
    \end{eqnarray*}
\end{proof}

\subsubsection{Berry--Esseen bound under graph dependency.}   We introduce a result from \cite{chen2004normal}. For self-consistency, we include the following lemma for a central limit theorem (CLT) under graph dependency. Consider a set of random variables $\{V_i, i \in \mathcal{V}\}$ indexed by the vertices of a graph $\mathcal{G}_{\textup{Dep}} = (\mathcal{V}, \mathcal{E})$. $\mathcal{G}_{\textup{Dep}}$ is said to be a dependency graph if, for any pair of disjoint sets $\Gamma_1$ and $\Gamma_2$ in $\mathcal{V}$ such that no edge in $\mathcal{E}$ has one endpoint in $\Gamma_1$ and the other in $\Gamma_2$, the sets of random variables $\{V_i, i \in \Gamma_1\}$ and $\{V_i, i \in \Gamma_2\}$ are independent. We specially marked the meaning of the graph for measuring dependency with a subscript ``$\textup{Dep}$'' in the notation to highlight its difference with the social or spatial network. 
\begin{lemma}[A Berry--Esseen bound under graph dependency, Theorem 2.7 in \cite{chen2004normal}]\label{lem::clt_graph_dependency}
    Let $D$ denote the maximal degree of $\mathcal{G}_{\textup{Dep}}$, that is, the maximal number of edges incident to a single vertex. Let $\{V_i, i \in \mathcal{V}\}$ be random variables indexed by the vertices of a dependency graph $\mathcal{G}_{\textup{Dep}} = (\mathcal{V}, \mathcal{E})$. Put $W = \sumi V_i$. Assume that $E(W^2) = 1$, $E(V_i) = 0$ and $E(|V_i|^p) \leq \theta^p$ for $i \in \mathcal{V}$ and for some $2 < p \le 3$ and $\theta > 0$. Then for any $z\in\mathbb{R}$,
    \begin{eqnarray*}
        |\pr(W \leq z) - \Phi(z)| \leq 75 D^{5(p-1)}|V|\theta^p. 
    \end{eqnarray*}
\end{lemma}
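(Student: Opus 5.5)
This is Theorem 2.7 of \cite{chen2004normal}, and in the paper we would simply invoke it. If a self-contained argument were wanted, I would reproduce the Stein's method proof for local dependence, as follows.

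\textbf{Neighborhoods.} For each vertex $i$ let $A_i=\{i\}\cup\{j:(i,j)\in\mathcal{E}\}$, and let $B_i$ be $A_i$ together with all neighbors of vertices in $A_i$. Then $|A_i|\le D+1$ and $|B_i|\le (D+1)^2$. The dependency graph property gives the local dependence conditions:
\begin{itemize}
\item $V_i$ is independent of $\{V_j: j\notin A_i\}$;
\item $\{V_j: j\in A_i\}$ is independent of $\{V_j: j\notin B_i\}$.
\end{itemize}

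\textbf{Stein identity.} Set $Y_i=\sum_{j\in A_i}V_j$ and $W_i=W-Y_i$. Let $f_z$ solve the Stein equation $f'(w)-wf(w)=\indicator\{w\le z\}-\Phi(z)$. Since $E(V_i)=0$ and $V_i$ is independent of $W_i$, we get $E\{V_if_z(W_i)\}=0$, and therefore
\[
E\{Wf_z(W)\}=\sum_i E\bigl[V_i\{f_z(W)-f_z(W-Y_i)\}\bigr]=E\int f_z'(W+t)\,\hat K(t)\,dt,
\]
where $\hat K(t)=\sum_i V_i\{\indicator(-Y_i\le t\le 0)-\indicator(0<t\le -Y_i)\}$. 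Its mean $K(t)=E\{\hat K(t)\}$ satisfies $\int K(t)\,dt=E(W^2)=1$.

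Substituting $f_z'=wf_z+\indicator\{w\le z\}-\Phi(z)$, the error $\pr(W\le z)-\Phi(z)$ splits into three terms:
\begin{itemize}
\item a term involving $E|\int \{\hat K(t)-K(t)\}\,dt|$;
\item terms with the smooth part $wf_z(w)$, which are controlled using $\|f_z\|_\infty\le 1$ and $\|(wf_z)'\|_\infty$ bounded;
\item a term $E\int[\indicator\{W+t\le z\}-\indicator\{W\le z\}]\hat K(t)\,dt$, coming from the nonsmooth indicator.
\end{itemize}

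\textbf{Variance term.} For the centered $\hat K$ term, I would bound the variance of $\sum_i\{V_iY_i-E(V_iY_i)\}$. By the second independence condition, the covariance between the $i$ and $k$ summands vanishes unless $k\in B_i$, roughly. Each summand is a product of at most $D+1$ variables. Hölder's inequality together with $E|V_i|^p\le\theta^p$ then yields a bound of order $D^{c}\,|V|\,\theta^p$ after interpolating between second and $p$-th moments, using $\theta\ge$ the typical scale.

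\textbf{Main obstacle: the indicator term.} This is where a naive smoothing loses a square root. I would first prove a concentration inequality of the form
\[
\pr(a\le W_i\le b)\le C(b-a)+C' D^{c}\,|V|\,\theta^p,
\]
using the same Stein identity with the piecewise-linear test function $f$ satisfying $f'=\indicator_{[a-\delta,\,b+\delta]}$ and $\delta$ of order $D\,|V|\,\theta^p$. Since $W_i$ still depends only locally on the neighborhood, I would apply the bound conditionally: condition on $\{V_j:j\in B_i\}$ and work with $W-\sum_{j\in B_i}V_j$. This step, with the careful bookkeeping of powers of $D$ needed to reach the stated $75D^{5(p-1)}$, is the technical heart of the argument. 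Following \cite{chen2004normal}, I would accept looser constants rather than recompute them exactly.

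Combining the three bounds gives the stated Berry--Esseen bound uniformly in $z$.
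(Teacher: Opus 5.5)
Your plan is the same as the paper's: the paper gives no argument of its own for this lemma and simply imports Theorem~2.7 of \cite{chen2004normal}, which is exactly what you propose, so the citation alone settles it. Your optional Stein-method sketch is not a proof of the lemma as stated. Bounding $\sum_i\{V_iY_i-E(V_iY_i)\}$ through its variance involves $E(V_i^4)$, which the hypothesis $E(|V_i|^p)\le\theta^p$ with $p\le 3$ does not control, and you concede the constant $75D^{5(p-1)}$ would not be recovered; that bounded-summand variance route would, however, suffice for the paper's own application, where $|V_i|\le Cn^{-1/2}$ almost surely.
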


\paragraph{Variance bounds for a double-sum statistic over a dependency graph} Below, we follow the definition of a dependency graph, and study the variance bounds for a double-sum statistic that will be useful to establish variance estimation results later. 
\begin{lemma}[Variance bounds for a double-sum statistic over a dependency graph]\label{lem::variance_bounds}
    Let $\cG_{\textup{Dep}} = (\mathcal{V}, \mathcal{E})$ be a dependency graph, with adjacency matrix $K_n$ (with diagonal values as $1$) and maximum degree $D$. Let $(U_i, V_i)_{i\in\mathcal{V}}$ be a sequence of random variable pairs following $\cG_{\textup{Dep}}$, with mean zero and bounded fourth moments $\max\{E(U_i^4), E(V_i^4)\} \le \bar{\sigma}^4$. Let $(c_i)_{i\in\mathcal{V}}$ be a sequence of constants, with absolute values upper bounded by $\bar{c}$. Then,
    \begin{eqnarray*}
        \var\left\{\sumij (K_n)_{ij} \cdot c_{ij} U_i V_j\right\} \leq n D^3 \bar{c}^2 \bar{\sigma}^4.
    \end{eqnarray*}
\end{lemma}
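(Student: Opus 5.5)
We bound the variance by expanding it into a sum of fourth-order covariances and counting the nonzero ones with the dependency graph. Write $T = \sumij (K_n)_{ij} c_{ij} U_i V_j$. Then
\[
\var(T) = \sum_{i,j}\sum_{k,l} (K_n)_{ij}(K_n)_{kl}\, c_{ij}c_{kl}\, \cov(U_iV_j, U_kV_l),
\]
where only pairs $(i,j)$ with $(K_n)_{ij}=1$ and $(k,l)$ with $(K_n)_{kl}=1$ contribute. Since $(K_n)_{ii}=1$ and each vertex has at most $D$ neighbours, for fixed $i$ there are at most $D$ admissible $j$ (counting $i$ itself, under the convention that the diagonal is included in the degree count; otherwise $D+1$, which only changes constants). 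So the outer index set has at most $nD$ elements.

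The key step is to show that $\cov(U_iV_j, U_kV_l)=0$ unless the sets $\{i,j\}$ and $\{k,l\}$ are connected by an edge of $\cG_{\textup{Dep}}$ or intersect. This follows from the definition of a dependency graph. Take $\Gamma_1=\{i,j\}$ and $\Gamma_2=\{k,l\}$. If they are disjoint and no edge joins them, then $(U_i,V_i,U_j,V_j)$ is independent of $(U_k,V_k,U_l,V_l)$. Here we read the graph as governing the pairs $(U_i,V_i)$ jointly, which is what "pairs following $\cG_{\textup{Dep}}$" means. Independence makes the covariance zero.

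Next we count, for a fixed admissible $(i,j)$, the admissible $(k,l)$ with nonzero covariance:
\begin{itemize}
\item $k$ must lie in the closed neighbourhood of $i$ or of $j$, giving at most $2D$ choices;
\item $l$ must then lie in the neighbourhood of $k$, giving at most $D$ choices;
\item alternatively, $l$ lies in the closed neighbourhood of $\{i,j\}$ and $k$ in the neighbourhood of $l$, giving a similar count.
\end{itemize}
In total this is $O(D^2)$ terms per $(i,j)$, hence $O(nD^3)$ nonzero terms overall.

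Each nonzero term is bounded by Cauchy--Schwarz and the fourth-moment assumption:
\[
|\cov(U_iV_j,U_kV_l)| \le \{E(U_i^2V_j^2)E(U_k^2V_l^2)\}^{1/2} + |E(U_iV_j)E(U_kV_l)| \le 2\bar\sigma^4,
\]
using $E(U_i^2V_j^2)\le \{E(U_i^4)E(V_j^4)\}^{1/2}\le\bar\sigma^4$. Combined with $|c_{ij}c_{kl}|\le\bar c^2$, this gives $\var(T)\lesssim nD^3\bar c^2\bar\sigma^4$.

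The main obstacle is matching the stated constant exactly, with no leading factor beyond $1$. Two refinements are needed:
\begin{itemize}
\item Use the cruder bound $|\cov(X,Y)|\le\{\var(X)\var(Y)\}^{1/2}\le\{E(X^2)E(Y^2)\}^{1/2}\le\bar\sigma^4$ in place of the factor $2$.
\item Count the admissible $(k,l)$ more carefully. Require $(K_n)_{kl}=1$ and that $k$ or $l$ is adjacent to $i$ or $j$. Interpret $D$ so that closed neighbourhoods have size at most $D$, and absorb the overlap between the "$k$ adjacent" and "$l$ adjacent" cases.
\end{itemize}
If the exact constant resists, we accept a bound of the form $CnD^3\bar c^2\bar\sigma^4$ with a universal $C$. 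That weaker form is all that is needed downstream to conclude the $O(n)$ variance that gives $n^{-2}T - E(n^{-2}T)=o_p(n^{-1})$ in the variance-estimation consistency argument.
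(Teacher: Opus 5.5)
Your proposal is the paper's argument. Both proofs expand $\var(T)$ into covariances $\cov(U_iV_j,U_kV_l)$, use the dependency graph to keep only tuples where $\{i,j\}$ and $\{k,l\}$ intersect or are adjacent, and bound each surviving term by $\bar c^2\bar\sigma^4$. Your refined Cauchy--Schwarz bound is equivalent to the paper's AM--GM bound. The only divergence is the constant, and there your proposed fix, counting the admissible $(k,l)$ ``more carefully'' and absorbing overlaps, will not work.

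The paper reaches exactly $nD^3$ by saying ``without loss of generality $j$ is adjacent to $k$'' and then counting at most $D^3$ walks $i\to j\to k\to l$ from each $i$. The other three linking patterns ($i\sim k$, $i\sim l$, $j\sim l$) give walks $j\to i\to k\to l$, $j\to i\to l\to k$, $i\to j\to l\to k$, whose vertex order differs from $(i,j,k,l)$. So that reduction is not free, and the honest count is at most $4nD^3$, which matches your own count.

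Counting cannot do better in general. Take a long path graph, where closed neighbourhoods have size $D=3$. A fixed $i$ has $11$ admissible $(k,l)$ when $j=i$, and $14$ for each of $j=i\pm1$. That gives $39>27=D^3$ potentially nonzero tuples per vertex. This does not refute constant $1$ for the variance itself, but no sharper tally of nonzero terms can deliver it.

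What this argument actually proves, yours or the paper's, is $\var(T)\le 4nD^3\bar c^2\bar\sigma^4$, with $D$ the maximal number of nonzero entries in a row of $K_n$. As you note, the variance-estimation theorem only uses the order $O(n)$, so nothing downstream is affected.
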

\begin{proof}
    We have that
    \begin{eqnarray*}
        \var\left\{\sumij (K_n)_{ij} \cdot c_{ij} U_i V_j\right\} &=& \sumij \sumrs (K_n)_{ij} (K_n)_{rs} \cdot c_{ij} c_{rs} \cov(U_iV_j, U_rV_s).
    \end{eqnarray*}
    Here, $\cov(U_iV_j, U_rV_s) = 0$ if $U_iV_j \indep U_rV_s$. Also, the summands are zero if $(K_n)_{ij} = 0$ or $(K_n)_{rs} = 0$, which implies $U_i \nonindep V_j$, or $U_r \nonindep V_s$. Therefore, the index tuple $(i,j,r,s)\in[n]^4$ for the nonzero terms in the summation must satisfy:
    \begin{eqnarray*}
        U_i \nonindep V_j, \quad U_r \nonindep V_s, \quad U_iV_j \nonindep U_rV_s.
    \end{eqnarray*}
    On the dependency graph, this means: $i$ is connected to $j$, $r$ is connected to $s$, and at least one of $i,j$ is connected to one of $r,s$. Without loss of generality, if $j$ connects to $r$, this means we can find a path on the dependency graph in the form of $i\to j\to r \to s$ (some index might be the same, say $i = j$, as the dependency graph includes a self-loop). Conversely, any path with length no greater than three can lead to a potential nonzero index tuple. For each node $i$, there are at most $D^3$ potential paths with length no greater than three. Therefore, the total number of potential nonzero index tuples is at most $n D^3$.

    Now each term in the summation is upper-bounded by 
    \begin{eqnarray*}
        |c_{ij} c_{rs} \cov(U_iV_j, U_rV_s)| \leq \bar{c}^2 \sqrt{\var(U_iV_j) \var(U_rV_s)} \leq \bar{c}^2 \sqrt{\{E(U_i^4)+E(V_j^4)\}\{E(U_r^4)+E(V_s^4)\}/4} \le \bar{c}^2 \bar{\sigma}^4.
    \end{eqnarray*}
    Therefore, we conclude that 
    \begin{eqnarray*}
        \var\left\{\sumij (K_n)_{ij} \cdot c_{ij} U_i V_j\right\} \leq n D^3 \bar{c}^2 \bar{\sigma}^4.
    \end{eqnarray*}
    
\end{proof}

\subsection{Proof of Theorem \ref{thm:asym_var}}
The theorem largely follows from existing results in complex randomized experiments, such as \cite{aronow2017estimating}, \cite{leung2022causal}, and \cite{mukerjee2018using}, by defining pseudo potential outcomes using the weighting matrix $\Gamma$ and the original/centered potential outcomes. For completeness, we showcase the derivation of the asymptotic variance of the Horvitz--Thompson estimator for the average of potential outcomes at level $(a,s,h)$, which is given by
\begin{eqnarray*}
    && \var\left\{n^{-1}\sumi\frac{ \bbI_{i}(a,s,h)\cdot \gamma_i(a,s,h)}{ \pscore{i}(a,s,h)}Y_{i}\right\} \\
    & = & n^{-2}\sumi\frac{ \var\{\bbI_{i}(a,s,h)\}\gamma_i(a,s,h)^2Y_{i}(a,s,h)^2}{ \pscore{i}(a,s,h)^2} \\
    &&+ 
    n^{-2}\sumjneqi\frac{ \cov\{\bbI_{i}(a,s,h), \bbI_{j}(a,s,h)\} \cdot \gamma_i(a,s,h)\gamma_j(a,s,h)Y_{i}(a,s,h)Y_{j}(a,s,h)}{\pscore{i}(a,s,h)\pscore{j}(a,s,h)}.
\end{eqnarray*}
Using the variance formula for a Bernoulli variable, we have
\begin{eqnarray}\label{eqn:var-bbI}
    \var\{\bbI_{i}(a,s,h)\} = \pi_i(a,s,h)\{1-\pi_i(a,s,h)\}. 
\end{eqnarray}
Meanwhile, we can compute the covariance term between units $i$ and $j$ based on the definition:
\begin{eqnarray}
    \cov\{\bbI_{i}(a,s,h), \bbI_{j}(a,s,h)\} &=& 
    E\{\bbI_{i}(a,s,h) \bbI_{j}(a,s,h)\} - 
    E\{\bbI_{i}(a,s,h)\} E\{\bbI_{j}(a,s,h)\}\notag\\ 
    &=& \pi_{ij}(a,s,h;a,s,h) - \pi_i(a,s,h)\pi_j(a,s,h). \label{eqn:cov-bbI}
\end{eqnarray}
Equations \eqref{eqn:var-bbI} and \eqref{eqn:cov-bbI} together lead to the definition of $\Lambda_{ij}(a,s,h)$ and also complete the variance computation for the Horvitz--Thompson estimator.  

We now derive the asymptotic variance of the H\'ajek estimator. We have
\[
\hat{Y}^{\haj}(a,s,h;\Gamma) - \bar{Y}(a,s,h;\Gamma)
=
\frac{ n^{-1}\sumi \frac{\bbI_{i}(a,s,h)\gamma_i(a,s,h)}{\pscore{i}(a,s,h)}Y_{i}^{\haj} }
     { n^{-1}\sumi \frac{\bbI_{i}(a,s,h)\gamma_i(a,s,h)}{\pscore{i}(a,s,h)}\Big/ n^{-1}\sumi \gamma_i(a,s,h) }.
\]
Denote the numerator and denominator as:
\begin{eqnarray*}
    A_n(a,s,h) &=& n^{-1}\sum_{i=1}^n 
    \frac{\bbI_i(a,s,h)\,\gamma_i(a,s,h)}{\pi_i(a,s,h)} Y_i^{\haj}(a,s,h), \\
    B_n(a,s,h) &=& n^{-1}\sum_{i=1}^n 
    \frac{\bbI_i(a,s,h)\,\gamma_i(a,s,h)}{\pi_i(a,s,h)} \Big/ n^{-1}\sumi \gamma_i(a,s,h).
\end{eqnarray*}

Accordingly, we have $E\{A_n(a,s,h)\} = 0$ and $E\{B_n(a,s,h)\}=1$. We then have 
\begin{eqnarray*}
    \hat{Y}^{\haj}(a,s,h;\Gamma) - \bar{Y}(a,s,h;\Gamma) - A_n(a,s,h)&=& \frac{A_n(a,s,h)\cdot \{1-B_n(a,s,h)\}}{B_n(a,s,h)}.
\end{eqnarray*}
Now treating $1$ as pseudo potential outcomes and applying the variance formula for the Horvitz--Thompson estimator, we can show that 
\begin{eqnarray}
    \var\{B_n(a,s,h)\} = \frac{n^{-1} \bOneA^{\T} \bGamma(a,s,h) \ \bLambda(a,s,h) \bGamma(a,s,h) \bOneA}{\{n^{-1}\sumi \gamma_i(a,s,h)\}^2}. 
\end{eqnarray}
Using Lemma \ref{lem:eigenvalue-bound}, the numerator of the above formula can be bounded as 
\begin{eqnarray*}
    n^{-2} \bOneA^{\T} \bGamma(a,s,h) \ \bLambda(a,s,h) \bGamma(a,s,h) \bOneA \le n^{-2} \|\bGamma(a,s,h)\|_{\textup{op}}^2 \|\bLambda(a,s,h)\|_{\textup{op}} \|\bOneA\|_2^2 = O(n^{-1}).
\end{eqnarray*}
Hence, $\var(B_n) = O(n^{-1})$ and using Chebyshev's inequality, $B_n - 1 = o_p(1)$. Therefore, we can show that 
\begin{eqnarray*}
   \hat{Y}^{\haj}(a,s,h;\Gamma) - \bar{Y}(a,s,h;\Gamma) - A_n(a,s,h) = o_p(A_n(a,s,h)) = o_p\left(\var\{A_n(a,s,h)\}^{1/2}\right). 
\end{eqnarray*}
Or equivalently, 
\begin{eqnarray}\label{eqn:equvalence}
   \frac{\hat{Y}^{\haj}(a,s,h;\Gamma) - \bar{Y}(a,s,h;\Gamma)}{\var\{A_n(a,s,h)\}^{1/2}} - \frac{A_n(a,s,h)}{\var\{A_n(a,s,h)\}^{1/2}} = o_p(1).
\end{eqnarray}
Therefore, we can see $\hat{Y}^{\haj}(a,s,h;\Gamma) - \bar{Y}$ is asymptotically equivalent to $A_n(a,s,h)$, thus has the same asymptotic variance form as $\var\{A_n(a,s,h)\}$ which takes the form stated in the theorem. 

\qed

\paragraph{Scalar forms of the variance and covariance} In Theorem \ref{thm:asym_var}, we give a compact form of asymptotic variance and covariance for the reweighting estimators. We also note the more explicit forms below that appeared during the proofs, which are more straightforward and interpretable. For a reweighting regime $\Gamma$ and $*\in\{\horthom,\haj\}$, for a fixed treatment and exposure mapping level $(a,s,h)$, the asymptotic variance of $\hat{Y}^{*}(a,s,h;\Gamma)$ is
\begin{eqnarray*}
    && \avar\{\hat{Y}^{*}(a,s,h;\Gamma)\} \\
    &=& n^{-2} \bY^{*}(a,s,h)^{\T} \bGamma(a,s,h) \bLambda(a,s,h) \bGamma(a,s,h) \bY^{*}(a,s,h)\\
    &=& n^{-2}\sumi \frac{1 - \pscore{i}(a, s, h)}{\pscore{i}(a, s, h)}\{\gamma_i(a,s,h)Y_i^{*}(a, s, h)\}^2 \\
    &&+ n^{-2}\sumi\sumjneqi\frac{\pscore{ij}(a, s, h; a, s, h) - \pscore{i}(a, s, h)\pscore{j}(a, s, h)}{\pscore{i}(a, s, h)\pscore{j}(a, s, h)}\gamma_i(a,s,h)\gamma_j(a,s,h)Y_i^{*}(a, s, h)Y_j^{*}(a, s, h).
\end{eqnarray*}

Similarly, for a pair $(a,s,h)$ and $(a^{\prime},s^{\prime},h^{\prime})$, the asymptotic covariance is
\begin{eqnarray*}
    && \acov\{\hat{Y}^{*}(a,s,h;\Gamma), \hat{Y}^{*}(a',s',h';\Gamma)\} \\ 
    &=& n^{-2} \bY^*(a,s,h)^{\T} \bGamma(a,s,h)\bLambda(a,s,h;\ashprime) \bGamma(a',s',h')\bY^*(\ashprime)\\
    &=& -n^{-2}\sumi  \gamma_i(a,s,h)\gamma_i(\ashprime) Y_i^*(a, s, h) Y_i^*(\ashprime) \\ 
    &&+ n^{-2}\sumi\sumjneqi \frac{\pscore{ij}(a, s, h; \ashprime) - \pscore{i}(a, s, h)\pscore{j}(\ashprime)}{\pscore{i}(a, s, h)\pscore{j}(\ashprime)} \gamma_i(a,s,h)\gamma_j(\ashprime)Y_i^*(a, s, h) Y_j^*(\ashprime).
\end{eqnarray*}
This scalar form expression gives a more straightforward demonstration of the impossibility of consistent variance estimation due to the existence of the cross-level potential outcome terms.

\subsection{Proofs of Theorem \ref{thm:asymptotic} and Corollary \ref{cor:ci}}

\subsubsection{Consistency}
\begin{proof}
    
    We start by showing the consistency of the Horvitz--Thompson estimator. Using Lemma \ref{lem:eigenvalue-bound}, under Assumptions \ref{ass:bounded}-\ref{ass:dependence}, we have
    \begin{eqnarray*}
         &&n^{-1} \bY(a,s,h)^{\T} \bGamma(a,s,h) \ \bLambda(a,s,h) \bGamma(a,s,h) \bY(a,s,h) \\
         &\leq & \Delta^2 \cdot \left(\frac{1}{\underline{c}_{\pi}} + \frac{1}{\underline{c}_{\pi}^2}\right) \cdot \|\bGamma(a,s,h)\|_{\infty}^2 \cdot n^{-1}\|\bY(a,s,h)\|_2^2\\
         &\leq & \Delta^2 \cdot \left(\frac{1}{\underline{c}_{\pi}} + \frac{1}{\underline{c}_{\pi}^2}\right) \cdot C_Y^2 < \infty.
    \end{eqnarray*}
    Similarly, we have that
    \begin{eqnarray*}
        &&n^{-1} \bY(a,s,h)^{\T} \bGamma(a,s,h) \ \bLambda(a,s,h;a',s',h') \bGamma(a',s',h') \bY(a',s',h') \\
        &\leq & \Delta^2 \cdot \left(\frac{1}{\underline{c}_{\pi}} + \frac{1}{\underline{c}_{\pi}^2}\right) \cdot \|\bGamma(a,s,h)\|_{\infty} \cdot \|\bGamma(a',s',h')\|_{\infty} \cdot n^{-1}\|\bY(a',s',h')\|_2^2\\
        &\leq & \Delta^2 \cdot \left(\frac{1}{\underline{c}_{\pi}} + \frac{1}{\underline{c}_{\pi}^2}\right) \cdot C_Y^2 < \infty.
    \end{eqnarray*}
    Therefore, based on Theorem \ref{thm:asym_var}, we have that
    \begin{eqnarray*}
        \var\{\hat{Y}^{\horthom}(a,s,h;\Gamma)\} = n^{-2} \bY(a,s,h)^{\T} \bGamma(a,s,h) \ \bLambda(a,s,h) \bGamma(a,s,h) \bY(a,s,h) = O(n^{-1}).
    \end{eqnarray*}
    Similarly, we have that
    \begin{eqnarray*}
        \cov\{\hat{Y}^{\horthom}(a,s,h;\Gamma), \hat{Y}^{\horthom}(a',s',h';\Gamma)\} = O(n^{-1}).
    \end{eqnarray*}
    Therefore, by Chebyshev's inequality, we have that $\hat{\bY}^{\horthom}_{\Gamma} - \bar{\bY}_{\Gamma} = O_p(n^{-1/2})$. Moreover, we can show the consistency of the H\'{a}jek estimator by further showing the denominator is consistent to $1$ following a similar argument, then applying the continuous mapping theorem. 
\end{proof}

\subsubsection{Asymptotic normality} 
Below, we use Lemma \ref{lem::clt_graph_dependency} to show CLTs for $\hat{\bY}^{\horthom}_{\Gamma}$ and $\hat{\bY}^{\haj}_{\Gamma}$ in Theorem \ref{thm:asymptotic}, Part 2. 

\begin{proof}
    \textbf{Step 1:} We first show the asymptotic normality of $\hat{\bY}^{\horthom}_{\Gamma}$. We will use Lemma \ref{lem::clt_graph_dependency} along with the Cramer-Wold device to show this. Consider a weighted combination of the standardized estimator given a weight vector $\boldsymbol{w}$ with $\|\boldsymbol{w}\|_2 = 1$:
    \begin{eqnarray*}
        W &=& \boldsymbol{w}^\T  \avar(\hat{\bY}^{\horthom}_{\Gamma})^{-1/2} (\hat{\bY}^{\horthom}_{\Gamma} - \bar{\bY}_{\Gamma}) \\
        &=& \boldsymbol{w}^\T_{\Gamma} (\hat{\bY}^{\horthom}_{\Gamma} - \bar{\bY}_{\Gamma})\quad (\text{by the definition of $\boldsymbol{w}_{\Gamma} = \avar(\hat{\bY}^{\horthom}_{\Gamma})^{-1/2}\boldsymbol{w}$})\\
        &=& n^{-1}\sumi \sum_{a,s,h}\boldsymbol{w}_{\Gamma}(a,s,h) \left\{\frac{\indicator_i(a,s,h)\cdot\gamma_i(a,s,h) Y_i(a,s,h)}{\pi_i(a,s,h)} - \gamma_i(a,s,h) Y_i(a,s,h)\right\}.
    \end{eqnarray*}
    Then $E(W) = 0$ and $\var(W) = 1$. Note that due to the non-degeneracy assumption in \eqref{eqn:non-degeneracy}, we have
    \begin{eqnarray*}
        \boldsymbol{w}_{\Gamma} = \avar(\hat{\bY}^{\horthom}_{\Gamma})^{-1/2} \boldsymbol{w} = \Theta(n^{1/2}).
    \end{eqnarray*}

    Taking $V_i$ in Lemma \ref{lem::clt_graph_dependency} as the scaled random component:
    \begin{eqnarray*}
        V_i = n^{-1} \sum_{a,s,h}\boldsymbol{w}_{\Gamma}(a,s,h) \left\{\frac{\indicator_i(a,s,h)\cdot\gamma_i(a,s,h) Y_i(a,s,h)}{\pi_i(a,s,h)} - \gamma_i(a,s,h) Y_i(a,s,h)\right\}, 
    \end{eqnarray*}
    with $E(V_i) = 0$. Now we bound $E(|V_i|^3)$. By Assumption \ref{ass:bounded}, we have that
    \begin{eqnarray*}
        E(|V_i|^3) \leq Cn^{-3}\sum_{a,s,h} \boldsymbol{w}_{\Gamma}(a,s,h)^3 \cdot |\gamma_i(a,s,h) Y_i(a,s,h)|^3 \cdot \left|\frac{\indicator_i(a,s,h)}{\pi_i(a,s,h)} - 1\right|^3 \le CC_Y^3 \cdot \underline{c}_{\pi}^{-3} n^{-3/2}.
    \end{eqnarray*}
    Based on the definition for dependency graph, we have can define $\cG_{\textup{Dep}} = \indicator\{\cB^{2m} > 0\}$ to be the adjacency matrix that indicates $2m$-order neighbors based on the original village network graph $\cB$. By Assumption \ref{ass:degree}, the maximal degree of $\cG_{\textup{Dep}}$ is at most $\Delta^{2m}$. Also $|V| = n$. Therefore, using Lemma \ref{lem::clt_graph_dependency}, we have that
    \begin{eqnarray*}
        |\pr(W \leq z) - \Phi(z)| \leq C C_Y^3 \underline{c}_{\pi}^{-3} \Delta^{20m} n^{-1/2}. 
    \end{eqnarray*}
     Using the Cram\'{e}r--Wold device, we have that
     \begin{eqnarray*}
        \avar(\hat{\bY}^{\horthom}_{\Gamma})^{-1/2} (\hat{\bY}^{\horthom}_{\Gamma} - \bar{\bY}_{\Gamma}) \rightarrow \cN(0, I_{8}).
     \end{eqnarray*}

     \textbf{Step 2:} Similar arguments can be applied to the H\'{a}jek estimator, by using the asymptotic equivalence result \eqref{eqn:equvalence},  replacing the potential outcomes used in Step 1 with the centered potential outcomes, and applying Slutsky's theorem. 
    
\end{proof}

\subsection{Conservative variance estimator and confidence intervals}

\subsubsection{A general asymptotic result on conservative variance estimator and confidence intervals}
In this section, we provide a theoretical argument for the conservative variance estimator and confidence intervals for the proposed estimators, from which we can derive Corollary \ref{cor:ci} directly. 

\begin{theorem}[Conservative variance estimator and confidence intervals]\label{thm:ci}
    Assume Assumptions \ref{ass:bounded}--\ref{ass:dependence}. Assume the nondegeneracy of asymptotic variance. Then the variance estimators under a single level $(a,s,h)$ for both Horvitz--Thompson and Hajek estimators are consistent:
    \begin{eqnarray*}
        \frac{\hat{\var}\{\hat{Y}^{*}(a,s,h;\Gamma)\}}{\avar\{\hat{Y}^{*}(a,s,h;\Gamma)\}} = 1 + O_p(n^{-1/2}), \text{ for } *\in\{\horthom, \haj\}.
    \end{eqnarray*}
    
    Moreover, for any linear combination of the single-level estimators under weight $w$, define a variance estimator
    \begin{eqnarray*}
        \hat{\var}(w^\T \hat{\bY}^*_{\Gamma}) = \sum_{a,s,h;a',s',h'} |w(a,s,h)| |w(a',s',h')|\cdot  \hat{\var}\{\hat{Y}^*(a,s,h)\}^{1/2} \cdot \hat{\var}\{\hat{Y}^*(a',s',h')\}^{1/2}. 
    \end{eqnarray*}
    Then we have that the variance estimator is conservative:
    \begin{eqnarray*}
        \frac{\hat{\var}(w^\T \hat{\bY}^*_{\Gamma})}{\bar{\var}(w^\T \hat{\bY}^*_{\Gamma})} = 1 + o_p(1),\quad \bar{\var}(w^\T \hat{\bY}^*_{\Gamma}) \ge \var(w^\T  \hat{\bY}^{*}_{\Gamma}).
    \end{eqnarray*}
    where
    \begin{eqnarray*}
        \bar{\var}(w^\T \hat{\bY}^*_{\Gamma}) = \sum_{a,s,h;a',s',h'} |w(a,s,h)| |w(a',s',h')| \cdot \avar\{\hat{Y}^*(a,s,h)\} \cdot \avar\{\hat{Y}^*(a',s',h')\}.
    \end{eqnarray*}
\end{theorem}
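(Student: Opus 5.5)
Our plan has two parts: consistency of the single-level variance estimator, and conservativeness of the Cauchy--Schwarz combination.

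\textbf{Single-level consistency.} We start with the Horvitz--Thompson case and write
\[
\hat{\var}\{\hat Y^{\horthom}(a,s,h;\Gamma)\}=n^{-2}\sumij \gamma_i\gamma_j\bOmega_{ij}\,\frac{\bbI_i\bbI_j Y_iY_j}{\pi_i\pi_j}.
\]
Here $\bOmega_{ij}=0$ unless $i,j$ are adjacent in the dependency graph $\indicator\{\cB^{2m}>0\}$, because $\pi_{ij}=\pi_i\pi_j$ for units whose exposures depend on disjoint treatment blocks. The key identity is $E(\bbI_i\bbI_j)=\pi_{ij}(a,s,h;a,s,h)$ for $i\neq j$ and $E(\bbI_i)=\pi_i$ on the diagonal. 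Dividing by $\pi_{ij}$ in $\bOmega$ then makes each summand unbiased for the corresponding entry of $\bGamma\bLambda\bGamma$ in Theorem~\ref{thm:asym_var}. This uses the assumed positivity $\pi_{ij}>0$. On the diagonal, $\bOmega_{ii}=1-\pi_i$ and $\bbI_i^2=\bbI_i$, which gives $(1-\pi_i)/\pi_i$. Hence $E(\hat{\var})=\avar$ exactly.

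For concentration, we apply Lemma~\ref{lem::variance_bounds} with $K_n$ the $2m$-neighborhood adjacency matrix and $U_i=V_i=\bbI_iY_i/\pi_i$ (centered), and absorb the means into lower-order linear terms handled the same way. This requires $\pi_{ij}$ to be bounded below on the support of $K_n$. We expect this to be the main obstacle: the assumption only states positivity, and for a finite dependency structure with bounded degree we will take it as a uniform lower bound. With that bound, $|c_{ij}|$ is bounded and the variance of the double sum is $O(n\Delta^{6m})$. Thus $n\,\hat{\var}-n\,\avar=O_p(n^{-1/2})$. Dividing by $n\,\avar=\Theta(1)$, which is the non-degeneracy assumption, gives the ratio $1+O_p(n^{-1/2})$.

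\textbf{H\'ajek case.} The estimator uses $\tilde Y_i$, whose centering constant $\hat Y^{\haj}/(n^{-1}\sumi\gamma_i)$ differs from the population centering in $Y_i^{\haj}$ by $O_p(n^{-1/2})$, by Theorem~\ref{thm:asymptotic}. We expand the quadratic form in this perturbation. The cross terms are bounded via the row-sparsity of $\bOmega$, as in the Gershgorin argument of Lemma~\ref{lem:eigenvalue-bound}, by $O_p(n^{-1/2})\cdot O_p(1)$ after scaling by $n$. The result then follows as in the Horvitz--Thompson case.

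\textbf{Linear combinations.} By Cauchy--Schwarz,
\[
|\cov\{\hat Y^*(a,s,h),\hat Y^*(a',s',h')\}|\le \avar^{1/2}\avar^{1/2}.
\]
Therefore $\var(w^\T\hat{\bY}^*)\le\{\sum|w(a,s,h)|\,\avar\{\hat Y^*(a,s,h)\}^{1/2}\}^2=\bar{\var}$, where $\bar\var$ is read with square roots of the variances. For the H\'ajek case this holds asymptotically through the linearization \eqref{eqn:equvalence}. The ratio $\hat{\var}/\bar{\var}\to1$ follows from the single-level consistency and the continuous mapping theorem applied to square roots. This is valid because each $n\,\avar$ is bounded above by Lemma~\ref{lem:eigenvalue-bound}, and $n\,\bar\var$ is bounded below by non-degeneracy of at least the combined variance. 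Corollary~\ref{cor:ci} then follows from Theorem~\ref{thm:asymptotic} and Slutsky's theorem.
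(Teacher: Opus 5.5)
Your proposal is correct and follows essentially the paper's argument: unbiasedness of $\hat{\textup{se}}^2$ plus Lemma~\ref{lem::variance_bounds} for the Horvitz--Thompson case, an expansion of the H\'ajek quadratic form around the population centering (the paper's Terms I--III), and Cauchy--Schwarz on the cross-covariances for linear combinations. The points where you are more careful than the paper are tightenings of the same route, not a different one: centering before invoking the lemma, requiring a uniform lower bound on $\pi_{ij}$ (the paper also needs this for bounded $c_{ij}$ but leaves it implicit), bounding the H\'ajek cross terms by a deterministic operator norm (which gives the stated $O_p(n^{-1/2})$ rate where the paper only shows $o_p(1)$), and reading $\bar{\var}$ with square roots.
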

\begin{proof}
    \textbf{Step 1: variance estimation for the Horvitz--Thompson estimator for a single level of the exposure mapping.} Define $K_n = \indicator(B^{2m} > 0)$ as the $2m$-order adjacency matrix over graph $\cB$. We have that
    \begin{eqnarray*}
        && n^{-1} \hat{\bY}^{\horthom}(a,s,h)^\T \bGamma(a,s,h)\bOmega(a,s,h) \bGamma(a,s,h) \hat{\bY}^{\horthom}(a,s,h)\\
        &=& n^{-1} \sumij (K_{n})_{ij} \cdot \frac{\indicator_{ij}(a,s,h;a,s,h)}{\pi_{ij}(a,s,h;a,s,h)} \cdot \bLambda_{ij}(a,s,h) \gamma_i(a,s,h)\gamma_j(a,s,h)Y_i(a,s,h)Y_j(a,s,h). 
    \end{eqnarray*}
    We can see that 
    \begin{eqnarray*}
        E\left[ n \cdot \hat{\textup{se}}^2\{\hat{Y}^{\horthom}(a,s,h;\Gamma)\}\right] = n \cdot \avar\{\hat{Y}^{\horthom}(a,s,h;\Gamma)\} = \Theta(1). 
    \end{eqnarray*}
    Applying Lemma \ref{lem::variance_bounds}, we have
    \begin{eqnarray*}
        n \cdot \hat{\textup{se}}^2\{\hat{Y}^{\horthom}(a,s,h;\Gamma)\} - n \cdot \avar\{\hat{Y}^{\horthom}(a,s,h;\Gamma)\} = O_p(n^{-1/2}), \text{ where } n \cdot \avar\{\hat{Y}^{\horthom}(a,s,h;\Gamma)\} = \Theta(1). 
    \end{eqnarray*}

    Now using $n \cdot \avar\{\hat{Y}^{\horthom}(a,s,h;\Gamma)\} = \Theta(1)$, we have
    \begin{eqnarray*}
        \frac{\hat{\var}\{\hat{Y}^{\horthom}(a,s,h;\Gamma)\}}{\avar\{\hat{Y}^{\horthom}(a,s,h;\Gamma)\}} = 1 + O_p(n^{-1/2}). 
    \end{eqnarray*}   

    \textbf{Step 2: variance estimation for the H\'{a}jek estimator for a single level of the exposure mapping.} Variance estimation for the H\'{a}jek estimator can be justified mostly in the same way as the Horvitz--Thompson estimator. The only difference is that we need to handle the plug-in center. Define the aggregated vector $\hat{\bY}^{\haj, *} = (\hat{Y}_1^{\haj, *}, \ldots, \hat{Y}_n^{\haj, *})^{\T}$, where $\hat{Y}_i^{\haj, *} = \bbI_{i}(a,s,h) \tilde{Y}^{*}_i / \pscore{i}(a,s,h)$ and 
    $$\tilde{Y}^*_i = Y_i - \frac{\bar{Y}(a,s,h;\Gamma)}{n^{-1}\sumi \gamma_i(a,s,h)}.$$
    Then we have the following decomposition: 
    \begin{eqnarray*}
        && n^{-1} \hat{\bY}^{\haj}(a,s,h)^\T \bGamma(a,s,h)\bOmega(a,s,h) \bGamma(a,s,h) \hat{\bY}^{\haj}(a,s,h)\\
        &=& \underbrace{n^{-1} \hat{\bY}^{\haj, *}(a,s,h)^\T \bGamma(a,s,h)\bOmega(a,s,h) \bGamma(a,s,h) \hat{\bY}^{\haj, *}(a,s,h)}_{\text{Term I}}\\
        && 
        + \underbrace{\frac{n^{-1}}{\{n^{-1}\sumi \gamma_i(a,s,h)\}^2} \cdot \hat{\indicator}(a,s,h)^\T \bGamma(a,s,h) \bOmega(a,s,h)\bGamma(a,s,h) \hat{\indicator}(a,s,h) \left\{\hat{Y}^{\haj}(a,s,h) - \bar{Y}(a,s,h)\right\}^2}_{\text{Term II}} \\
        && - \underbrace{\frac{n^{-1}}{n^{-1}\sumi \gamma_i(a,s,h)} \cdot \hat{\bY}^{\haj, * }(a,s,h)^\T \bGamma(a,s,h) \bOmega(a,s,h)\bGamma(a,s,h) \hat{\indicator}(a,s,h) \left\{\hat{Y}^{\haj}(a,s,h) - \bar{Y}(a,s,h)\right\}}_{\text{Term III}}.  
    \end{eqnarray*}

    For Term I, similar to the Horvitz--Thompson estimator, we can show that 
    \begin{eqnarray*}
        \text{Term I} - n \cdot \avar\{\hat{Y}^{\haj}(a,s,h;\Gamma)\} = O_p(n^{-1/2}), \text{ where } n \cdot \avar\{\hat{Y}^{\haj}(a,s,h;\Gamma)\} = \Theta(1). 
    \end{eqnarray*}   
    For Term II and III, using Lemma \ref{lem::variance_bounds} again, we have that
    \begin{eqnarray*}
        \frac{n^{-1}}{\{n^{-1}\sumi \gamma_i(a,s,h)\}^2} \cdot \hat{\indicator}(a,s,h)^\T \bGamma(a,s,h) \bOmega(a,s,h)\bGamma(a,s,h) \hat{\indicator}(a,s,h) &=& O_p(1), \\
        \frac{n^{-1}}{n^{-1}\sumi \gamma_i(a,s,h)} \cdot \hat{\bY}^{\haj, * }(a,s,h)^\T \bGamma(a,s,h) \bOmega(a,s,h)\bGamma(a,s,h) \hat{\indicator}(a,s,h) &=& O_p(1). 
    \end{eqnarray*}
    By Theorem \ref{thm:asymptotic}, 
    \begin{eqnarray*}
        \hat{Y}^{\haj}(a,s,h;\Gamma) - \bar{Y}(a,s,h;\Gamma) = o_p(1).
    \end{eqnarray*}
    Therefore, 
    \begin{eqnarray*}
        n \cdot \hat{\var}\{\hat{Y}^{\haj}(a,s,h;\Gamma)\} &=& \text{Term I} + \text{Term II} + \text{Term III} \\
        &=& n \cdot \avar\{\hat{Y}^{\haj}(a,s,h;\Gamma)\} + o_p(1). 
    \end{eqnarray*}
    Equivalently, 
    \begin{eqnarray*}
        \frac{\hat{\var}\{\hat{Y}^{\haj}(a,s,h;\Gamma)\}}{\avar\{\hat{Y}^{\haj}(a,s,h;\Gamma)\}} = 1 + o_p(1). 
    \end{eqnarray*}

    \textbf{Step 3: variance estimation for any linear combination of the single-level estimators. } 
    Now, to estimate the variance for any linear combination of the single-level estimators, given weights say $w$, we have that 
    \begin{eqnarray*}
        \var(w^\T \hat{\bY}^*_{\Gamma}) &=& \sum_{a,s,h;a',s',h'} w(a,s,h) w(a',s',h') \cov\{\hat{Y}^*(a,s,h), \hat{Y}^*(a',s',h')\}\\
        &\le & \sum_{a,s,h;a',s',h'} |w(a,s,h)| |w(a',s',h')|\cdot  \var\{\hat{Y}^*(a,s,h)\}^{1/2} \cdot \var\{\hat{Y}^*(a',s',h')\}^{1/2} = \bar{\var}(w^\T \hat{\bY}^*_{\Gamma}),
    \end{eqnarray*}
    where we applied the Cauchy--Schwarz inequality. Therefore, we can estimate the above upper bound with:
    \begin{eqnarray*}
        \hat{\var}(w^\T \hat{\bY}^*_{\Gamma}) = \sum_{a,s,h;a',s',h'} |w(a,s,h)| |w(a',s',h')|\cdot  \hat{\var}\{\hat{Y}^*(a,s,h)\}^{1/2} \cdot \hat{\var}\{\hat{Y}^*(a',s',h')\}^{1/2}. 
    \end{eqnarray*}
    And by Step 1 and Step 2, we have the convergence result:
    \begin{eqnarray*}
        \frac{\hat{\var}(w^\T \hat{\bY}^*_{\Gamma})}{\bar{\var}(w^\T \hat{\bY}^*_{\Gamma})} = 1 + o_p(1). 
    \end{eqnarray*}

\end{proof}

\subsubsection{Proof of Corollary \ref{cor:ci}} 
We now use Theorem \ref{thm:ci} to prove Corollary \ref{cor:ci}. 
\begin{enumerate}
    \item  For Part 1, using Theorem \ref{thm:ci}, 
\begin{eqnarray*}
    \frac{n\cdot [\hat{\textup{se}}^2\{\hat{Y}^{*}(a,s,h;\Gamma)\} - \avar\{\hat{Y}^{*}(a,s,h;\Gamma)\}]}{n\cdot \avar\{\hat{Y}^{*}(a,s,h;\Gamma)\}} = O_p(n^{-1/2}), \text{ for } *\in\{\horthom, \haj\}.
\end{eqnarray*}
Also using our condition that $n\cdot \avar\{\hat{Y}^{*}(a,s,h;\Gamma)\} = \Theta(1)$, we conclude the convergence of the variance estimator. Now applying Slutsky's Theorem with the asymptotic normality results in Theorem \ref{thm:asymptotic}, we can prove the asymptotic validity of the confidence interval. 

   \item For Part 2, we can follow a similar proof logic as Part 1. The only difference is that we would apply the conservative variance estimation conclusion from Theorem \ref{thm:ci} with the asymptotic normality results in Theorem \ref{thm:asymptotic}, which also establishes valid asymptotic coverage. 
\end{enumerate} 


\section{Additional empirical results}
\label{sec::additional_empirical_results}

In this section, we report additional empirical results. Subsection \ref{sec::robustness} reports several numerical studies to check the robustness of our conclusion to the choice of geographical closeness in defining exposure mappings measured by distance and number of nearby villages. Subsection \ref{sec::cutoff} examines the sensitivity of our results to the choice of the cutoff treated proportion used to dichotomize the exposure
variables. Subsection \ref{sec::no_H} investigates the role of between-sublocation interference and highlights the importance of considering it besides the within-sublocation interference. 

\subsection{Robustness to distance and number of nearby villages}\label{sec::robustness}

In this section, we examine the robustness of our empirical results to the choice of geographical closeness used in constructing the between-sublocation exposure variable $H_i$. Our baseline specification uses a distance threshold of 4 km (strict) with $k=3$ nearest villages outside the sublocation. We consider two additional specifications: (i) distance $\leq 6$ km with $k=5$ nearest villages, and (ii) distance $\leq 6$ km with $k=8$ nearest villages. In all cases, we use the threshold of $1/2$ for defining $S_i$ and $H_i$. These alternative specifications should be interpreted as different coarse summaries of the underlying interference process. If the true mechanism is not perfectly captured by a specific threshold, each specification defines a valid estimand under that exposure mapping, and robustness is assessed by whether the substantive conclusions are stable across plausible mappings.

Table~\ref{tab:conditional_robustness} reports the H\'{a}jek estimates of conditional direct and indirect effects across the three specifications. Several patterns emerge. First, the conditional direct effect at $(S_i, H_i)=(1,0)$ is large, positive, and highly significant for profits and revenues across all three specifications, with point estimates ranging from 2{,}971 to 3{,}706 KES for profits and 3{,}478 to 4{,}401 KES for revenues. This finding is robust to the choice of nearby village definition, providing strong evidence that treated villages in high within-sublocation saturation environments with low geographic exposure benefit substantially from the cash transfer.

Second, the between-sublocation indirect effects in Panel C show consistent patterns. Among treated villages in high saturation sublocations $(A_i, S_i)=(1,1)$, the effect of increasing geographic exposure is strongly negative across all three specifications, with profit effects ranging from $-2{,}731$ to $-3{,}391$ KES, all significant at the 1\% level. Among control villages in low saturation sublocations $(A_i, S_i)=(0,0)$, the between-sublocation effects on costs and wages are positive and significant across all three specifications. The revenue effect is also positive and significant under two of the three specifications, while the profit effect is significant only under the baseline specification.

Table~\ref{tab:marginal_robustness} reports the in-policy marginal effects. Consistent with the main analysis, none of the marginal effects are statistically significant under any specification, reflecting the heterogeneity in conditional effects that is masked by marginalization.

\begin{table}[ht!]
\centering
\caption{Robustness of conditional direct and indirect effects (H\'{a}jek estimator)}
\label{tab:conditional_robustness}
\footnotesize
\setlength{\tabcolsep}{2.8pt}
\renewcommand{\arraystretch}{1.05}

\begin{tabular}{@{}l*{4}{C{\colWw}}*{8}{C{\colW}}@{}}
\toprule
\multicolumn{13}{c}{\textbf{Panel A. Conditional direct effects:} $\hat{\de}^{\haj}(s,h)$} \\
\midrule
& \multicolumn{4}{c}{dist$<$4, $k$=3} & \multicolumn{4}{c}{dist$\leq$6, $k$=5} & \multicolumn{4}{c}{dist$\leq$6, $k$=8} \\
\cmidrule(lr){2-5} \cmidrule(l){6-9} \cmidrule(l){10-13}
$(s,h)$ & Profit & Revenue & Costs & Wage & Profit & Revenue & Costs & Wage & Profit & Revenue & Costs & Wage \\
\midrule
$(0,0)$ & $-360$ & $-46$ & $200$ & $163$ & $92$ & $504$ & $361^{**}$ & $309^{**}$ & $-173$ & $622$ & $406^{***}$ & $344^{***}$ \\
        & $(909)$ & $(1{,}143)$ & $(135)$ & $(116)$ & $(925)$ & $(1{,}148)$ & $(160)$ & $(140)$ & $(712)$ & $(863)$ & $(127)$ & $(111)$ \\
\addlinespace
$(0,1)$ & $-558$ & $300$ & $329^{**}$ & $268^{*}$ & $201$ & $738$ & $174$ & $130$ & $-384$ & $-102$ & $230$ & $196$ \\
        & $(616)$ & $(877)$ & $(168)$ & $(145)$ & $(527)$ & $(1{,}045)$ & $(246)$ & $(203)$ & $(883)$ & $(1{,}475)$ & $(348)$ & $(297)$ \\
\addlinespace
$(1,0)$ & $3{,}539^{***}$ & $4{,}159^{***}$ & $87$ & $23$ & $3{,}706^{***}$ & $4{,}401^{***}$ & $88$ & $18$ & $2{,}971^{***}$ & $3{,}478^{***}$ & $-7$ & $-53$ \\
        & $(1{,}263)$ & $(1{,}247)$ & $(169)$ & $(159)$ & $(717)$ & $(797)$ & $(179)$ & $(156)$ & $(694)$ & $(844)$ & $(173)$ & $(141)$ \\
\addlinespace
$(1,1)$ & $-227$ & $140$ & $115$ & $73$ & $112$ & $89$ & $-169$ & $-175$ & $-5$ & $374$ & $-28$ & $-81$ \\
        & $(688)$ & $(825)$ & $(158)$ & $(134)$ & $(954)$ & $(1{,}154)$ & $(240)$ & $(208)$ & $(1{,}279)$ & $(1{,}283)$ & $(363)$ & $(342)$ \\
\addlinespace
\midrule
\multicolumn{13}{c}{\textbf{Panel B. Conditional within-sublocation indirect effects}: $\hat\wie^{\haj}(a,h)$} \\
\midrule
& \multicolumn{4}{c}{dist$<$4, $k$=3} & \multicolumn{4}{c}{dist$\leq$6, $k$=5} & \multicolumn{4}{c}{dist$\leq$6, $k$=8} \\
\cmidrule(lr){2-5} \cmidrule(l){6-9} \cmidrule(l){10-13}
$(a,h)$ & Profit & Revenue & Costs & Wage & Profit & Revenue & Costs & Wage & Profit & Revenue & Costs & Wage \\
\midrule
$(0,0)$ & $-182$ & $-304$ & $132$ & $134$ & $-626$ & $-1{,}081$ & $117$ & $135$ & $-366$ & $-190$ & $249$ & $230$ \\
        & $(837)$ & $(961)$ & $(170)$ & $(148)$ & $(558)$ & $(751)$ & $(180)$ & $(156)$ & $(507)$ & $(675)$ & $(164)$ & $(143)$ \\
\addlinespace
$(0,1)$ & $-1{,}136^{*}$ & $-1{,}579^{*}$ & $-92$ & $-67$ & $-503$ & $-793$ & $35$ & $51$ & $-969$ & $-1{,}900$ & $-80$ & $-24$ \\
        & $(671)$ & $(813)$ & $(138)$ & $(121)$ & $(953)$ & $(1{,}139)$ & $(219)$ & $(191)$ & $(1{,}376)$ & $(1{,}283)$ & $(351)$ & $(337)$ \\
\addlinespace
$(1,0)$ & $3{,}718^{***}$ & $3{,}902^{***}$ & $18$ & $-6$ & $2{,}988^{***}$ & $2{,}816^{**}$ & $-156$ & $-155$ & $2{,}779^{***}$ & $2{,}667^{***}$ & $-164$ & $-167$ \\
        & $(1{,}335)$ & $(1{,}429)$ & $(134)$ & $(127)$ & $(1{,}084)$ & $(1{,}194)$ & $(159)$ & $(140)$ & $(899)$ & $(1{,}032)$ & $(136)$ & $(108)$ \\
\addlinespace
$(1,1)$ & $-805$ & $-1{,}738^{*}$ & $-306$ & $-262^{*}$ & $-593$ & $-1{,}443$ & $-308$ & $-254$ & $-591$ & $-1{,}424$ & $-338$ & $-301$ \\
        & $(633)$ & $(889)$ & $(188)$ & $(158)$ & $(528)$ & $(1{,}059)$ & $(267)$ & $(220)$ & $(786)$ & $(1{,}476)$ & $(359)$ & $(302)$ \\
\addlinespace
\midrule
\multicolumn{13}{c}{\textbf{Panel C. Conditional between-sublocation indirect effects}: $\hat\bie^{\haj}(a,s)$} \\
\midrule
& \multicolumn{4}{c}{dist$<$4, $k$=3} & \multicolumn{4}{c}{dist$\leq$6, $k$=5} & \multicolumn{4}{c}{dist$\leq$6, $k$=8} \\
\cmidrule(lr){2-5} \cmidrule(l){6-9} \cmidrule(l){10-13}
$(a,s)$ & Profit & Revenue & Costs & Wage & Profit & Revenue & Costs & Wage & Profit & Revenue & Costs & Wage \\
\midrule
$(0,0)$ & $1{,}330^{**}$ & $1{,}950^{***}$ & $221^{**}$ & $176^{**}$ & $486$ & $900$ & $285^{***}$ & $241^{***}$ & $849$ & $2{,}024^{***}$ & $438^{***}$ & $372^{***}$ \\
        & $(554)$ & $(645)$ & $(93)$ & $(83)$ & $(525)$ & $(677)$ & $(97)$ & $(81)$ & $(564)$ & $(591)$ & $(90)$ & $(80)$ \\
\addlinespace
$(0,1)$ & $375$ & $674$ & $-2$ & $-25$ & $609$ & $1{,}187$ & $202$ & $157$ & $245$ & $314$ & $109$ & $118$ \\
        & $(954)$ & $(1{,}128)$ & $(214)$ & $(186)$ & $(986)$ & $(1{,}214)$ & $(302)$ & $(266)$ & $(1{,}320)$ & $(1{,}367)$ & $(426)$ & $(400)$ \\
\addlinespace
$(1,0)$ & $1{,}132$ & $2{,}296^{*}$ & $350^{*}$ & $282$ & $595$ & $1{,}134$ & $98$ & $62$ & $639$ & $1{,}301$ & $262$ & $225$ \\
        & $(971)$ & $(1{,}374)$ & $(209)$ & $(178)$ & $(926)$ & $(1{,}516)$ & $(309)$ & $(261)$ & $(1{,}031)$ & $(1{,}747)$ & $(385)$ & $(327)$ \\
\addlinespace
$(1,1)$ & $-3{,}391^{***}$ & $-3{,}344^{***}$ & $26$ & $25$ & $-2{,}986^{***}$ & $-3{,}125^{***}$ & $-54$ & $-37$ & $-2{,}731^{***}$ & $-2{,}790^{***}$ & $88$ & $91$ \\
        & $(997)$ & $(944)$ & $(112)$ & $(106)$ & $(685)$ & $(737)$ & $(117)$ & $(99)$ & $(654)$ & $(760)$ & $(111)$ & $(83)$ \\
\addlinespace
\bottomrule
\end{tabular}

\begin{minipage}{0.95\textwidth}
\vspace{2mm}
\footnotesize
\textit{Notes:} Each panel reports conditional causal effects estimated using the H\'{a}jek estimator under three specifications of geographical closeness: (i) distance $<$ 4 km with $k=3$ nearest villages, (ii) distance $\leq$ 6 km with $k=5$ nearest villages, and (iii) distance $\leq$ 6 km with $k=8$ nearest villages.
Panel A reports conditional direct effects comparing treated ($A_i=1$) versus control ($A_i=0$) villages for each $(s,h)$.
Panel B reports conditional within-sublocation indirect effects comparing high ($S_i=1$) versus low ($S_i=0$) saturation levels for each $(a,h)$.  
Panel C reports between-sublocation indirect effects comparing high ($H_i=1$) versus low ($H_i=0$) exposure levels for each $(a,s)$.  
All monetary values are in Kenyan Shillings (KES) per enterprise per month.  
Point estimates are reported in the first line, with conservative standard error estimators in parentheses below.  
Significance levels: $^{*}p<0.10$, $^{**}p<0.05$, $^{***}p<0.01$.
\end{minipage}
\end{table}

\begin{table}[ht!]
\centering
\caption{Robustness of in-policy marginal direct and indirect effects (H\'{a}jek estimator)}
\label{tab:marginal_robustness}
\footnotesize
\begin{tabular}{cccccccccccc}
\toprule
\multicolumn{4}{c}{dist$<$4, $k$=3} & \multicolumn{4}{c}{dist$\leq$6, $k$=5} & \multicolumn{4}{c}{dist$\leq$6, $k$=8} \\
\cmidrule(lr){1-4} \cmidrule(lr){5-8} \cmidrule(lr){9-12}
Profit & Revenue & Costs & Wage & Profit & Revenue & Costs & Wage & Profit & Revenue & Costs & Wage \\
\midrule
\addlinespace
\multicolumn{12}{l}{\textbf{Marginal direct effect}: $\hat{\de}^{*}$} \\
\addlinespace
 $336$ & $811$ & $164$ & $121$ & $467$ & $962$ & $161$ & $117$ & $383$ & $879$ & $171$ & $126$ \\
 $(1{,}052)$ & $(1{,}242)$ & $(232)$ & $(205)$ & $(969)$ & $(1{,}207)$ & $(244)$ & $(213)$ & $(955)$ & $(1{,}198)$ & $(233)$ & $(201)$ \\
\addlinespace
\multicolumn{12}{l}{\textbf{Within-sublocation indirect effect}: $\hat{\wie}^{*}(0)$} \\
\addlinespace
 $-763$ & $-1{,}146$ & $53$ & $74$ & $-761$ & $-1{,}151$ & $45$ & $68$ & $-607$ & $-944$ & $59$ & $77$ \\
 $(710)$ & $(855)$ & $(168)$ & $(149)$ & $(664)$ & $(831)$ & $(180)$ & $(159)$ & $(645)$ & $(795)$ & $(181)$ & $(160)$ \\
\addlinespace
\multicolumn{12}{l}{\textbf{Between-sublocation indirect effect}: $\hat{\bie}^{*}(0)$} \\
\addlinespace
 $358$ & $601$ & $75$ & $61$ & $691$ & $1{,}111$ & $227$ & $190$ & $919$ & $1{,}464$ & $201$ & $181$ \\
 $(1{,}138)$ & $(1{,}321)$ & $(217)$ & $(190)$ & $(978)$ & $(1{,}259)$ & $(247)$ & $(210)$ & $(1{,}153)$ & $(1{,}299)$ & $(259)$ & $(233)$ \\
\addlinespace
\bottomrule
\end{tabular}
\begin{minipage}{0.95\textwidth}
\vspace{2mm}
\footnotesize
\textit{Notes:} This table reports in-policy marginal direct and indirect effects estimated using the H\'{a}jek estimator under three specifications of geographic closeness: (i) distance $<$ 4 km with $k=3$ nearest villages, (ii) distance $\leq$ 6 km with $k=5$ nearest villages, and (iii) distance $\leq$ 6 km with $k=8$ nearest villages. 
All monetary values are in Kenyan Shillings (KES) per enterprise per month. 
Point estimates are reported in the first line, with conservative standard error estimators in parentheses below. 
Significance levels: $^{*}p<0.10$, $^{**}p<0.05$, $^{***}p<0.01$.

\end{minipage}
\end{table}

\subsection{Different cutoffs for defining the exposure mappings}\label{sec::cutoff}

We next examine the sensitivity of our results to the choice of the cutoff value $c$ used to dichotomize the exposure variables. Recall that the baseline specification uses $c = 1/2$, so that $S_i = \indicator\{\text{fraction treated in sublocation} > 1/2\}$ and $H_i = \indicator\{\text{fraction treated among neighbors not in the same sublocation} > 1/2\}$. We consider two alternative cutoffs: $c = 1/3$ (a lower threshold, so that $(S_i,H_i)=(1,1)$ are more common) and $c = 2/3$ (a higher threshold, so that $(S_i,H_i)=(1,1)$ are rarer). The network construction criterion is held fixed at a distance $< 4$ km with $k = 3$ nearest villages throughout.

Table~\ref{tab:conditional_cutoff} reports the conditional effects across the three cutoff values. In Panel A, the conditional direct effect at $(S_i,H_i)=(1,0)$ is positive and significant for profits and revenues under both $c = 1/3$ (2{,}013 and 2{,}634 KES) and $c = 1/2$ ($3{,}539$ and $4{,}159$ KES), but becomes small and insignificant at $c = 2/3$. This pattern is intuitive: with a higher cutoff, $S_i = 1$ requires a very large fraction of treated villages in the sublocation, and the resulting exposure cell becomes sparsely populated, reducing power. At the same time, $\de(1,1)$ becomes significantly negative at $c = 2/3$ ($-1{,}587$ for profits, $-1{,}977$ for revenues), suggesting that when both within-sublocation and geographic exposure are very high, treated enterprises may face competitive pressure.

In Panel B, the within-sublocation indirect effects show consistent patterns across cutoff values. At $(A_i, H_i)=(1,0)$, the effect of increased sublocation saturation is positive for profits and revenues at $c = 1/3$ and $c = 1/2$, but reverses at $c = 2/3$. Notably, the effects at $(A_i, H_i)=(1,1)$ become strongly negative and significant at $c = 2/3$, with profits declining by 1{,}935 KES and revenues by 4{,}798 KES.

In Panel C, the between-sublocation indirect effects at $(A_i,S_i)=(1,1)$ are consistently negative across all cutoffs, with the strongest effects at the baseline $c = 1/2$ ($-3{,}391$ for profits). The positive effects on costs at $(A_i,S_i)=(0,0)$ are present under all three cutoffs.

Table~\ref{tab:marginal_cutoff} reports the marginal effects. None of the marginal effects are statistically significant at any cutoff, consistent with the main analysis.

\begin{table}[ht!]
\centering
\caption{Robustness of conditional direct and indirect effects to cutoff choice (H\'{a}jek estimator)}
\label{tab:conditional_cutoff}
\footnotesize
\setlength{\tabcolsep}{2.8pt}
\renewcommand{\arraystretch}{1.05}

\begin{tabular}{@{}l*{4}{C{\colWw}}*{8}{C{\colW}}@{}}
\toprule
\multicolumn{13}{c}{\textbf{Panel A. Conditional direct effects:} $\hat{\de}^{\haj}(s,h)$} \\
\midrule
& \multicolumn{4}{c}{$c = 1/3$} & \multicolumn{4}{c}{$c = 1/2$} & \multicolumn{4}{c}{$c = 2/3$} \\
\cmidrule(lr){2-5} \cmidrule(l){6-9} \cmidrule(l){10-13}
$(s,h)$ & Profit & Revenue & Costs & Wage & Profit & Revenue & Costs & Wage & Profit & Revenue & Costs & Wage \\
\midrule
$(0,0)$ & $-1{,}086$ & $-1{,}028$ & $-15$ & $-39$ & $-360$ & $-46$ & $200$ & $163$ & $511$ & $983$ & $170$ & $129$ \\
        & $(746)$ & $(944)$ & $(144)$ & $(128)$ & $(909)$ & $(1{,}143)$ & $(135)$ & $(116)$ & $(646)$ & $(756)$ & $(146)$ & $(132)$ \\
\addlinespace
$(0,1)$ & $572$ & $1{,}159$ & $275$ & $215$ & $-558$ & $300$ & $329^{**}$ & $268^{*}$ & $43$ & $1{,}774$ & $469$ & $277$ \\
        & $(601)$ & $(887)$ & $(181)$ & $(157)$ & $(616)$ & $(877)$ & $(168)$ & $(145)$ & $(1{,}080)$ & $(1{,}824)$ & $(593)$ & $(482)$ \\
\addlinespace
$(1,0)$ & $2{,}013^{*}$ & $2{,}634^{**}$ & $238$ & $183$ & $3{,}539^{***}$ & $4{,}159^{***}$ & $87$ & $23$ & $-582$ & $-575$ & $-107$ & $-144$ \\
        & $(1{,}136)$ & $(1{,}228)$ & $(194)$ & $(177)$ & $(1{,}263)$ & $(1{,}247)$ & $(169)$ & $(159)$ & $(547)$ & $(721)$ & $(198)$ & $(178)$ \\
\addlinespace
$(1,1)$ & $-798$ & $-81$ & $233$ & $186$ & $-227$ & $140$ & $115$ & $73$ & $-1{,}587^{**}$ & $-1{,}977^{**}$ & $-265$ & $-242$ \\
        & $(773)$ & $(972)$ & $(155)$ & $(138)$ & $(688)$ & $(825)$ & $(158)$ & $(134)$ & $(807)$ & $(865)$ & $(177)$ & $(164)$ \\
\addlinespace
\midrule
\multicolumn{13}{c}{\textbf{Panel B. Conditional within-sublocation indirect effects}: $\hat\wie^{\haj}(a,h)$} \\
\midrule
& \multicolumn{4}{c}{$c = 1/3$} & \multicolumn{4}{c}{$c = 1/2$} & \multicolumn{4}{c}{$c = 2/3$} \\
\cmidrule(lr){2-5} \cmidrule(l){6-9} \cmidrule(l){10-13}
$(a,h)$ & Profit & Revenue & Costs & Wage & Profit & Revenue & Costs & Wage & Profit & Revenue & Costs & Wage \\
\midrule
$(0,0)$ & $-613$ & $-711$ & $-65$ & $-67$ & $-182$ & $-304$ & $132$ & $134$ & $-563$ & $-462$ & $204$ & $202$ \\
        & $(839)$ & $(970)$ & $(142)$ & $(125)$ & $(837)$ & $(961)$ & $(170)$ & $(148)$ & $(632)$ & $(829)$ & $(246)$ & $(223)$ \\
\addlinespace
$(0,1)$ & $439$ & $92$ & $-63$ & $-60$ & $-1{,}136^{*}$ & $-1{,}579^{*}$ & $-92$ & $-67$ & $-305$ & $-1{,}047$ & $-58$ & $-63$ \\
        & $(709)$ & $(805)$ & $(135)$ & $(122)$ & $(671)$ & $(813)$ & $(138)$ & $(121)$ & $(1{,}179)$ & $(1{,}297)$ & $(363)$ & $(333)$ \\
\addlinespace
$(1,0)$ & $2{,}486^{**}$ & $2{,}951^{**}$ & $188$ & $156$ & $3{,}718^{***}$ & $3{,}902^{***}$ & $18$ & $-6$ & $-1{,}655^{***}$ & $-2{,}020^{***}$ & $-73$ & $-71$ \\
        & $(1{,}043)$ & $(1{,}202)$ & $(196)$ & $(179)$ & $(1{,}335)$ & $(1{,}429)$ & $(134)$ & $(127)$ & $(561)$ & $(648)$ & $(98)$ & $(87)$ \\
\addlinespace
$(1,1)$ & $-931$ & $-1{,}148$ & $-104$ & $-89$ & $-805$ & $-1{,}738^{*}$ & $-306$ & $-262^{*}$ & $-1{,}935^{***}$ & $-4{,}798^{***}$ & $-792^{*}$ & $-582^{*}$ \\
        & $(665)$ & $(1{,}053)$ & $(201)$ & $(174)$ & $(633)$ & $(889)$ & $(188)$ & $(158)$ & $(709)$ & $(1{,}393)$ & $(408)$ & $(313)$ \\
\addlinespace
\midrule
\multicolumn{13}{c}{\textbf{Panel C. Conditional between-sublocation indirect effects}: $\hat\bie^{\haj}(a,s)$} \\
\midrule
& \multicolumn{4}{c}{$c = 1/3$} & \multicolumn{4}{c}{$c = 1/2$} & \multicolumn{4}{c}{$c = 2/3$} \\
\cmidrule(lr){2-5} \cmidrule(l){6-9} \cmidrule(l){10-13}
$(a,s)$ & Profit & Revenue & Costs & Wage & Profit & Revenue & Costs & Wage & Profit & Revenue & Costs & Wage \\
\midrule
$(0,0)$ & $-29$ & $643$ & $166^{**}$ & $126^{*}$ & $1{,}330^{**}$ & $1{,}950^{***}$ & $221^{**}$ & $176^{**}$ & $-543$ & $181$ & $395$ & $392^{*}$ \\
        & $(389)$ & $(499)$ & $(80)$ & $(68)$ & $(554)$ & $(645)$ & $(93)$ & $(83)$ & $(567)$ & $(677)$ & $(246)$ & $(223)$ \\
\addlinespace
$(0,1)$ & $1{,}023$ & $1{,}446$ & $168$ & $134$ & $375$ & $674$ & $-2$ & $-25$ & $-286$ & $-403$ & $134$ & $128$ \\
        & $(1{,}159)$ & $(1{,}276)$ & $(197)$ & $(179)$ & $(954)$ & $(1{,}128)$ & $(214)$ & $(186)$ & $(1{,}243)$ & $(1{,}449)$ & $(363)$ & $(332)$ \\
\addlinespace
$(1,0)$ & $1{,}629^{*}$ & $2{,}830^{**}$ & $456^{*}$ & $381^{*}$ & $1{,}132$ & $2{,}296^{*}$ & $350^{*}$ & $282$ & $-1{,}011$ & $973$ & $695$ & $541$ \\
        & $(958)$ & $(1{,}331)$ & $(245)$ & $(217)$ & $(971)$ & $(1{,}374)$ & $(209)$ & $(178)$ & $(1{,}159)$ & $(1{,}903)$ & $(494)$ & $(391)$ \\
\addlinespace
$(1,1)$ & $-1{,}788^{**}$ & $-1{,}268$ & $164$ & $136$ & $-3{,}391^{***}$ & $-3{,}344^{***}$ & $26$ & $25$ & $-1{,}291^{***}$ & $-1{,}805^{***}$ & $-24^{*}$ & $30^{***}$ \\
        & $(750)$ & $(924)$ & $(152)$ & $(136)$ & $(997)$ & $(944)$ & $(112)$ & $(106)$ & $(111)$ & $(137)$ & $(12)$ & $(9)$ \\
\addlinespace
\bottomrule
\end{tabular}

\begin{minipage}{0.95\textwidth}
\vspace{2mm}
\footnotesize
\textit{Notes:} Each panel reports conditional causal effects estimated using the H\'{a}jek estimator under three cutoff values $c \in \{1/3, 1/2, 2/3\}$ for defining the binary exposure variables $S_i = \indicator\{\text{fraction treated in sublocation} > c\}$ and $H_i = \indicator\{\text{fraction treated among neighbors not in the same sublocation} > c\}$. The network construction criterion is held fixed at a distance $<$ 4 km with $k=3$ nearest villages.
Panel A reports conditional direct effects comparing treated ($A_i=1$) versus control ($A_i=0$) villages for each $(s,h)$.
Panel B reports conditional within-sublocation indirect effects comparing high ($S_i=1$) versus low ($S_i=0$) saturation levels for each $(a,h)$.  
Panel C reports between-sublocation indirect effects comparing high ($H_i=1$) versus low ($H_i=0$) exposure levels for each $(a,s)$.  
All monetary values are in Kenyan Shillings (KES) per enterprise per month.  
Point estimates are reported in the first line, with conservative standard error estimators in parentheses below.  
Significance levels: $^{*}p<0.10$, $^{**}p<0.05$, $^{***}p<0.01$.
\end{minipage}
\end{table}

\begin{table}[ht!]
\centering
\caption{Robustness of in-policy marginal effects to cutoff choice (H\'{a}jek estimator)}
\label{tab:marginal_cutoff}
\footnotesize
\begin{tabular}{cccccccccccc}
\toprule
\multicolumn{4}{c}{$c = 1/3$} & \multicolumn{4}{c}{$c = 1/2$} & \multicolumn{4}{c}{$c = 2/3$} \\
\cmidrule(lr){1-4} \cmidrule(lr){5-8} \cmidrule(lr){9-12}
Profit & Revenue & Costs & Wage & Profit & Revenue & Costs & Wage & Profit & Revenue & Costs & Wage \\
\midrule
\addlinespace
\multicolumn{12}{l}{\textbf{Marginal direct effect}: $\hat{\de}^{*}$} \\
\addlinespace
 $495$ & $996$ & $168$ & $119$ & $336$ & $811$ & $164$ & $121$ & $61$ & $483$ & $164$ & $119$ \\
 $(1{,}018)$ & $(1{,}191)$ & $(229)$ & $(203)$ & $(1{,}052)$ & $(1{,}242)$ & $(232)$ & $(205)$ & $(870)$ & $(1{,}050)$ & $(199)$ & $(174)$ \\
\addlinespace
\multicolumn{12}{l}{\textbf{Within-sublocation indirect effect}: $\hat{\wie}^{*}(0)$} \\
\addlinespace
 $-158$ & $-358$ & $42$ & $32$ & $-763$ & $-1{,}146$ & $53$ & $74$ & $-511$ & $-476$ & $213$ & $206$ \\
 $(792)$ & $(882)$ & $(141)$ & $(126)$ & $(710)$ & $(855)$ & $(168)$ & $(149)$ & $(845)$ & $(1{,}043)$ & $(282)$ & $(256)$ \\
\addlinespace
\multicolumn{12}{l}{\textbf{Between-sublocation indirect effect}: $\hat{\bie}^{*}(0)$} \\
\addlinespace
 $92$ & $353$ & $82$ & $67$ & $358$ & $601$ & $75$ & $61$ & $-500$ & $-479$ & $191$ & $215$ \\
 $(1{,}030)$ & $(1{,}196)$ & $(197)$ & $(172)$ & $(1{,}138)$ & $(1{,}321)$ & $(217)$ & $(190)$ & $(1{,}089)$ & $(1{,}243)$ & $(270)$ & $(243)$ \\
\addlinespace
\bottomrule
\end{tabular}
\begin{minipage}{0.95\textwidth}
\vspace{2mm}
\footnotesize
\textit{Notes:} This table reports in-policy marginal direct and indirect effects estimated using the H\'{a}jek estimator under three cutoff values $c \in \{1/3, 1/2, 2/3\}$ for defining the binary exposure variables $S_i = \indicator\{\text{fraction treated in sublocation} > c\}$ and $H_i = \indicator\{\text{fraction treated among neighbors not in the same sublocation} > c\}$. The network construction criterion is held fixed at a distance $<$ 4 km with $k=3$ nearest villages.
All monetary values are in Kenyan Shillings (KES) per enterprise per month. 
Point estimates are reported in the first line, with conservative standard error estimators in parentheses below. 
Significance levels: $^{*}p<0.10$, $^{**}p<0.05$, $^{***}p<0.01$.
\end{minipage}
\end{table}

\subsection{Analysis ignoring between-sublocation interference}\label{sec::no_H}

To assess the role of between-sublocation interference in shaping our findings, we repeat the analysis using a reduced exposure mapping $(A_i, S_i)$ that ignores the geographic spillover variable $H_i$. Under this specification, each village is characterized by two binary indicators: its own treatment status $A_i$ and the within-sublocation spillover $S_i$, yielding four exposure cells instead of eight. The propensity scores $\pscore{i}(a,s)$ are estimated via the same Monte Carlo simulation of the experimental design, and the Horvitz--Thompson, H\'{a}jek, and covariate-adjusted estimators are constructed exactly as before but without the $H_i$ dimension. This exercise serves two purposes: it provides a benchmark that reflects what would be estimated under the standard two-stage randomized saturation design framework that ignores geographic proximity, and it helps isolate the extent to which accounting for between-sublocation interference alters inference about direct and within-sublocation indirect effects.

\paragraph{Conditional effects.}
Table~\ref{tab:conditional_effects_noH} reports conditional direct and within-sublocation indirect effects from the reduced $(A_i, S_i)$ model. 
The conditional direct effects exhibit a pattern qualitatively consistent with the full $(A_i,S_i,H_i)$ analysis: treatment effects on profits and revenue are not significant when the within-sublocation spillover is low ($S_i=0$), but are large and highly significant when $S_i=1$. Using the H\'{a}jek estimator, the direct effect on profits conditional on $S_i=1$ is $1{,}925$ KES ($p < 0.01$), and on revenue is $2{,}427$ KES ($p < 0.01$). The conditional indirect effects also show strong heterogeneity: among treated villages ($A_i=1$), the indirect effect of moving from $S_i=0$ to $S_i=1$ is large and positive for profits ($1{,}648$ KES, $p < 0.01$) and revenue ($1{,}421$ KES, $p < 0.05$), while among control villages ($A_i=0$) the indirect effects are negative but imprecisely estimated.

\paragraph{Comparison with the full $(A_i,S_i,H_i)$ model.}
While the qualitative direction of effects is preserved, there are notable differences relative to Table~\ref{tab:conditional_effects}. In the full model, the conditional direct effects are reported separately for each $(s,h)$ pair, revealing considerable heterogeneity: for example, the H\'{a}jek estimate of the direct effect is $3{,}539$ KES when $(S_i,H_i) = (1,0)$ but $-227$ KES when $(S_i,H_i) = (1,1)$. In the reduced model, these are averaged together into a single estimate for $S_i=1$, yielding $1{,}925$ KES, an intermediate value that masks the opposing effects of $H_i$. This averaging is analogous to the ``marginalizing over $H_i$'' operation, except that the reduced model cannot separately identify the $H$-specific components. The fact that the reduced model estimates are closer to the full-model marginal effects (Table~\ref{tab:marginal_effects}) than to any single conditional effect confirms that ignoring between-sublocation interference implicitly marginalizes over unmodeled geographic spillovers.

A second important difference concerns the between-sublocation indirect effects, which are large and significant in the full model (Table~\ref{tab:conditional_effects}, Panel~C) but are entirely absent from the reduced specification. In particular, the full model reveals that control villages in low saturation sublocations experience significant positive spillovers from geographic proximity to treated villages ($H$-effect of $1{,}330$ KES on profits, $p < 0.05$), while treated villages in high saturation sublocations experience large negative between-sublocation spillovers ($-3{,}391$ KES, $p < 0.01$). These effects, which are economically and statistically meaningful, are invisible under the reduced $(A_i,S_i)$ model and are instead absorbed into the error term, potentially biasing the within-sublocation estimates through omitted interference.

\begin{table}[ht!]
\centering
\caption{Conditional direct and indirect effects under the $(A_i,S_i)$ exposure mapping}
\label{tab:conditional_effects_noH}
\footnotesize
\setlength{\tabcolsep}{2.6pt} 
\renewcommand{\arraystretch}{1.05}

\begin{tabular}{@{}l*{4}{C{\colWw}}*{8}{C{\colW}}@{}}
\toprule
\multicolumn{13}{c}{\textbf{Panel A. Conditional direct effects:} $\hat\mu^{*}(1,s) - \hat\mu^{*}(0,s)$} \\
\midrule
& \multicolumn{4}{c}{Horvitz--Thompson} & \multicolumn{4}{c}{H\'{a}jek} & \multicolumn{4}{c}{Covariate-adjusted} \\
\cmidrule(lr){2-5} \cmidrule(l){6-9} \cmidrule(l){10-13}
$s$ & Profit & Revenue & Costs & Wage & Profit & Revenue & Costs & Wage & Profit & Revenue & Costs & Wage \\
\midrule
$0$ & $2{,}492^{*}$ & $4{,}382^{**}$ & $643^{***}$ & $493^{***}$ & $-298$ & $147$ & $252^{**}$ & $214^{**}$ & $-245$ & $230$ & $258^{***}$ & $219^{***}$ \\
    & $(1{,}341)$ & $(2{,}093)$ & $(239)$ & $(180)$ & $(428)$ & $(569)$ & $(104)$ & $(90)$ & $(432)$ & $(538)$ & $(94)$ & $(82)$ \\
\addlinespace
$1$ & $-1{,}465$ & $-2{,}620$ & $-435$ & $-351$ & $1{,}925^{***}$ & $2{,}427^{***}$ & $51$ & $-2$ & $1{,}534^{**}$ & $2{,}006^{***}$ & $12$ & $-37$ \\
    & $(2{,}079)$ & $(2{,}899)$ & $(312)$ & $(243)$ & $(641)$ & $(673)$ & $(132)$ & $(115)$ & $(650)$ & $(627)$ & $(131)$ & $(113)$ \\
\addlinespace
\midrule
\multicolumn{13}{c}{\textbf{Panel B. Conditional within-sublocation indirect effects}: $\hat\mu^{*}(a,1) - \hat\mu^{*}(a,0)$} \\
\midrule
& \multicolumn{4}{c}{Horvitz--Thompson} & \multicolumn{4}{c}{H\'{a}jek} & \multicolumn{4}{c}{Covariate-adjusted} \\
\cmidrule(lr){2-5} \cmidrule(l){6-9} \cmidrule(l){10-13}
$a$ & Profit & Revenue & Costs & Wage & Profit & Revenue & Costs & Wage & Profit & Revenue & Costs & Wage \\
\midrule
$0$ & $2{,}886^{**}$ & $4{,}514^{**}$ & $612^{***}$ & $478^{***}$ & $-575$ & $-859$ & $82$ & $96$ & $-321$ & $-425$ & $143$ & $144$ \\
    & $(1{,}258)$ & $(1{,}889)$ & $(231)$ & $(181)$ & $(436)$ & $(545)$ & $(122)$ & $(108)$ & $(438)$ & $(537)$ & $(119)$ & $(105)$ \\
\addlinespace
$1$ & $-1{,}071$ & $-2{,}489$ & $-467$ & $-366$ & $1{,}648^{***}$ & $1{,}421^{**}$ & $-119$ & $-120$ & $1{,}458^{**}$ & $1{,}351^{**}$ & $-103$ & $-112$ \\
    & $(2{,}162)$ & $(3{,}104)$ & $(320)$ & $(242)$ & $(632)$ & $(697)$ & $(114)$ & $(97)$ & $(644)$ & $(629)$ & $(106)$ & $(89)$ \\
\addlinespace
\bottomrule
\end{tabular}

\begin{minipage}{0.95\textwidth}
\vspace{2mm}
\footnotesize
\textit{Notes:} 
Each panel reports conditional causal effects estimated using Horvitz--Thompson, H\'{a}jek, and Covariate-adjusted estimators under the reduced $(A_i,S_i)$ exposure mapping that ignores between-sublocation interference.
Panel~A reports conditional direct effects comparing treated ($A_i=1$) versus control ($A_i=0$) villages for each $s$.
Panel~B reports conditional within-sublocation indirect effects comparing high ($S_i=1$) versus low ($S_i=0$) saturation levels for each $a$.
All monetary values are in Kenyan Shillings (KES) per enterprise per month.  
Point estimates are reported in the first line, with conservative standard error estimators in parentheses below.  
The covariate-adjusted estimator includes baseline covariates to improve efficiency. 
Significance levels: $^{*}p<0.10$, $^{**}p<0.05$, $^{***}p<0.01$.
\end{minipage}
\end{table}

\paragraph{Marginal effects.}
Table~\ref{tab:marginal_effects_noH} reports the in-policy marginal effects under the reduced model. The marginal direct effect on profits is $442$ KES (H\'{a}jek), comparable to the $336$ KES obtained in the full $(A_i,S_i,H_i)$ model. For the within-sublocation indirect effects, among treated villages ($A_i=1$), the reduced model yields a large positive effect on profits ($1{,}648$ KES, $p < 0.01$) and revenue ($1{,}421$ KES, $p < 0.05$), while among control villages ($A_i=0$) the indirect effects are negative but not statistically significant. These magnitudes broadly align with those from the full model, confirming that the within-sublocation spillover patterns are robust to the inclusion or exclusion of $H_i$.

The key implication of this comparison is that while the reduced $(A_i,S_i)$ model adequately captures the broad patterns of direct and within-sublocation effects, it misses an economically important dimension of interference. The full $(A_i,S_i,H_i)$ analysis reveals substantial heterogeneity in both direct and indirect effects along the $H_i$ dimension, and identifies significant between-sublocation spillovers that are entirely invisible under the standard framework. This underscores the value of incorporating geographic proximity into the exposure mapping.

\begin{table}[ht!]
\centering
\caption{In-policy marginal effects under the $(A_i,S_i)$ exposure mapping}
\label{tab:marginal_effects_noH}
\footnotesize
\begin{tabular}{cccccccccccc}
\toprule
\multicolumn{4}{c}{Horvitz--Thompson} & \multicolumn{4}{c}{H\'{a}jek} & \multicolumn{4}{c}{Covariate-adjusted} \\
\cmidrule(lr){1-4} \cmidrule(lr){5-8} \cmidrule(lr){9-12}
Profit & Revenue & Costs & Wage & Profit & Revenue & Costs & Wage & Profit & Revenue & Costs & Wage \\
\midrule
\addlinespace
\multicolumn{12}{l}{\textit{Marginal direct effect}} \\
\addlinespace
 $583$ & $1{,}191$ & $192$ & $140$ & $442$ & $948$ & $171$ & $126$ & $368$ & $813$ & $145$ & $105$ \\
 $(1{,}636)$ & $(2{,}471)$ & $(300)$ & $(233)$ & $(642)$ & $(779)$ & $(155)$ & $(135)$ & $(607)$ & $(707)$ & $(147)$ & $(129)$ \\
\addlinespace
\multicolumn{12}{l}{\textit{Within-sublocation indirect effect (control, $A_i=0$)}} \\
\addlinespace
 $2{,}886^{**}$ & $4{,}514^{**}$ & $612^{***}$ & $478^{***}$ & $-575$ & $-859$ & $82$ & $96$ & $-321$ & $-425$ & $143$ & $144$ \\
 $(1{,}258)$ & $(1{,}889)$ & $(231)$ & $(181)$ & $(436)$ & $(545)$ & $(122)$ & $(108)$ & $(438)$ & $(537)$ & $(119)$ & $(105)$ \\
\addlinespace
\multicolumn{12}{l}{\textit{Within-sublocation indirect effect (treated, $A_i=1$)}} \\
\addlinespace
 $-1{,}071$ & $-2{,}489$ & $-467$ & $-366$ & $1{,}648^{***}$ & $1{,}421^{**}$ & $-119$ & $-120$ & $1{,}458^{**}$ & $1{,}351^{**}$ & $-103$ & $-112$ \\
 $(2{,}162)$ & $(3{,}104)$ & $(320)$ & $(242)$ & $(632)$ & $(697)$ & $(114)$ & $(97)$ & $(644)$ & $(629)$ & $(106)$ & $(89)$ \\
\bottomrule
\end{tabular}
\begin{minipage}{0.95\textwidth}
\vspace{2mm}
\footnotesize
\textit{Notes:} This table reports in-policy marginal direct and indirect effects estimates under the reduced $(A_i,S_i)$ exposure mapping that ignores between-sublocation interference. 
The marginal direct effect uses weights $\gamma_i(a,s) = \pr(S_i=s\mid A_i=a)$. 
The within-sublocation indirect effect compares $S_i=1$ to $S_i=0$ for control, separately for control ($A_i=0$) and treated ($A_i=1$) villages, using uniform weights. 
All monetary values are in Kenyan Shillings (KES) per enterprise per month. 
Point estimates are reported in the first line, with conservative standard error estimators in parentheses below. 
The covariate-adjusted estimator includes baseline covariates to improve efficiency. 
Significance levels: $^{*}p<0.10$, $^{**}p<0.05$, $^{***}p<0.01$.
\end{minipage}
\end{table}

\end{document}